%% file: main.tex
\documentclass[a4paper,
			UKenglish]{article}

\usepackage[textwidth = 140mm,
 		     textheight = 222mm,
		     margin = 32mm,
 		     headheight = 3mm,
 		     headsep = 10mm,
		     footskip = 5mm]{geometry}
\usepackage{macros}
\usepackage{mathtools}
\usepackage{amsthm}
\usepackage{stmaryrd,
  amsfonts,
  newtxmath,
		    dsfont,
            	    bbm,
                     caption,
                     tikz-cd, 
                     mathrsfs,
            	    float}
\usepackage{comment}
\usepackage{orcidlink}
\usepackage{hyperref}
\usepackage[nameinlink,capitalize]{cleveref}		%
\usepackage[createShortEnv,commandRef=Cref,conf={no link to proof}]{proof-at-the-end}
\pratendSetGlobal{text proof={Proof of \string\pratendRef{thm:prAtEnd\pratendcountercurrent} on \string\cpageref{thm:prAtEnd\pratendcountercurrent}}}

\usetikzlibrary{automata, positioning, calc}
\usepackage{todonotes}
\newcommand*{\critical}[1]{}
\newcommand{\takeout}[1]{\empty}
\numberwithin{equation}{section}

\theoremstyle{plain}
\newtheorem{theorem}{Theorem}[section]

\newtheorem{proposition}[theorem]{Proposition}

\theoremstyle{definition}
\newtheorem{definition}[theorem]{Definition}
\newtheorem{example}[theorem]{Example}
\newtheorem{remark}[theorem]{Remark}

\renewcommand*{\emph}[1]{\textbf{#1}}

\begin{document}

\title{Central Limits via Dilated Categories}

\author{
Henning Basold\orcidlink{0000-0001-7610-8331}\thanks{LIACS, Leiden University,
				    \href{mailto:h.basold@liacs.leidenuniv.nl}{h.basold@liacs.leidenuniv.nl}} \and
Ois\'in Flynn-Connolly\orcidlink{0009-0000-1693-3356}\thanks{LIACS, Leiden University,
  \href{mailto:flynncoo@tcd.ie}{flynncoo@tcd.ie}} \and
Chase Ford\orcidlink{0000-0003-3892-5917}\thanks{LIACS, Leiden University,
		              \href{mailto:m.c.ford@liacs.leidenuniv.nl}{m.c.ford@liacs.leidenuniv.nl}}  \and
Hao Wang\orcidlink{0000-0002-4933-5181}\thanks{LIACS, Leiden University,
  \href{mailto:h.wang@liacs.leidenuniv.nl}{h.wang@liacs.leidenuniv.nl}}
}

\maketitle

\begin{abstract}
  The Central Limit Theorem (CLT) establishes that sufficiently large sequences of independent and identically distributed random variables converge in probability to a normal distribution
  This makes the CLT a fundamental building block of statistical reasoning and, by extension, in reasoning about computing systems that are based on statistical inference such as probabilistic programing languages, programs with optimisation, and machine learning components.
However, there is no general theory of CLT-like results currently, which forces practitioners to redo proofs without having a good handle on the essential ingredients of CLT-type results.
In this paper, we introduce dilated seminorm-enriched category theory as a unifying framework for central limits, and we establish an abstract central limit theorem within that framework.
We illustrate how a strengthened version of the classical CLT and the law of large numbers can be obtained as instances of our framework.
Moreover, we derive from our framework a novel central limit theorem for symplectic manifolds, the CLT for observables, which finds applications in statistical mechanics.
\end{abstract}

\section{Introduction}
\label{sec:intro}

The Central Limit Theorem (CLT) is a fundamental result of classical probability theory, which guarantees that the normalisation of a sufficiently large sequence of independent and identically distributed random variables with finite first and second moments converges in probability to a normal distribution. Starting with Giry~\cite{Giry82}, researchers have become increasingly interested in compositional viewpoints on probability. This point of view is very useful in reasoning about the behaviour of computing systems that are based on statistical inference. The body of work resulting from this has proven powerful enough to capture functorial analogues of many classic structural results such as de Finetti's theorem \cite{Fritz21}. A fundamental gap in the literature has been categorical descriptions of the classical limiting results of probability such as the CLT~\cite{Seewoo24}. This is due to a broader dearth of frameworks for reasoning \emph{quantitatively} about iterative analytic convergence in category theory. The purpose of this paper is to develop such methods, powerful enough to capture the CLT, through a combination of enriched category theory and the fixed point theory of normed spaces.

The first contribution of this paper is an axiomatization of the Banach fixed point theorem (BFPT) (\cref{thm:fpt}) using quantale theory and distance spaces \cite{Flagg97}. We define seminorm and dilated categories as enrichments over two related cosmoi of such distance spaces. Morphisms in a seminorm category are equipped with a notion of magnitude. The payoff is that, in the presence of such an enrichment, one can lift the BFPT to the (enriched) categorical level (\cref{thm:banach_fixed_point}), which has a form vaguely reminiscent of Lawvere's fixed point theorem \cite{lawvere69, lawvere06}. This version of BFPT seems to provide a convenient categorical machine for reasoning about limiting results from analysis in a straightforward way
As we show in the subsequent sections, it is at least enough to capture the CLT.

It has been widely recognized that the difficult part in formulating and proving central limit theorems is that one needs normalisation and rescaling operators~\cite{ breullart_thesis, Breuillard25}.
We propose \emph{dilated categories} as a solution, in which morphism can be measured by a seminorm valued in a quantale and rescaled by elements of an underlying quantale, offering a broad framework for treating this genre of problem systemically.
As our main example, we introduce a novel metric approach to categorical probability theory via enriched lax monoidal functors (\cref{Prob_is_monoidal_closed}). We show that the convolution product may be axiomised via monoidal dilated functors on pointwise additive categories (\cref{prop:axiomisedconvolution}).

We then state and prove a general structural central limit theorem (\cref{thm:gCLT}) in dilated categories. More specifically, we use an enriched functor $F$ between dilated categories, where the domain is equipped with an addition operation. This addition is pushed by the functor $F$ onto objects in its image.  In the known probabilistic central limit theorem, the functor $F$ sends a vector spaces to a metric spaces of probability measures with expectation 0 and finite variance on it.  This algebraic structure, along with the diagonal maps, is used to define a binary operator on the endomorphisms of the object. This corresponds, in the classical case, to the self-convolution of probability measures. However, such an operator will generally not be globally contractive on any given object $F(X)$, which permits a direct application the BFPT.
We remedy this by introducing a second grading functor $G$, corresponding to the assignment of a space of positive definite matrices, and posit the existence of a natural transformation $p\colon F\Rightarrow G$, corresponding to variance. Using enriched pullbacks, we use this transformation to decompose $F(X)$ into fibres.
On each fibre, the rescaled operator $\theta$ restricts to an endomorphism that is contractive. It follows from the categorical Banach theorem that repeated iterations of it converge to a central limit in each fibre. In the probabilistic case, this is the Gaussian distribution with fixed variance matrix.
This yields a strengthening of the CLT (\cref{thm:functorialCLT}), namely that the assignment of the normal distribution $\mathcal N(0, M)$ to each possible variance value $M$ in a finite vector space $V$ can be lifted to a natural transformation.
Finally, we prove a novel CLT for observables, illustrating how this framework is useful for constructing more complicated CLTs from simpler ones and, in particular, constructing higher order CLTs via compositional reasoning .

Our work falls in the area of categorical probability theory, which plays an increasingly important role in reasoning about probabilistic processes.
We see our work particularly as a first step to devise logical reasoning techniques for stochastic differential equations.
These play a crucial role in, for instance, optimisation and have a stochastic term given by Gaussian distributions.
Another motivation to reason about sampling processes on, for instance, smooth manifolds by using analogues of the law of large numbers.
Finally, we would like to point out that quantale-valued seminorms are also of independent interest for studying rates of algorithmic convergence, as the seminorm of a multistep process estimates its worst case rate of convergence.
Therefore, our framework may have applications in reasoning \textit{quantitatively} about limiting procedures.

\paragraph*{Related work.} Given the vast literature on CLTs, we restrict our review to recent developments most pertinent to the present work.

\paragraph*{Categorical probability theory.}
There have been several attempts to study probability theory using the tools of category theory, with early ideas already appearing in an unpublished manuscript of Lawvere in the 1960s. Of particular note, is the notion of probability monads, which appeared first in \cite{Giry82}. Subsequently, these have been extended from measurable spaces $\Meas$ to other closely related categories \cite{Keimel08,FP20}. We also use a probability functor $\Prob$ in our main example but are more interested in its lax monoidal structure rather than its monadicity.

A second, more recent and much more general approach to categorical probability theory has been Markov categories, introduced in \cite{Fritz20}. Markov categories subsume the previous approach due to the fact that the Kleisli categories of probability monads generally form Markov categories. Limiting processes have been studied synthetically in Markov categories in \cite{Fritz21} and via an enrichment in \cite{Perrone24}.  We are aware of an attempt to study probability via enriched categories within Markov categories, this being \cite{Perrone24}, which enriched Markov categories to study entropy.  In particular, we recently became aware of \cite{fritz26}, which offers an alternative approach to the law of large numbers that is more synthetic than ours.

We emphasise that our framework is complementary to the Markov approach. Markov categories excel at capturing the compositional and structural aspects of probability, enabling synthetic reasoning about copying and marginalisation, but they are not designed to capture the quantitative aspects of the theory, such as rates of convergence. Our enriched framework provides these analytical tools, enabling categorical reasoning about limit theorems, iterative processes, and convergence rates. We view these as currently orthogonal perspectives on categorical probability; however, they suggest a path toward a future unified theory of dilated Markov categories.

Other recent approaches include effectuses \cite{Jacobs18,ChoEA15} which were introduced to study quantum logic. A Cartesian closed category for higher order categorical probability theory was introduced in \cite{Heunen17}. 

\paragraph*{Weighted and normed category theory.} Enrichments similar to ours have appeared variously in other contexts. The notion of a weighted category first appears in \cite{grandis07}. This is quite close to the categories we use as bases of enrichment, except that ours take values in a quantale rather than the extended reals exclusively.  The work of \cite{CHT25} relates to ours in that their enrichment is also rich enough to support a Banach fixed point theorem, though flavoured rather differently to ours. There is also a categorical Banach theorem in coalgebra \cite{Ademek94}. Their theorem applies to contracting functors between complete metrically enriched categories. The category of complete metrically enriched categories therefore likely carries a seminorm category structure. On a slightly more removed note, our dilated categories seem to be somewhat related to the type system appearing in \cite{reed10}. It would be interesting to investigate this further.

\paragraph*{Fixed point approaches to the CLT} Our paper was directly inspired by, and owes a considerable debt to, the fixed point approach to the CLT~\cite{HW84}, which was based on that from~\cite{trotter59}.

\paragraph*{Overview}
After a brief review of the necessary background on enriched category theory and quantales in~\Cref{sec:prelim}, 
we establish a generalisation of the Banach Fixed Point Theorem for complete distance spaces over a quantale in~\Cref{sec:spaces}.
In~\Cref{sec:seminormspaces}, we introduce a category of seminorm spaces. In~\Cref{sec:seminormcategories}, we enrich over this category and in \Cref{sec:catBanach} establish a categorical Banach Fixed Point Theorem in the enriched setting. 
\Cref{sec:convolution} contains the material on our notion of convolution product in seminorm-categories. 
Next, in \Cref{sec:CLT}, we establish the main result of this work, a central limit theorem native to dilated-categories, and illustrate how the classic CLT and   law of large numbers  is obtained by instantiation.  We conclude by introducing a novel CLT, the CLT for observables, inspired by statistical mechanics on symplectic manifolds in \Cref{sec:pushforward}.
\section{Preliminaries}
\label{sec:prelim}
\pratendSetLocal{category=prelim}

We briefly gather the necessary background on enriched category theory~\cite{Kelly05} and quantales~\cite{Flagg97},
and fix notation.
We assume familiarity with basic concepts of category theory~\cite{AHS90,MacLane98} such as functors, natural transformations, adjunctions, and monoidal categories. 

Throughout this work, we adopt the perspective of \emph{generalised elements}.  If a category $\Cat{C}$ possesses a terminal object $\1$, a \emph{point} of an object $X$ is defined as a morphism $x\colon \1 \to X$. 
This generalizes the set-theoretic case where elements $x \in X$ correspond bijectively to functions $\{*\} \to X$. 
\subsection{Enriched category theory}
We begin with a review on enriched categories~\cite{Kelly05}. 

A \emph{monoidal category} is a category~$\Cat{V}$ equipped with a \emph{tensor} bifunctor 
$\Cat{V}\times\Cat{V}\xra{\tens}\Cat{V}$, a \emph{unit object}~$\tensU$, and natural isomorphisms
\begin{equation*}
\alpha\colon (A \tens B) \tens C \xra{\cong} A \tens (B \tens C),
\quad
\lambda\colon \tensU \tens A \xra{\cong} A, 
\quad
\text{and}
\quad
\rho\colon A \tens \tensU \xra{\cong} A
\end{equation*}
(the \emph{associator}, \emph{left unitor}, and \emph{right unitor}, respectively) subject to coherence laws.
We call~$\Cat{V}$ \emph{symmetric} if there is additionally a coherent natural isomorphism~$A\tens B\cong B\tens A$.
A symmetric monoidal category~$\Cat{V}$ is \emph{closed} if the functor~$-\tens B\colon\Cat{V}\to\Cat{V}$ has a right adjoint~$[B, -]\colon\Cat{V}\to\Cat{V}$ for each~$B\in\Cat{V}$.
In this case, we have a natural isomorphism of hom-sets
\begin{equation*}
\VC(A\tens B, C)\cong\VC(A, [B, C]),
\end{equation*}
and we refer to~$[B, C]\in\Cat{V}$ as the \emph{internal hom} from~$B$ to $C$.

\emph{Unless explicitly mentioned otherwise, assume that $\VC$ is a complete and cocomplete closed symmetric monoidal category.}
Examples include the category~$\Set$ of sets and maps, and the category~$\Pos$ of partially ordered sets and monotone maps with their cartesian closed structure.
An example in which the tensor is not Cartesian includes the category~$\Vect{\R}$ of real vector spaces and linear maps with the usual tensor product.

A \emph{$\VC$-enriched category} (or a \emph{$\VC$-category})~$\eAC$, which will always be denoted with an underline, comprises a collection of objects~$\ob{\eAC}$ along with the following data for all objects~$A,B,C\in\ob{\eAC}$: an object~$\eAC(A, B)\in\VC$ of \emph{morphisms from~$A$ to~$B$} for all~$A,B\in\ob{A}$, a \emph{composition morphism} 
\begin{equation*}
\comp\colon\eAC(B, C)\tens\eAC(A, B)\to\eAC(A, C),
\end{equation*}
and an \emph{identity morphism}~$j_A\colon \tensU\to\eAC(A, A)$.
This data is subject to coherence laws (involving the structure of~$\VC$ as a monoidal category) expressing that composition is associative and unital.
Given a $\VC$-category $\eAC$, we define its \emph{underlying category} $\AC$ to be the category with the same objects as~$\eAC$ and hom-sets defined by $\AC(A, B) := \VC(\tensU, \eAC(A, B))$.
An \emph{enriched functor} $\enr{F} \from \eAC \to \eBC$ between $\VC$-categories $\eAC$ and $\eBC$ consists of a map $F \from \ob{\eAC} \to \ob{\eBC}$ and for all $A, B \in \eAC$ a morphism 
\begin{equation*}
\enr{F}_{A,B} \from \eAC(A,B) \to \eBC(FA, FB)
\end{equation*}
in $\VC$ that preserves identity and composition.
We will typically omit the subscripts if they are clear from the context.
Given an enriched functor $\enr{F}$, we obtain a functor $F \from \AC \to \BC$ that acts as $\enr{F}$ on objects and on morphisms $f \from \tensU \to \eAC(A, B)$ by $Ff = f \comp \enr{F}_{A,B}$.
An \emph{enriched natural transformation} $\enr{F} \to \enr{G}$ is just a natural transformation $F \to G$.
We write $\VCat$ for the 2-category of $\VC$-enriched categories, enriched functors and natural transformations.

\begin{example}
A~$\Set$-category~$\AC$ is an (ordinary) locally small category. 
Indeed,~$\AC$ consists of the data of collection of objects together with the assignment of a \emph{set}~$\eAC(A, B)$ of morphisms for all object pairs. 
Composition is just a functional assignment of a composite morphism to all composable pairs, and the identity morphism is just the assignment of a morphism~$\id := j_A(*)$ to each object viewed as a generalised point~$\{*\}\to\eAC(A, A)$.
The coherence laws of~$\AC$ express that composition is associative and that composition by identity morphisms is unital. 
\end{example}
\subsection{Quantales and Distance Spaces}
Quantales are ordered algebraic structures that support a generalised concept of metric space~\cite{Flagg97}, which we review next. 

By a sequence in a set~$X$, we always understand a map~$x\colon\omega\to X$, where~$\omega$ denotes the poset of natural numbers with the standard ordering.
Whenever it is convenient, we use the point-wise notation~$(x_n)_{n\in\omega}$, or even just~$(x_n)$, for the sequence $x$.

A \emph{complete lattice} is a set~$V$ equipped with a partial order~$\leq$ such that the supremum (or \emph{join})~$\join A$ exists for every $A\subseteq V$.
Then~$V$ also admits arbitrary infima (or \emph{meets}) characterised for each~$A\subseteq V$ by
$\meet A = \join\up{A},$
where~$\up{A}$ is the set of upper bounds of~$A$~\cite[Theorem 4.2]{BS81:CourseUniversalAlgebra}.
For an index set~$I$ and a family of points~$x_i\in V$~$(i\in I)$, we write~$\join_{i\in I} x_i$ (or just~$\join x_i$) instead of~$\join\setDef{x_i}{i\in I},$ we denote binary joins by~$x\lor y$, and we write~$\bot$ and~$\top$, respectively, for the bottom and top elements of~$V$.

A \emph{quantale}~$(\V, \qMul,\unit)$ is a complete lattice~$\V$ equipped with the structure of a commutative monoid with tensor~$\qMul$ and unit~$\unit$ such that for each~$x\in\V$ the functor~$x \qMul - \colon\V\to\V$  preserves joins:
\begin{equation*}
r\qMul\left(\bigvee s_i\right) = \bigvee(r\qMul s_i).
\end{equation*}
Since joins are colimits in~$\V$, viewed as a category, it follows from the adjoint functor theorem that each~$x\qMul -$ has a right adjoint~$[x,-]\colon\V\to\V$ characterised by a natural isomorphism
$\V(r\qMul s, t)\cong \V(s, [r,t]).$
This means that
\begin{equation*}
r\qMul s \leq t\text{ if and only if }s\leq [r, t].
\end{equation*}
We call~$\V$  \emph{contractive} if $q \qMul x < x$ when $q < e$ and $x \neq \top, \bot$. 
Note that~$\bot\qMul r = \bot$ for all~$r\in\V$ since $\qMul$ preserves joins and $\bot = \bigvee\varnothing$; we refer to this property as \emph{absorbtion}.

\begin{example}
\label{expl:boolean}
The complete lattice~$\2 = \bot\leq\top$ carries the structure of a quantale with~$x\qMul y := x\land y$ and $\unit = \top.$ 
Then~$[x, y]:= (v_1\to v_2)$.  
Then~$(\2, \land, \top)$ is trivially contractive.
\end{example}
\begin{example}
\label{expl:reals}
Let~$\gR$ denote the set of non-negative reals extended by a point~$\infty$.
We extend the standard ordering, addition, and multiplication on the non-negative reals to~$\gR$ by defining
\begin{equation*}
r\leq \infty,
\qquad
r+\infty = \infty = \infty+r, 
~~\text{and}\qquad
r\mult\infty = \infty = \infty\mult r
\end{equation*}
for all~$r\in\gR$.
This ordering makes~$\gR$ into a complete lattice with top element~$\infty$. 
Moreover,~$\gR$ carries the structure of a quantale $\gR_{\times}$ with multiplication as tensor and the unit~$1$. 
Then $[x, y] = \frac{y}{x}$ for $x,y \neq 0, \infty$ in~$\gRm$. 
By equipping~$\gR$ with the complete lattice structure given by the opposite order~$\geq$ and extended addition, one obtains the so-called Lawvere quantale. 
Note that~$\gR_{\times}$ is contractive and, for trivial reasons, so is the Lawvere quantale.
\end{example}

\paragraph*{Generalised Distance Spaces}
Fix a quantale~$(\V, \qMul, \unit)$ for the remainder of this section.
We are going to axiomatically describe a notion of generalised distance space, which involves distance values from~$\V$, as a suitable class of $\V$-fuzzy relations~\cite{Zadeh71}.

\begin{definition}
\label{defn:space}
A \emph{$\V$-relation} is a set~$X$ equipped with a function~$d\colon X\times X\to\V,$ which assigns a \emph{distance} $d(x, y)\in\V$ to every pair of points~$x,y\in X$.
A map~$f\colon X\to Y$ between~$\V$-spaces~$X$ and~$Y$ is \emph{nonexpansive} if
\begin{equation*}
d_Y(f(x), f(y))\leq d_X(x, y)
\end{equation*}
for all~$x,y\in X$. 
We write
$\VRel$
for the category of~$\V$-relations and nonexpansive maps.
A \emph{$\V$-space} is a $\V$-relation~$(X, d)$ such that
\begin{equation*}
d(x,x)\leq\bot
\quad
\text{and}
\quad
d(x,y) = d(y,x)
\end{equation*}
for all~$x,y\in X$.
We write~$\VSp$ for the category of~$\V$-spaces as objects and non-expansive maps as morphisms.

A \emph{limit point} of a sequence~$x$ in~$X$ is a point~$y\in X$ such that, for any~$r>\bot$ in~$\V,$ there exists~$N_r\in\omega$ such that~$d(y, x_n)\leq r$ for all~$n\geq N_r$.  A $\V$-space is \emph{(sequentially) Hausdorff} if whenever~$y,z\in X$ are limit points of a sequence~$x,$ then~$y = z$.
In this case, we write~$\lim(x)$ for the unique limit point of~$x$.
\end{definition}

We note that our concept of~$\V$-space differs from the~$\V$-categories of Flagg~\cite{Flagg97} (also see Lawvere~\cite{Lawvere02} for the case of metric spaces) since our notion of distance is required to be symmetric and is not required to satisfy the triangle inequality. 
Whenever a $\V$-space is Hausdorff, one obtains the usual point separation condition for metric spaces as in the following.

\begin{lemmaE}
\label{lem:separation}
If~$(X, d)$ is Hausdorff, then $d(x,y) = \bot$ iff~$x = y$.
\end{lemmaE}
\begin{proofE}
Consider the constant sequence defined by~$x_n = x$ for all~$n\in\N$. 
Then, if~$d(x, y) = \bot,$ we have~$d(x_n, x) = d(x_n, y)$ for all~$n\in\N.$
In particular,~$d(x_n, x), d(x_n, y)\leq r$ for all~$r\in\V$. 
Since~$X$ is Hausdorff, it follows that~$x = y,$ as desired.
\end{proofE}

\begin{lemmaE}
\label{lem:convergence}
Nonexpansive maps preserve limit points.
\end{lemmaE}
\begin{proofE}
Let $x$ be a sequence in $X$ with limit point $y$. Let $f: X\to Y$ be a nonexpansive map. Then for any~$r>\bot$ in~$\V,$ there exists~$N_y\in\omega$ such that~$d(y, x_n)\leq r$ for all~$n\geq N_r$. It follows that
$$
d(f(y), f(x_n)) \le d(y, x_n) \leq r.
$$
We conclude that $f(x)$ converges to $y$.
\end{proofE}
\noindent
We recover our familiar examples from this formalism.
\begin{propositionE}
Every metric space is both a Hausdorff $\gRa$-space and a Hausdorff $\gRm$-space.
\end{propositionE}
\begin{proofE}
The reflexive axiom of metric spaces implies the reflexive axiom of distance spaces since $\bot$ in $\gR$ is 0. Distance is a symmetric function for metric spaces. Finally to prove the Hausdorff condition, suppose towards contradiction that a sequence $\{a_i\}_{i \in \mathbb N}$ in a metric space has two limits $L, M$. Then, for all $\epsilon \in \gRa$ there exists an $N>0$ such that $i>N$ implies that $d(L, a_i), d(a_i, M) < \epsilon/2$. Then, by the triangle inequality, $d(L, M) \leq d(L, a_i) + d(a_i, M)) \epsilon/2 + \epsilon/2 = \epsilon $.  Therefore
$$
d(L, M) < \epsilon
$$
for all $y' \in \gRa$. We can conclude that $L = M$.

Exactly the same argument establishes that every metric space is a Hausdorff $\gRm$-space
\end{proofE}

The category~$\VSp$ carries the structure of a symmetric monoidal category with the tensor $A\tens_{\VSp} B$ of $\V$-distance spaces $A, B$ being given by equipping the Cartesian product $A\times B$ with the \textbf{maximum metric}, i.e.
\begin{equation*}
d_{A\tens B}\left((a_1, b_1), (a_2, b_2) \right) = d_A(a_1, a_2) \vee d_B(b_1, b_2).
\end{equation*}

\begin{remark}
Distance spaces are the quantalic analogue of usual addition-based metric space theory. For example, multiplication is a much more natural operation to consider in many computing contexts; for example (symmetric) divergences in statistics \cite{kullback51, cisis68}, neural networks \cite{Piotr17} and  quantitative algebra \cite{radu16}.
\end{remark}

\section{Fixed Points in Distance Spaces}
\label{sec:spaces}
\pratendSetLocal{category=spaces}

In this section, we describe a concept of geometric completeness for~$\V$-spaces, i.e. in which every geometric sequence has a unique limit point, and consider the full subcategory of~$\VSp$ spanned by these spaces. 
In this setting, we establish a generalisation of the Banach Fixed Point Theorem, which we recall first.

\begin{theoremE}[Banach Fixed Point Theorem]
\label{thm:BFPT}
Let $(X, d)$ be a non-empty complete metric space and~$f\colon X \to X$ a map such that for some non-negative real~$q<1,$ the inequality
\begin{equation*}
d(f(x), f(y)) \leq q\cdot d(x,y)
\end{equation*}
holds for all~$x,y\in X.$ 
Then $f$ has a unique fixed-point $\fix(f)$. 
Moreover, for any~$x\in X$, we have~$\fix(f) = \lim f_x$ where~$f_x$ is the sequence defined by~$(f_x)_0 = x$ and~$(f_x)_{n+1} = f((f_x)_n)$. 
\end{theoremE}

In order to generalise~\Cref{thm:BFPT} to~$\V$-spaces, we first define a suitable concept of \emph{complete} $\V$-space, based on the idea of geometrically complete~$\V$-spaces in which every geometric series has a unique limit point.
There are some subtleties which emerge when replacing~$\gR$ by a more general quantale, which we address first.
\emph{In what follows, we assume that $\V$ is a contractive quantale.}

\subsection*{Completeness in Distance Spaces}
It is important in applications to guarantee the existence of limits of sufficiently well-behaved sequences. In complete metric space theory, all Cauchy sequences converge. In this work, as we need only the Banach fixed point theorem, we restrict attention to geometric sequences. A sequence $x: \omega \to X$ is \textbf{geometric} if there are $r < e$ and $q < \top$, such that for all $i ,j \in \mathbb N$ 
\begin{equation*}
  d(x_{i}, x_{i+1}) \leq r^i \qMul q
  \quad \text{and} \quad
  d(x_i, x_j) < \top \, .
\end{equation*}
In this case, we refer to the value~$r$ as a \emph{constant} of~$x$.
We say that~$(X, d)$ is \emph{geometrically complete} (or just \emph{complete}) if every geometric sequence in~$X$ has a unique limit point $L$ satisfying $d(x_i, L) \le \join_{n \in \mathbb N} \meet_{j > n} d(x_i, x_j).$
We write~$\CVSp$ (or just~$\CDS$ if~$\V$ is clear from context) for full subcategory of~$\VSp$ spanned by all complete $\V$-spaces.
\begin{example}
\label{expl:geometric}
For our purposes, the key example of a geometric sequence can be summarised as follows.
First, define the \emph{Lipschitz constant}~$\Lnorm{f}$ (or just~$\norm{f}$) of a nonexpansive map $f\colon X\to Y$ as the meet over all~$c\in\V$ such that, for all~$x,y\in X$,
\begin{equation*}
d(f(x), f(y)) \leq c\qMul d(x,y).
\end{equation*}
Note that since~$f\in\DS(X,Y)$ is nonexpansive, we have~$\Lnorm{f}\leq\unit$.

Now, for an endomorphism~$g\colon X\to X$ on a~$\V$-space~$(X, d)$, let~$g_x\colon\N\to X$ denote the sequence of iterates of~$g$ on~$x\in X$ defined by~$(g_x)_0 = x$ and~$(g_x)_{n+1} = g((g_x)_{n}),$ i.e.~$g_x = (g^n(x)).$
Then~$g_x$ yields a geometric series with constant~$\Lnorm{g}$.
\end{example}

\begin{example}
\label{expl:ERquantale}
Any Cauchy complete metric space over~$\gR$ is a complete~$\gRm$-space because Cauchy convergence implies geometric convergence in a metric space. This is a standard argument using the Cauchy criterion for series convergence and the formula for the sum of a  geometric sequence.
\end{example}
\begin{propositionE}
If $X,Y\in\CDS,$ then~$X_1\tens X_2\in\CDS$. 
In particular, the functor~$(-) \tens X_2$ given by tensoring with~$X_2$ restricts to~$\CDS$.
\end{propositionE}
\begin{proofE}
Suppose one has a geometric sequence $z\colon\N\to X_1\tens X_2,$ let~$z^i:= \pi_i\comp z\colon\N\to X_i$ denote the induced sequence in component~$i\in\{1,2\}$.
Then each $c_n = (a_n, b_n)$ and we have that
 \begin{multline*}d_A(a_n, a_{n+1}),d_B(b_n, b_{n+1})  \leq \vee\left(d_A(a_n, a_{n+1}), d_B(b_n, b_{n+1}) \right) \\ = d_{A\tens B}(c_n, c_{n+1}).
  \end{multline*}
 In particular, this implies that the sequences $\{a_n\}_{n \in \mathbb N}$ and $\{b_n\}_{n \in \mathbb N}$ are both geometric. By assumption, they both converge to $a$ and $b$ respectively for $a\in A$ and $b\in B$. So one has 
 for any~$r>\bot$ in~$\V,$ there exists~$N_y, M_y\in\omega$ such that~$d(a, a_n)\leq r$ for all~$n\geq N_r$ and $d(b, b_n)\leq r$ for all~$n\geq M_r$. Choosing $K_r = \max{(N_y, M_y)},$ one has  for $n > K_r$
 $$
 d_{A\tens B}((a,b), c_{n}) = \vee\left(d_A(a, a_{n}), d_B(b, b_{n})\right) \leq r
 $$
  which implies that $\{c_n\}_{n \in \mathbb N}$ converges to $(a,b).$
\end{proofE}
\begin{example}
Every convergent sequence in a metric space has a geometric subsequence, and in this context, all geometric sequences are Cauchy. However, the argument for this does not work in the absence of the triangle inequality. Our Banach fixed point theorem below therefore strictly subsumes the classical one.
\end{example}

\subsection*{Fixed Points in Complete~$\V$-spaces}

With this notion of completeness fully developed, our fixed point theorem for complete~$\V$-spaces can be formulated as follows. A  $\V$-space $X$ is said to be \emph{metrically small} if $d(x,y) < \top$ for all $x,y \in X.$

\begin{theoremE}
\label{thm:fpt}
Let $X$ be an inhabited, metrically small, complete $\V$-space %
and $f\colon X\to X$ a nonexpansive map such that~$\Lnorm{f} <\unit$.
Then~$f$ has a unique fixed point~$\fix(f)$, i.e.~$f(\fix(f)) = \fix(f)$.
Moreover, for any~$x\in X$,~$\fix(f)$ is characterised by 
\begin{equation*}
\fix(f) = \lim(f^{n}(x))_{n\in\N}.
\end{equation*}
\end{theoremE}
\begin{proofE}
Let $x\in X$ and consider the sequence~$f_x$ with~$(f_x)_n = f^n(x)$ for all~$n\in\N$.  
We first show that~$f_x$ is geometric.
To this end, we proceed by induction on~$n\in\N$ to show that
\begin{equation}
d(f^n(x), f^{n+1}(x)) < \Lnorm{f}^n\qMul d(x, f(a))
\end{equation}

By induction, we have that 
\begin{equation*}
d(f^n(a), f^{n+1}(a)) < q^{\qMul n}\qMul d(a, f(a)).
\end{equation*}
By assumption $d(x_i, x_j) < \top$ for all $x_i, x_j \in X.$ It follows that the sequence is geometric and therefore, by completeness, converges to some $\ol{a}.$ 

We next show that~$\ol{a}$ is a fixed point of~$f$.
To this end, note that, since $\ol{a} = \lim (f^i(a)),$ for each $y< \bot$, there exists $N\in\N$ such that
\begin{equation*}
 d(\ol{a}, f^{i}(a)) < y. 
\end{equation*}
for all~$i> N$. 
Since~$\Lnorm{f}\leq q$, it now follows that 
\begin{equation*}
d(f(\ol{a}), f^{i+1}(a)) < q\qMul d(\ol{a}, f^{i}(a)) < q\qMul y.
\end{equation*}
Thus, $f(\ol{a})$ is a limit point of~$(f^n(a))$. 
We conclude that $f(\ol{a}) = \ol{a}$ since sequences have unique limit points by the Hausdorff property of distance spaces.

It remains to be shown that~$\ol{a}$ is unique.
To this end, suppose that $b\in A$ is a fixed point of~$f$. 
Then 
\begin{equation*}
d(a, b) = d(f(\ol{a}), f(b)) \leq q\qMul d(a, b) < d(a, b),
\end{equation*}
where the inequality follows from the contractive property of the quantale. This is a contradiction.
\end{proofE}

\begin{example}
\label{expl:fixedpoint}
Consider the Boolean quantale~$\2,$ which has tensor given by~$\land$ and unit~$1 = \top.$
\Cref{thm:fpt} tells us that any nonexpansive map~$f\colon X\to X$ on a $\2$-space~$X$ with~$\Lnorm{f} = 0$ has a unique fixed point~$\fix f = \lim f_0.$
Therefore the Banach fixed point theorem in this setting tells us that all maps with Lipschitz seminorm 0 on discrete metric spaces are constant. 
\end{example}

\section{Seminorm and dilated spaces}
\label{sec:seminormspaces}
\pratendSetLocal{category=seminormspaces}
In this section, we describe a category of~$\V$-seminorm spaces for a quantale $\V$ and Lipschitz continuous maps which will serve as a base of enrichment for the subsequent developments. 
We illustrate this machinery on the core examples of seminorm categories that will be employed in the sequel.%
\begin{definition}
\label{def:SNorm}
An \textbf{seminorm space} is a complete $\V$-space~$(A,d)$ with map~$\abs{-}\colon A\to\V$, called the \emph{seminorm}. A morphism of  seminorm spaces from~$(A, |\cdot |_A)$ to~$(B, |\cdot |_B)$ is a nonexpansive map~$f\colon A\to B$ such that~$|f(x)|_B\leq \abs{x}_A$. We write $\SNorm$
for the category of  seminorm spaces and their morphisms. This is a monoidal category, see \cref{prop:tensor} below, where the carrier of $A\tens B$ is the product of the carriers in $\CDS$. The seminorm is given by
\begin{equation*}
  \abs{(a,b)} = \abs{a}\qMul\abs{b} \, .
\end{equation*}
\end{definition}
\begin{example}
For any quantale $\V$, we have the one element set $\{x\}$, equipped with the distance space structure $d(x,x) = \bot$, and where $\abs{x} = e.$ This is the unit of the monoidal structure.%
\end{example}
\subsection*{Action spaces}
We are going to develop the theory of seminorm spaces, but where we are able to appropriately rescale morphisms. These will be our dilated spaces, and we develop the theory of this enrichment next.

The \textbf{unit interval monoid} of a quantale $\V$ is the submonoid of~$(\V, \qMul, \unit)$ on 
\begin{equation*}
\Vint= \setDef{x\in\V}{x \leq\unit}.
\end{equation*}
Note that $\Vint$ inherits joins from $\V$: given a family of points~$x_i\in\Vint$ we have~$x_i\leq e$ for all~$i$ hence also~$\join x_i\leq e$.
Moreover, the multiplications of $\V$ restricts to $\Vint$ with $e$ as unit, which makes $\Vint$ a quantale.

\begin{definition}
\label{defn:action}
An \emph{action} on a complete $\V$-space~$(A,d)$ is a nonexpansive map
$\act\colon\Vint\times A\to A$
such that, for all $a\in A$ and $x,y\in\Vint$,
\begin{equation*}
\unit\act a = a
\qquad
\text{and}
\qquad
x\star(y\act a) = (x\qMul y)\act x.
\end{equation*}
That is,~$\act$ is a left action of the interval monoid on~$A$.
A \emph{($\V$-)action space} is a complete $\V$-space equipped with an action.
An action space $(A, \act)$ is \emph{conical} at~$a\in A$ if the map $\bot\act -\colon A\to A$ is constant with value $a$.
In this case, we call $a$ the \emph{base point}.
A \emph{equivariant map} from~$(A, \star_A)$ to~$(B, \star_B)$ is a nonexpansive map~$f\colon A\to B$ which preserves monoid actions, i.e.~$(x\star_A a) = x\star_B f(a)$ for all~$x\in\Vint$ and all~$a\in A$.
We write $\VAct$ for the category of~$\V$-action spaces and equivariant maps. 
\end{definition}

Note that an equivariant map~$f\colon (A, a)\to (B, b)$ between conical action spaces preserves base points:
\begin{equation*}
f(a) = f(\bot\act_A a) = \bot\act_B f(a) = b.
\end{equation*}
Thus, we may form the full subcategory
$
\bVAct
$
of~$\VAct$ spanned by all conical action spaces and basepoint preserving maps.
\begin{definition}
A \emph{dilated space} is a conical action space equipped with map~$\abs{-}\colon A\to\V$ (the \emph{(dilated) seminorm}) such that
\begin{equation*}
\abs{x \act a} = x \qMul\abs{a}
\end{equation*}
for every $x \in V$ and every~$a \in A$. 
A morphism of conical seminorm spaces from~$(X, |\cdot |_X)$ to~$(Y, |\cdot |_Y)$ is a nonexpansive map~$f\colon X\to Y$ such that~$|(f(x))|_Y\leq \abs{x}_X$.
We write $\DNorm$ for the category of conical seminorm spaces and their morphisms. 
\end{definition}
We next explain how to define a monoidal structure on this category. This involves making one more choice.
\begin{definition}
\label{def:unit_interval_dilated}
A  \textbf{unit interval dilated space} is a pointed seminorm structure on the unit interval monoid $\Vint$ of the quantale.
We assume $\Vint$ is equipped with a quantale-valued metric $d_{\Vint}$ satisfying the following axioms:
\begin{enumerate}
    \item The quantale multiplication $\cdot\colon \Vint \times \Vint \to \Vint$ is non-expansive with respect to the metric;
    \item The seminorm is the identity $\abs{-} = \id_{\Vint}$ and the basepoint is $\bot$;
    \item The space $(\Vint, d_{\Vint})$ is geometrically complete.
\end{enumerate}
\end{definition}
\begin{example}
In the quantale $\gRm$, the unit interval $[0,1]$ admits two distance space structures. One can consider the Euclidean distance $d(x,y) = |x-y|$, or the distance $d(x,y) = 0$ if $x=y$ and $\infty$ otherwise.
\end{example}
\begin{propositionE}
For any choice of unit interval dilated space, the symmetric monoidal category $\DNorm$ is equivalent to the category of Eilenberg-Moore algebras over pointed seminorm spaces $\SNorm_{\bullet}$ for the monad $T$ defined by $T(X) = \Vint \wedge X$.
\end{propositionE}
\begin{proofE}
First we remark that by \cite[Lemma 4.20]{Elmendorf09} that as the category $\SNorm$ has finite limits and colimits (which will be proven in the proof of Theorem \ref{thm:cosmoi}), the closed symmetric monoidal structure on $\SNorm$  lifts to a closed symmetric monoidal structure $\SNorm_\bullet.$ The monoidal product lifts to a smash product $\wedge.$

For completeness, we describe the action monad structure explicitly. Let $X_+$ denote the space $X \sqcup \{\ast\}$ with a basepoint. The functor $T$ is defined by the smash product $T(X) = \Vint \wedge X_+$, which is the quotient of the product $\Vint \times X_+$ by the subspace $(\{\bot\} \times X_+) \cup (\Vint \times \{\ast\})$. In other words, it is the following coequaliser
\begin{equation*}
(\{\bot\} \times X_+)  \sqcup (\Vint \times \{\ast\})\rightrightarrows \Vint \times X_+
\end{equation*}
where the first map is the constant map to $(\bot, \ast)$ and the second is the inclusion map. The quotient therefore exists in $\SNorm$ by Theorem~\ref{thm:cosmoi}.

The unit $\eta_X\colon X \to T(X)$ is induced by the map $x \mapsto (e, x)$, and the multiplication $\mu_X\colon T(T(X)) \to T(X)$ is induced by the monoid multiplication in $\Vint$.

An algebra for this monad is an object $A \in \SNorm_{\bullet}$ equipped with a morphism $\xi\colon T(A) \to A$ satisfying the unit and associativity axioms.
The map $\xi$ is equivalent to a map $\act\colon \Vint \times A \to A$ (where $q \act a = \xi(q, a)$) subject to the quotient condition that $\xi$ is constant on the identified subspace.
Specifically, the identification of $\{\bot\} \times A$ implies that $\bot \act a$ is constant for all $a \in A$.
Let $a_0 = \xi(\bot, a)$. This recovers the conical condition of a dilated space.

The unit axiom $\xi \circ \eta_A = \id_A$ implies that $e \act a = a$, which is the unital axiom.
The associativity axiom $\xi \circ T(\xi) = \xi \circ \mu_A$ implies that for all $p, q \in \Vint$, we have $p \act (q \act a) = (p \qMul q) \act a$, which recovers the associativity axiom.

Finally, a morphism of algebras $f\colon (A, \xi_A) \to (B, \xi_B)$ is a morphism in $\SNorm_{\bullet}$ such that $f \circ \xi_A = \xi_B \circ T(f)$.
This condition translates to $f(q \act a) = q \act f(a)$, which is precisely the definition of an equivariant map in $\DNorm$.
Thus, the category of algebras is always isomorphic to $\DNorm$ regardless of the monoid structure on $\Vint$.

Finally, to ensure that the monoidal structure on $\SNorm$ lifts to $\DNorm,$ we shall apply \cite[Theorem 2.2]{kock71}. The object $\Vint$ is a  commutative monoid as it is a submonoid of $\V$ and the structure maps $\Vint \wedge \Vint \to \Vint $ are nonexpansive. The monoid $\Vint$ is commutative since the quantale multiplication is assumed to be commutative. On the category $\SNorm_{\bullet}$, $\Vint\wedge -$ is the action monad of this commutative monoid. It follows by \cite[Example 6.3.12)]{brandenburg}, that it is a commutative monad, and we can thus apply \cite[Theorem 2.2]{kock71}.
\end{proofE}
\begin{example}
The unique morphism~$\act\colon\Vint\times\term\to\term$ on a terminal object~$\term$ of~$\CDS$ (which is equipped with seminorm $\bot$) yields the structure of an action on~$\term$.

For any quantale $\V$, the terminal object in seminorm spaces is simply the one element set $\{x\}$, equipped with the distance space structure $d(x,x) = \bot$, and a trivial monoid action. However, this is \textbf{not} the unit of the monoidal structure, this will be $\Vint$ itself equipped with its canonical action on itself via the quantalic multiplication.
\end{example}
\subsection*{Closed monoidal structure}
We describe the tensor product on $\DNorm$ induced by the product on $\SNorm.$
 \begin{definition}
 \label{definition:tend_dNorm}
The \emph{tensor} $(A, a_0) \tens (B, b_{0})$ of a conical action spaces $(A, a_0)$ and $(B, b_0)$ is carried by the set%
$
\Vint \times |A|\times |B| / \sim
$
where we identify the tuples $(q, a, b)$
\begin{enumerate}
\item
we identify all tuples $(\bot, a, b), (q, a_0, b )$ and $(q, a, b_0)$; 
\item
we identify $(q\qMul r , a, b), (q, r \act a, b_2)$ and $(q, a, r\act b)$.
\end{enumerate}
This is equipped with the quantale valued distance space structure as follows.
The free metric is%
\begin{equation*}
d_F\left((q_1, a_1, b_1), (q_2, a_2, b_2) \right) =  d(q_1, q_2) \vee d_A(a_1, a_2) \vee d_B(b_1, b_2)
\end{equation*}
and the quotient metric is given by
\begin{equation*}
d_{A\tens B}\left((q_1, a_1, b_1), (q_2, a_2, b_2) \right) = \meet d_F\left((q'_1 , a'_1,  b'_1), (q'_2, a_2,  b'_2, ) \right).
\end{equation*}
where the meet is taken over all equivalence class representatives for $(q_1, a_1, b_1)$ and $(q_2, a_2, b_2)$ respectively.
We then complete the resulting space to ensure all geometric sequences converge. The points added in the completion step all have seminorm $\top$. The result is a quantale valued metric.  The action is given by
$$
\act\colon \Vint \times A\tens B \to A\tens B
$$
$$
\left(q, (p, a, b)\right) \mapsto (q\qMul p, a, b)
$$
Finally, the seminorm is given as
\begin{equation*}
\abs{-}\colon A\tens B\to \V, \qquad \abs{(q, a,b)} := q\qMul  \abs{a}_A\qMul\abs{b}_B
\end{equation*}
yields a seminorm on~$A\tens B$. The object $\Vint$ is the unit of this monoidal structure.
\end{definition}
\begin{propositionE}
\label{prop:tensor}
The assignments~$(A, B)\mapsto A\tens B$ is the object-part of a functor
$\tens\colon\SNorm\times\SNorm\to\SNorm$ and $\tens\colon\DNorm\times\DNorm\to\DNorm$, in the latter case, with the action on pair of morphisms given point-wise.%
Moreover,~$\tens$ equips both ~$\SNorm$ and $\DNorm$ with the structure of a symmetric monoidal category with units~ $\{x\}$ and $\Vint$ respectively.
\end{propositionE}
\begin{proofE}
We explicitly check all properties for the tensor product in $\SNorm$.

One can easily see that~$A\tens B$ is a complete distance space. 
The reflexive and symmetric properties of the quantale-valued metric are satisfied by the definition of the product.

The only non-trivial verification is that $A\tens B$ is geometrically complete.
Let $\{c_n\}_{n \in \mathbb N}$ be a geometric sequence in $A\tens B$.
Then each $c_n = (a_n, b_n)$ and we have that
\begin{multline*}
d_A(a_n, a_{n+1}), d_B(b_n, b_{n+1})\\ \leq \vee\left(d_A(a_n, a_{n+1}), d_B(b_n, b_{n+1}) \right) = d_{A\tens B}(c_n, c_{n+1}).
\end{multline*}
In particular, this implies that the sequences $\{a_n\}_{n \in \mathbb N}$ and $\{b_n\}_{n \in \mathbb N}$ are both geometric.
By assumption, they both converge to $a$ and $b$ respectively for $a\in A$ and $b\in B$, which implies that $\{c_n\}_{n \in \mathbb N}$ converges to $(a,b)$.

We next show that for morphisms~$f\colon A\to C$ and~$g\colon B\to D$, the map~$f \tens g\colon A\tens B\to C\tens D$  defined for~$(a, b)\in A\times B$ by
$$
(f\tens g)(a, b): = (f(a), g(b))
$$
is a morphism of seminorm spaces. This is nonexpansive as 
\begin{multline*}
d_{C\tens D}\left( (f(a_1), g(b_1)), (f(a_2), g(b_2)) \right)
\\ = \vee\left(d_C(f(a_1), f(a_2)), d_D(g(b_1), g(b_2)) \right)
\\
\leq \vee\left(d_A(a_1, a_2), d_B(b_1, b_2) \right)
\\
= d_{A\tens B}\left((a_1, b_1), (a_2, b_2) \right) 
\end{multline*}
We next need to prove that $\tens$ is a bifunctor. It suffices to prove that $\tens$ preserves composition and identities. To get the the identity, $(\id_A \tens \id_B)(a, b) = (a, b) = \id_{A \tens B}(a, b)$. For the composition, let $f'\colon C \to E$ and $g'\colon D \to F$. Then
\begin{multline*}
((f' \tens g') \circ (f \tens g))(a, b)
= (f' \tens g')(f(a), g(b)) \\
= (f'(f(a)), g'(g(b))) = ((f'\circ f)(a), (g'\circ g)(b)) \\= ((f'\circ f) \tens (g' \circ g))(a, b).
\end{multline*}

For the symmetry, the map $\sigma_{A,B}\colon A \tens B \to B \tens A$ defined by $\sigma(a, b) = (b, a)$ is clearly an isometric isomorphism preserving the seminorm. Associativity and the pentagonal law are similar as all maps involved are isometric isomorphisms of seminorm $e$.

We define the left unitor $\rho_A\colon I  \tens A   \to A$ (where $I=\{x\}$ is the unit object) by
\begin{equation*}
    \rho_A(x, a) = a.
\end{equation*}
This is a morphism in $\SNorm$ of seminorm $e$ and it is non-expansive because
$$
    d_A(\rho_A(c_1), \rho_A(c_2)) = d_A(a_1, a_2)  = \vee\left( \bot, d_A(a_1, a_2) \right)  = d_{I \tens A }(c_1, c_2).
$$
where $c_i  = (x, a_i).$ The inverse map $\rho_A^{-1}\colon A \to I \tens A$ is given by $a \mapsto (x, a)$.
This is an isometry because
\begin{align*}
    d_{I \tens A }((x, a_1), (x, a_2)) &= \vee\left( d_I(x, x), d_A(a_1, a_2) \right) \\
    &= \vee\left( \bot, d_A(a_1, a_2) \right) = d_A(a_1, a_2).
\end{align*}
The left unitor axiom is precisely the same.

To obtain that $\DNorm$ is a symmetric monoidal category, it suffices to observe that $\SNorm_{\bullet}$ is symmetric monoidal and has coequalisers. This is because categories of pointed objects are complete and cocomplete if the underlying category is complete and complete, and we shall prove in Theorem \ref{thm:cosmoi} that this indeed the case.  It is equipped with the smash monoidal product turning it into a symmetric monoidal category. Then one can immediately apply \cite[Theorem 2.1]{Keigher78} to deduce that it is symmetric monoidal.

The description of the tensor product in Definition \ref{definition:tend_dNorm} comes from the explicit definition in \cite[Theorem 2.1]{Keigher78}, the tensor product in $\DNorm$ is the coequaliser of the two maps
$$
\begin{tikzcd}
A \tens_{\SNorm_{\bullet}}   \Vint  \tens_{\SNorm_{\bullet}}  B 
  \arrow[r, shift left=.5ex, "l "] 
  \arrow[r, shift right=.5ex, "r"'] 
  & A \tens_{\SNorm}  B
\end{tikzcd}
$$
where one of the maps acts on the left $l(a,v, b) = (a\act v, b)$ and the other on the right $r(a,v,b) = (a, v\act b)$.
\end{proofE}
Both $\SNorm$ and $\DNorm$ are convenient bases of enrichment.  
 \begin{theoremE}
 \label{thm:cosmoi}
 Both $\SNorm$ and $\DNorm$ are cosmoi: ie.\ closed, symmetric monoidal categories that are complete and cocomplete.
\end{theoremE}
\begin{proof}
\noindent\textbf{Symmetric monoidal closed}: To show that $\SNorm$ is closed symmetric monoidal, we construct the adjoint to the tensor product. Given $A, B, C \in \SNorm$ one has 
$$
[B, C] = \left\{f \colon B \to C : \exists q \in \V \mbox{ such that } |f(b)|_C \leq q \qMul \abs{b}_B \right\}
$$
This is equipped with the supremum metric $$d_{[B,C]}(f,g) = \sup_{b\in B} d_C(f(b, g(b))$$ and the Lipschitz seminorm 
$$
\abs{f}_{[B,C]} = \meet\{q \in \V | \forall b \in B. |f(b)|_C \leq q \qMul \abs{b}_B \}
$$
To conclude, we must verify the adjunction
$$
\SNorm(A \tens B, C) \cong \SNorm(A, [B,C]).
$$
Given a morphism $f\colon A \to [B,C]$, one has $|f(a)|_{[B,C]} < \abs{a}_A$ for all $a \in A$ by nonexpansiveness of maps in $\SNorm$ with respect to Lipschitz constant. But, by the definition of the Lipschitz seminorm, this is true if and only if for for all $b\in B$
$$
|f(a)(b)|_C \leq |f(a)|_{[B,C]} \qMul \abs{b}_B \leq \abs{a}_A \qMul \abs{b}_B
$$
It follows that the map defined by
$$(a, b) \mapsto f(a)(b)$$
satisfies precisely the nonexpansiveness condition coming from Definition \ref{def:SNorm}. We conclude that it defines an object of  $\SNorm(A \tens B, C)$ as required.

Next, we establish that $\DNorm$ is closed monoidal. For $B, C \in \DNorm$, we define the internal hom $[B, C]$ in $\DNorm$ as the subspace of $[B, C]$ in ${\SNorm}$ consisting of equivariant morphisms with respect to the action of $\Vint$.
The internal seminorm and metric from $\SNorm$ then restrict to this subspace.
This object is automatically conical because for any $f \in [A, B]$, we have that $(\bot \act f)(a) = \bot \act f(a)$, which must be the basepoint of $B$ as $B$ is conical, making $\bot \act f$ a constant map with value the base element. The adjunction then follows from the adjunction in $\SNorm$ restricted to equivariant maps.
\\
\noindent\textbf{$\SNorm$ is Complete:} 
To show that $\SNorm$ is complete, we must prove that it has products and equalisers.

\noindent\textbf{Products:} Binary products have already been constructed in the proof of Theorem \ref{prop:tensor} (though rest of the monoidal category structure is not Cartesian because of how the unit is defined). To construct the infinite products, we generalise this construction. For any indexing set $\mathcal I$, we have $\prod_{i \in \mathcal I} X_i$ is carried by the underlying product in $\Set.$ The distance between two elements $x, x' \in \prod_{i \in \mathcal I} X_i$ is then given by $\join \{d_{X_i}(x, x') \mid i \in \mathcal I\}.$ This clearly satisfies identity of the indiscernables and is symmetric. This space is sequentially Hausdorff. Suppose a sequence $x^{(n)}$ has limit points $y$ and $z$. For any $r > \bot$, there is an $N_r$ such that for all $n \geq N_r$, we have $d(y, x^{(n)}) \leq r$ and $d(z, x^{(n)}) \leq r$. By the definition of the supremum, this implies $d_{X_i}(y_i, x^{(n)}_i) \leq r$ and $d_{X_i}(z_i, x^{(n)}_i) \leq r$ for every coordinate $i \in \mathcal{I}$. Therefore, $y_i$ and $z_i$ are both limit points of the sequence $x^{(n)}_i$ in $X_i$. Because each $X_i$ is a $\V$-space and thus Hausdorff, limits are unique, so $y_i = z_i$ for all $i$. Consequently, $y = z$.

Finally, we must show that $\prod_{i \in \mathcal I} X_i$ is geometrically complete. Let $(x^{(n)})_{n \in \mathbb{N}}$ be a geometric sequence in the product space, meaning there exist $r < e$ and $q < \top$ such that $d(x^{(n)}, x^{(n+1)}) \le r^n \qMul q$. By the definition of the supremum metric, this implies $d_{X_i}(x_i^{(n)}, x_i^{(n+1)}) \le r^n \qMul q$ for every coordinate $i \in \mathcal{I}$. Thus, each projected sequence $(x_i^{(n)})$ is geometric in $X_i$. 

Since each $X_i$ is geometrically complete, $(x_i^{(n)})$ converges to a unique limit point $L_i \in X_i$ satisfying 
$$d_{X_i}(x_i^{(n)}, L_i) \le \join_{m \in \mathbb N} \meet_{j > m} d_{X_i}(x_i^{(n)}, x_i^{(j)}).$$ 
Because coordinate distances are bounded by the product distance, $d_{X_i}(x_i^{(n)}, x_i^{(j)}) \le d(x^{(n)}, x^{(j)})$. Substituting this bound and taking the supremum over all $i \in \mathcal{I}$ yields:
$$d(x^{(n)}, L) = \join_{i \in \mathcal{I}} d_{X_i}(x_i^{(n)}, L_i) \le \join_{m \in \mathbb N} \meet_{j > m} d(x^{(n)}, x^{(j)}).$$ 
Therefore, $L = (L_i)_{i \in \mathcal{I}}$ is the unique limit point. So the infinite product is indeed geometrically complete.

\noindent\textbf{Equalisers:} So it suffices to show that has equalisers. We can explicitly construct them:  given $f, g: A \to B$  morphisms in $\SNorm$, their equaliser $E$ is the subspace of $A$ with underlying set  $E = \{ x \in A : f(x) = g(x) \}$. This inherits a metric and seminorm given by the subspace metric.   Finally the inclusion $i: E \hookrightarrow A$ is clearly non-expansive in terms of distances and preserves also preserves the seminorm ($|i(x)|_A = \abs{x}_E$), meaning that it is non-expansive on seminorms. 

We also need to show that $E$ is geometrically complete with respect to the subspace metric. Let $(a_n)_{n\in\mathbb{N}}$ be some geometric sequence in $E$. Since $E \subseteq A$ and $A$ is geometrically complete, we have that the sequence converges to a unique limit $L$ in $A$. We must show that $L$ is $E$. Recall from \Cref{lem:convergence} that morphisms in $\SNorm$ are non-expansive. It follows that the morphisms preserve limits. Because $f(x_n) = g(x_n)$ for all $n$, we conclude $f(L) = \lim f(x_n) = \lim g(x_n) = g(L)$. Thus $L \in E$, establishing that $E$ is complete.
\\
\noindent\textbf{$\SNorm$ is Cocomplete:} 

Next, we must show that $\SNorm$ it has all coproducts and coequalisers.

\noindent\textbf{$\SNorm$ has coproducts:} The underlying set is the disjoint union of copies placed at distance $\top$ from each other. The seminorm and distance within each copy is inherited directly from the original sets. This clearly satisfies the universal property of the product as the inclusion maps exist and are isometries. It is also complete as the copies are distance $\top$ from each other so each geometric sequence must eventually end up in one copy.

\noindent\textbf{$\SNorm$ has coequalisers:}  Finally we show that $\SNorm$ has coequalisers. This construction is involved; we summarize the main steps below. Let $f, g\colon A \to B$ be morphisms in $\SNorm$. We construct the coequaliser $C$ in two steps: first we form a quotient space, making it Hausdorff, and then apply a construction analogous  to the Cauchy completion.

 Let $\sim$ be the equivalence relation on the set $B$ given by identifying $f(a) \sim g(a)$ for all $a \in A$. We let $B_0 = B / \sim$ be the quotient. There is a projection  $\pi_0:B \to B_0$.  We define a distance function $d_{B_0}$ on $B_0$ by taking the following meet. For $\mathbf{x},\mathbf{x}\in B_0$:
    \begin{equation*}
    d_{B_0}(\mathbf{x},\mathbf{x}) = \meet \left\{ d_B(x_0, y_0) : x_0 \in x, y_0 \in y \right\}
    \end{equation*}
    where the meet is taken over all possible representatives for $x_0 \in\mathbf{x}$ and $y_0 \in \mathbf{x}.$ We define seminorm on $B_0$ by taking meets over the same class
    \begin{equation*}
    |x_{B_0} | = \meet \left\{ |x_0|_B : x_0\in\mathbf{x} \right\}.
    \end{equation*}
    As $f$ and $g$ have the property that $\abs{f(x)} \leq \abs{x}$, $\pi_0$ is a morphism of seminorm spaces.
    
Next, we must discuss the completion $C$ of $B_0$. We warn the reader that this behaves very differently to the completion in metric spaces.  Let $\mathcal{G}(B_0)$ be the set of geometric sequences in $B_0$ with constants less than $e$. 

\textbf{Sequences with limit points:} If a geometric sequence $(x_i)$ has a limit point $L \in B_0$, we identify the sequence with $L$ and set $d([x_i], y) = d(L, y)$ and $|[x_i]|_C = |L|_{B_0}$ for all $y \in C$.

\textbf{Sequences without limit points:} Consider the remaining collection of geometric sequences without limit points in $B_0$, denoted $\overline{\mathcal{M}(B_0)}$. We quotient this by the relation $\sim$ of eventual agreement, i.e., $(x_i) \sim (y_i)$ if there exists some $N>0$ such that $x_i = y_i$ for all $i>N$, to obtain a set $\mathcal{M}(B_0)$.

We define the completion $C$ as the disjoint union $C = B_0 \sqcup \mathcal{M}(B_0)$. The distance and seminorm on $C$ are defined as follows:

\begin{enumerate}
\item For $x, y \in B_0$: $d_C(x, y) = d_{B_0}(x, y)$ and $\abs{x}_C = \abs{x}_{B_0}$.

\item For distinct equivalence classes $\mathbf{x}, \mathbf{y} \in \mathcal{M}(B_0)$, where $\mathbf{x} = (x_i)$ and $\mathbf{y} = (y_i)$:
\begin{equation*}
d_C(\mathbf{x}, \mathbf{y}) = \top
\end{equation*}
and
\begin{equation*}
|\mathbf{x}|_C =  \top
\end{equation*}

\item For $x \in B_0$ and $\mathbf{y} \in \mathcal{M}(B_0)$:
\begin{equation*}
d_C(x, \mathbf{y}) =  \meet_{(y_i) \in \mathbf{y} } \join_{n \in \mathbb N} \meet_{i>n} d(x, y_i)
\end{equation*}
where we take the meet across all equivalence classes for $(y_i)\in \mathbf{y} $ for which $x = y_1.$
\end{enumerate}

\textbf{Geometric completeness of $C$:} We now show that $C$ is geometrically complete. Consider a sequence in $C$, we have three cases to consider:

\textit{Case 1: All but finitely many terms in $B_0$.} If $(z_n)$ has all but finitely many terms in $B_0$, we truncate after the last point in $ \mathcal{M}(B_0)$ and then it is a geometric sequence in $B_0$. If it has a limit $L \in B_0$, we are done. Otherwise, $[(z_n)] \in \mathcal{M}(B_0) \subseteq C$ serves as its limit point.

\textit{Case 2: Infinitely many terms in $\mathcal{M}(B_0)$.} Once the sequence enters $\mathcal{M}(B_0)$, it cannot transition to a distinct point within $\mathcal{M}(B_0)$, as the distance between distinct points is infinite. It therefore must remain there.

\textit{Case 3: Infinitely alternating between $B_0$ and $\mathcal{M}(B_0)$.} A sequence $(z_n)$  cannot have more than one distinct term in $\mathcal{M}(B_0)$. This is because for $z_i, z_i+ k$ such terms we would have $d(z_i, z_i+ k) = \top$ and that contradicts our assumptions about distances in the sequence always being less than $\infty.$ We conclude that $(z_n)$ must contain the same point of $\mathcal{M}(B_0)$ infinitely often which implies it converges to that value.

\textbf{Universal property:} Lastly, we must show that object we have defined is the coequalizer. Let $q = i \circ \pi_0 \colon B \to C$ where $i: B_0 \hookrightarrow C$ is the obvious inclusion that we have just defined. This is a morphism in $\SNorm$, and by construction $q \circ f = q \circ g$.

Suppose $h \colon B \to Z$ is a morphism in $\SNorm$ such that $h \circ f = h \circ g$. The morphism $h$  must factors uniquely through $B_0$, the simple quotient, vianon-expansive map $h_0 \colon B_0 \to Z$.

For any $\mathbf{x} \in \mathcal{M}(B_0)$ where $(x_n) \in \mathbf{x}$ is an equivalence class of geometric sequences in $B_0$  without a limit, the sequence $(h_0(x_n))$ is 
$$
d_Z(h_0(x_i), h_0(x_{i+1})) \leq d_{B_0}(x_i, x_{i+1}) < r^i \qMul q.
$$
We also have 
$$
d_Z(h_0(x_i), h_0(x_j))) \leq d_{B_0}(x_i, x_j) < \top
$$
The sequence $(h(x_i))$ is therefore geometric and has a unique limit point $L$. We can therefore extend the function by defining $\overline{h}(\mathbf{x} ) = L$   This is well-defined. Suppose $(y_n) \in \mathbf{x}$ is another representative sequence. By the definition of $\mathcal{M}(B_0)$, there exists $N$ such that $x_n = y_n$ for all $n > N$. Consequently, $h_0(x_n) = h_0(y_n)$ for all $n > N$, and thus their limits in $Z$ must coincide.

We must now check that $\overline{h}$ is a morphism in $\SNorm$. The map $h$ is automatically nonexpansive on norms by construction. For the seminorm condition, if $\mathbf{x} \in \mathcal{M}(B_0)$, then $|\mathbf{x}|_C = \top$. It follows immediately that $|\overline{h}(\mathbf{x})|_Z \leq \top = |\mathbf{x}|_C$. 

Next, we must check that $\overline{h}$ is nonexpansive on distances. This holds automatically for pairs of points in $B_0$ and $\mathcal M(B_0).$ Therefore the only nontrivial verification is showing the inequality for $x \in B_0$ and $\mathbf{y} \in \mathcal M(B_0).$

Now $h_0$ is non-expansive on $B_0$, we have for all $k$:
$$d_Z(h_0(x), h_0(y_k)) \leq d_{B_0}(x, y_k).$$

Since the join and meet operations in the quantale $\mathcal{V}$ are order-preserving, we can apply the operation $\bigvee_{n \in \mathbb N} \bigwedge_{k>n}$ to both sides of the inequality:
$$\join_{n \in \mathbb N} \meet_{k>n} d_Z(h_0(x), h_0(y_k)) \leq \join_{n \in \mathbb N} \meet_{k>n} d_{B_0}(x, y_k).$$
Taking the meet over all representatives of $(y_i)$ of $\mathbf{y}$ we notice that the right-hand side is precisely the definition of $d_C(x, \mathbf{y})$ and the same is true for the left-hand side.

Thus, one has
$$d_Z(h_0(x), L) \leq d_C(x, \mathbf{y}).
$$

We conclude that $\overline{h}$ is non-expansive and therefore a morphism in $\SNorm.$ 

The uniqueness of $\overline{h}$ follows from the fact that $h$ is fixed on $B_0$. Then any nonexpansive map, then by Lemma \ref{lem:convergence}, limits of sequences are sent to limits. This completes the proof that $\SNorm$ has coequalisers.

\noindent\textbf{$\DNorm$ is complete and cocomplete:} Categories of pointed objects are complete and cocomplete if the underlying category is complete and complete, as pointed objects are a comma category where both functors are continuous and cocontinuous.

The category $\DNorm$ is thus an Eilenberg-Moore category over a complete category and the forgetful functor creates limits. Therefore it is complete. 

To show that it is cocomplete, we note that, by  \cite[Lemma 4.20]{Elmendorf09} as the category $\SNorm$ has all finite limits and colimits, the underlying functor of the action monad $\Vint \wedge - $ has a right adjoint $[\Vint, -]$; this the set consisting of basepoint preserving maps $f: \Vint  \to X$. This is equipped with the sub-distance space structure. It is therefore colimit preserving. So the Eilenberg-Moore category has all colimits.
\end{proof}
Most of the details of the previous verification are routine. We use the monadicity of $\DNorm$ over $\SNorm$ to establish our co(completeness) results for it. The main difficulty is establishing that $\SNorm$ has coequalisers, as it is necessary to take geometric completions; this is a delicate procedure that differs significantly from the metric case. 

Via the closed structure we can compute the seminorm of the identity maps. 
\begin{propositionE}
The isomorphisms in $\SNorm$ and  $\DNorm$ have seminorm $e$.
\end{propositionE}
\begin{proofE}
The seminorm of the identity map is the unit of the quantale $e$. This follows from its definition as
$$
|\id_A|_{[A,A]} = \meet \{q \in \V | \forall b \in B. |f(b)|_A = \abs{b} \leq q \qMul \abs{b}_A \}
$$ 
as clearly $e$ is in the set the join above is taken over as $e \qMul \abs{b}_A  = \abs{b}_A \ge \abs{b}_A$  and $\abs{b} > q \qMul \abs{b}_A $ for any $q < e$ by the semi-monotonicity property. By the same argument and the nonexpansiveness of maps in $\SNorm$ in seminorm it follows that $\abs{f} \le e$ for all maps $f$.

Suppose $f$ has an inverse $f^{-1}$. Then
$$
f^{-1} \circ f = \id_A.
$$
We compute the seminorm of both sides.   Using the sub-multiplicative property of the seminorm, we get:
$$
e = \abs{\id_A} = \abs{f^{-1} \circ f} \leq \abs{f^{-1}} \qMul \abs{f}.
$$
To conclude, assume that $\abs{f} <  e$. Then, multiplying by $|f^{-1}|$  we obtain $\abs{f} \qMul |f^{-1}| <  |f^{-1}| \le e$ where we use semi-monotonicity. This is a contradiction and the conclusion follows.
\end{proofE}
\section{Seminorm and Dilated Categories}
\label{sec:seminormcategories}
\pratendSetLocal{category=seminormcategories}
In this section, we introduce dilated categories as the main vehicle for an abstract formulation of central-limit-type theorems.

\begin{definition}
\label{def:enrichment}
  A \emph{seminorm-category} (resp.\ a \emph{dilated category}) $\UCat{C}$ is an $\SNorm$-category (resp.\ a $\DNorm$-category).
  Since $\SNorm$  (resp.\ $\DNorm$) is a concrete category, a morphism $f \from X \to Y$ in the underlying category of $\UCat{C}$ is an element of $\UCat{C}(X, Y)$.
  An object $X$ in a seminorm category $\eCC$ is said to be \emph{metrically right trivial} (resp.\ \emph{left trivial}) if for all $Y \in \eCC$, the space $\eCC(X,Y)$  (resp.\ $\eCC(Y,X)$) is discrete, ie.\ $d(g_1, g_2) = \top$ for all $g_1, g_2 \in \eCC(X,Y)$ (resp.\ $g_1, g_2 \in \eCC(Y,X)$) such that $g_1 \neq g_2$. Otherwise it is said to be \emph{metrically right non-trivial}.

  We define the \emph{left} and \emph{right composition} with $f$, respectively, $\postComp{f} \from \UCat{C}(Z, X) \to \UCat{C}(Z, Y)$ and $\preComp{f} \from \UCat{C}(Y, Z) \to \UCat{C}(X, Z)$ by $\postComp{f}(g) = f \circ  g$ and $\preComp{f}(g) = g \circ f$.
  We say that $\UCat{C}$ is \emph{left} (respectively, \emph{right}) \emph{compatible}, if for all morphisms $g_{1}$ and $g_{2}$ such that $d(g_{1}, g_{2}) < \top$, we have
  \begin{equation*}
    d(\postComp{f}g_{1}, \postComp{f}g_{2}) \leq \abs{f} \cdot d(g_{1}, g_{2})
    \quad \text{resp.} \quad
    d(\preComp{f}g_{1}, \preComp{f}g_{2}) \leq \abs{f} \cdot d(g_{1}, g_{2}).
  \end{equation*}
\end{definition}
Unfolding the definition of enriched categories, a seminorm category comprises a collection of objects and for all $X, Y\in \UCat{C}$ a seminorm space $\UCat{C}(X, Y)$ of morphisms.
Moreover, the composition is \emph{sub-multiplicative}, that is, for all $f \from X \to Y$ and $g \from Y \to Z$ we have $\abs{g \comp f} \leq \abs{g} \abs{f}$.
We also note that left and right compatibility are not redundant: even though we have by definition that $\postComp{f}g \leq \abs{g} \abs{f}$ and that $d(\postComp{f}g_{1}, \postComp{f}g_{2}) \leq d(g_{1}, g_{2})$, there is a priori not a relation between the metric and the norm.
The compatibility can be understood as ensuring that composition with a $g$ is $\abs{g}$-Lipschitz continuous. In a dilated category, we can also act on morphisms with $\Vint$, rescaling the morphisms.
\begin{example}
\label{ex:eContVect}
The category $\eEBan$ of \emph{homogeneous Banach spaces} has Banach spaces (over~$\R$) as objects and all homogenous maps (ie. those maps such that $f(c\cdot x) =c \cdot f(x)$ for all $c \in [0,1]$ and $x \in \mathbb R^n$ ) between them as morphisms. These maps are not necessarily either contractive, linear nor bounded. The hom-objects $\eEBan(A, B)$ can be equipped with a conical action of $\mathbb R$  via the formula
$$
(c \cdot f) = c \cdot_B f(x) \mbox{ for } f \in \eEBan(A, B)
$$
where we use the scalar multiplication on $B$. Furthermore, these are (extended) metric spaces via the usual supremum metric of function spaces. The seminorm of a continuous morphism $f\in \eEBan(V,W)$ is its Lipschitz constant, 
$$
\Lnorm{f} :=  \inf \left\{ c \in \mathbb R^{\infty}_{\geq 0} \colon d_W(f(v_1), f(v_2))\leq c \cdot d_V(a_1, a_2) \forall v_1, v_2\in V  \right\}.
$$
In particular, if $f$ is not Lipschitz in the classical sense, then its seminorm will be $\infty$.
\end{example} 
\begin{propositionE}
The category $\eEBan$ is a dilated category.
\end{propositionE}
\begin{proofE}
We have already explained the enriched structure in the body of Example \ref{ex:eContVect}. It remains to check the submultiplicative and, the left/right compatibility.

\textit{Submultiplicative axiom:} By definition, one has that $d_B(f(a_1), f(a_2))\leq \Lnorm{f} \cdot d_A(a_1, a_2)$ and $d_C(g(b_1), g(b_2))\leq \Lnorm{g} \cdot d_B(b_1, b_2)$  for all $a_1, a_2\in A$ and $b_1, b_2 \in B.$ It follows that, in particular, $d_C(g \circ f(a_1), g \circ f(a_2))  \leq \Lnorm{f}\Lnorm{g} \cdot d_A(a_1, a_2)$ for all $a_1, a_2\in A$. We conclude that $ \Lnorm{f}\Lnorm{g} \geq \Lnorm{g\circ f}$ as $\Lnorm{g\circ f}$ is a lower bound for constants $c$ with the property that $d_C(g \circ f(a_1), g \circ f(a_2))  \leq c \cdot d_A(a_1, a_2)$ for all $a_1, a_2\in A$

\textit{Left/right compatibility:} Without loss of generality, it suffices to check the left composition metric axiom. The distance between $g_1, g_2 \in \eEBan(A, B)$ is defined to be 
$
\sup_{a \in A} d(g_1 (a), g_2(a)).
$
One has  $ d(f\circ g_1 (a),f\circ  g_2(a)) \leq \Lnorm{f}\cdot d(g_1 (a), g_2(a)).
$ It follows that $
 d(f\circ g_1, f\circ g_2) =  \sup_{a \in A} d(f\circ g_1 (a),f \circ  g_2(a)) \leq \sup_{a \in A}  \Lnorm{f}\cdot d(g_1 (a), g_2(a)) =  \Lnorm{f}\cdot   \sup_{a \in A}  d(g_1 (a), g_2(a))  =  \Lnorm{f}\cdot d(g_1, g_2).
$
\textit{Composition property} We have to show that composition is contractive and respects multiplication. This is straightforward:
$$
f\circ (c \cdot g) = f( c\cdot - ) \circ g = c (f\circ g)
$$
by homogeneity.
\end{proofE}
\begin{definition}
Let $\eCC$ and $\UCat{D}$ be seminorm (resp.\ dilated) categories.  A \emph{seminorm functor} (resp.\ \emph{dilated functor})  $\enr{F}\colon \eCC \to \UCat{D}$ is a functor on the underlying concrete categories such that for all $f \in \eCC(A, B)$, one has 
$$
\abs{F(f)} \le \abs{f}
$$
\end{definition}
\begin{example}
A very simple example of a dilated functor on $\eEBan$  is  $\enr{H}: \eEBan \to \eEBan$ defined by $\enr{H}(V) = V$ and $\enr{H}(f) = \frac{1}{2}\cdot  f.$
\end{example}
We conclude by noting that we are immediately able to compute seminorms of many identity maps. \begin{propositionE}
\label{prop:identites}
Let $\UCat{C}$ be a dilated or a seminorm category. Then, for any metrically right (or left) nontrivial $A\in \eCC$, the seminorm of an identity map $\id_A\in \UCat{C}(A,A)$ is always equal to $e$.
\end{propositionE}
\begin{proofE}
The identity is a morphism of seminorm spaces $\{x\} \to \UCat{C} (A, A)$ in seminorm categories and $\Vint \to \UCat{C} (A, A)$ in dilated categories. In both cases, as the maps are nonexpansive we obtain that $ |\id_A| \leq e.$ 

The identity map has the property that for any $B$ and  $A$ $d(g_{1}, g_{2}) \leq \abs{\id_A} \cdot d(g_{1}, g_{2})$.   Therefore, by the contractive property of the quantale, $e \leq |\id_A|$ if there are any distinct $\bot < d(g_1, g_2) < \top$.   We conclude that $ |\id_A| = e$ if 
\end{proofE}

\subsection*{More examples of seminorm \& dilated categories}
We conclude this section by discussing some explicit examples of dilated categories.  
The largest class of examples are $\Ban$-enriched categories.
\begin{propositionE}
\label{prop:ban_enrich}
Any $\Ban$-enriched category $\UCat{C}$ is a dilated category.
\end{propositionE}
\begin{proofE}
The quantale that the distance spaces is valued in  is  $(\mathbb R^{\infty}, \cdot, 1)$. Banach spaces $V$ are always equipped with a norm $|-|_V$ and we define the seminorm of an element $v\in V$ to be equal to this norm. Banach spaces are also, definitionally, complete metric spaces with respect to the induced metric $d(u,v)= |u-v|$ for $u,v \in V$. 

 Any vector space $V$ naturally carries an action of canonical action of $\mathbb R$ coming from the vector space structure. This action is conical as $0\cdot v = 0_V$ for all all $v\in V$ where $0_V$ is the origin of $V$.

The submultiplicative axiom follows directly from the definition of a Banach enrichment. Morphisms in the category of Banach spaces are contractive with respect to the norm.  We therefore require
$$
\UCat{C}(A, B) \tens \UCat{C}(B, C) \to \UCat{C}(A, C) 
$$
$$
(v, w) \mapsto w\circ v
$$
to be contractive. More precisely, one has
$$
|v|\cdot|w| =  |(v, w)| \geq |w\circ v|
$$
which gives the submultiplicative axiom as required.

Lastly, we verify the left and right compatibility of the enrichment. Without loss of generality we  verify only the left compatibility. One has
$$
d(v \circ w_1, v \circ w_2) = |v \circ w_1 - v \circ w_2| 
$$
By bilinearity, one has 
$$
|v \circ w_1 - v \circ w_2|  = |v \circ (w_1-w_2)|
$$
Finally, by the contractive property of morphisms in Banach spaces, one has 
$$
|v \circ (w_1-w_2)| \leq |v| \cdot |w_1-w_2| = |v| \cdot d(w_1, w_2)
$$
Finally, we need to verify the dilation property. Given $c\in [0,1],$ we have act on a morphism by $(c\cdot f)$. If the hom-set is Cauchy complete, it is geometrically complete. It follows that the hom-objects in the category are complete.

To be a morphism in $\DNorm$, composition must be compatible with the tensor product $\tens_{\DNorm}$, which identifies actions of $\mathbb{V}_{int}$.
Explicitly, we require that for any $c \in [0,1]$, $(c \cdot g) \circ f = c \cdot (g \circ f)$ and $g \circ (c \cdot f) = c \cdot (g \circ f)$.
This holds immediately from the bilinearity of composition in Banach spaces (associativity of scalar multiplication).
Thus, the composition map descends to the quotient space defining the tensor product.

This completes the proof.
\end{proofE}
It follows that many well-known categories are dilated.
\begin{example}
The category $\eBan$ of Banach spaces is enriched over itself and, by \Cref{prop:ban_enrich} is a dilated category. The category $\enr{\Hilb}$ of Hilbert spaces is also enriched in $\DNorm$ as it is a full subcategory of $\eBan$. The same is true of the category of finite vector spaces $\eFinVect,$ with the $l_{\infty}$-norm.
\end{example}
There are also many examples that are not $\Ban$-enriched.
\begin{example}
Consider the category $\eCMet$ of complete extended metric spaces. This is a seminorm category as $\CMet$ is enriched over itself. The seminorm of a map $f\colon A\to B$ is defined to be the Lipschitz norm%
$$
\inf \{c \in \mathbb R^+: d(f(x),f(y)) \leq c \cdot d(x, y) \forall x,y \in A \}
$$
Morphisms cannot be rescaled, so therefore it is not a dilated category.
The category of algebras $\ekCMet$ over the monad $\mathbb [0,1] \wedge(-)$\footnote{The monad structure here is the obvious one induced by multiplication.} is $\DNorm$-enriched. Unravelling this, in $\ekCMet$, the objects are pointed complete metric spaces $(M, m_0, d)$ equipped with a rescaling operator $\cdot\colon  \mathbb{R}^+ \times M \to M$ such that; $d(c \cdot x, c \cdot y) = c \cdot d(x, y)$ for all $c \in \mathbb{R}^+$ and $x, y \in M$; the metric $c \cdot f(x) = f(c\dot x)$ and $0\cdot f$ is the constant function at $m_0$. The morphisms then are the metric maps $f\colon  M \to N$ that satisfy the identity $d(c\cdot f(x), c \cdot f(y))= c \cdot   d(x, y) $ for $x,y \in M$ and $c \in \mathbb R^+$.  and for a scalar $c \in \mathbb{R}^+$ and a morphism $f\colon  A \to B$, the map $(c \cdot f)$ is taken pointwise: $(c \cdot f)(x) := c \cdot f(x)$.  
\end{example}
\begin{propositionE}
The category $\eCMet$ is a seminorm category and $\ekCMet$ is a dilated category.
\end{propositionE}
\begin{proofE}
We have already described the enrichment so now we check the axioms.

\textit{Submultiplicative axiom:}
If we have maps $f\colon  A \to B$ and $g\colon  B \to C$, then for any $x, y \in A$, we have
$$
d_C(g(f(x)), g(f(y))) \leq \Lnorm{g} d_B(f(x), f(y)) \leq \Lnorm{g}\Lnorm{f} d_A(x, y).
$$
Taking the infimum over constants satisfying the Lipschitz condition yields $\Lnorm{g \circ f} \leq \Lnorm{g}\Lnorm{f}$.

\textit{Left/right compatibility:}
For left compatibility, let $h \in \eCMet(B, C)$ and $f, g \in \eCMet(A, B)$. Then
\begin{multline*}
d(h \circ f, h \circ g) = \sup_{a \in A} d_C(h(f(a)), h(g(a))) \\ \leq \sup_{a \in A} \Lnorm{h} d_B(f(a), g(a)) = \Lnorm{h} d(f, g).
\end{multline*}
Right compatibility follows by a completely symmetric argument

The proofs of the two above properties for $\ekCMet$ are \textit{mutatis mutandis} the same.

For $\ekCMet$, the enrichment is in $\DNorm$ as the objects are metric spaces equipped with a scaling action of $\mathbb{R}^+$. The hom-objects inherit the metric and seminorm structure from $\eCMet$ but are additionally equipped with a conical action of $\mathbb{R}^+$ given by $(c \cdot f)(x) := c \cdot_B f(x)$.

\textit{Dilation:} We must check that $\Lnorm{c \cdot f} = c \cdot \Lnorm{f}$. Using the property that the action on objects scales distances, $d_B(c \cdot y_1, c \cdot y_2) = c \cdot d_B(y_1, y_2)$, we verify:
$$
d_B((c \cdot f)(x), (c \cdot f)(y)) = c \cdot d_B(f(x), f(y)) \leq c \cdot \Lnorm{f} d_A(x, y).
$$
This implies $\Lnorm{c \cdot f} \leq c \cdot \Lnorm{f}$. The case of equality follows from the minimality of $\Lnorm{f}$.

\textit{Composition:} Now that we have verified that we have the hom objects are enriched, we must check composition is a morphism in the enriching category. Explicitly, we require that for $c \in \mathbb{R}^+$, $f \in \ekCMet(A,B)$ and $g \in \ekCMet(B,C)$, we have $g \circ (c \cdot f) = c \cdot (g \circ f)$.
Evaluating at a point $x \in A$:
$$
(g \circ (c \cdot f))(x) = g((c \cdot f)(x)) = g(c \cdot f(x)).
$$
Recall that objects in $\ekCMet$ are algebras for the scaling monad and morphisms are algebra morphisms, implying that $g$ is equivariant with respect to the action. Thus:
$$
g(c \cdot f(x)) = c \cdot g(f(x)) = (c \cdot (g \circ f))(x).
$$
The compatibility with the left action, $(c \cdot g) \circ f = c \cdot (g \circ f)$, follows immediately from the definition of the action on the hom-object. Thus, the composition map descends to the quotient space defining the tensor product.
\end{proofE}

\subsection*{Changes of basis of enrichment}
We conclude by discussing the relationship between our three types of category; seminorm, dilated and usual categories. Succinctly, this is summed up in the following commutative diagram of forgetful lax monoidal functors, which can be seen as inclusions.
 \begin{equation}
 \label{my_diagram}
 \begin{tikzcd}
 \DNorm \arrow[rd] \arrow[r, " \DtoSNorm"] & \SNorm \dar{}
 \\ 
 \ & \Set
 \end{tikzcd}
 \end{equation}
There is a change of basis from dilated to seminorm categories and also between dilated (or seminorm) categories and $\Set$.
 \begin{theoremE}
 \label{thm:change_of_basis}
 
Every dilated category $\UCat{C}$ is canonically a seminorm category $\DtoSNorm_\ast\UCat{C}$.
\end{theoremE}
\begin{proofE}
It suffices to show that the forgetful functor $\DtoSNorm : \DNorm \to \SNorm$ is a lax monoidal functor. Let $U \colon \DtoSNorm$. We define the structure maps and verify the axioms.

Let $I_{\SNorm} = \{x\}$ be the unit object in $\SNorm$ (where $\abs{x}=e$) and let $I_{\DNorm} = \Vint$ be the unit in $\DNorm.$ Define the morphism $\psi \colon \{x\} \to U(\Vint)$ by setting $\psi(x) = e$. This is a valid morphism in $\SNorm$ because it preserves the seminorm, $|\psi(x)| = |e| = e \le \abs{x}$, and it is clearly non-expansive since $d_{\Vint}(e, e) = \bot$.

Given any $A, B \in \DNorm$, we need a morphism $\phi_{A,B} : U(A) \tens_{\SNorm} U(B) \to U(A \tens_{\DNorm} B)$. The carrier of the domain is $A \times B$, while the codomain is a quotient of $\Vint \times A \times B$. We define the map by $\phi_{A,B}(a, b) = [(e, a, b)]$. We check this is a morphism in $\SNorm$. For the seminorm, observe that $|\phi(a,b)| = |[(e, a, b)]| = e \qMul \abs{a}_A \qMul \abs{b}_B = \abs{a}_A \qMul \abs{b}_B$, which matches the domain. For the metric, let $(a_1, b_1)$ and $(a_2, b_2)$ be pairs. Since $((e, a_1, b_1), (e, a_2, b_2))$ is a valid representative in the pre-quotient space, and $d(e,e)=\bot$, the distance in the quotient is bounded by $\join (d_A(a_1, a_2), d_B(b_1, b_2))$, which is exactly the distance in the domain. Thus $\phi$ is non-expansive and so is a morphism in $\SNorm.$

The verification of associativity and unitality is straightforward.
\end{proofE}
Another important change of basis that we shall make is to $\Set$, ie.\ we obtain standard categories from dilated and seminorm categories.  
 \begin{theoremE}
 \label{thm:Set}
Every dilated (or seminorm) category $\UCat{C}$ has an underlying category, denoted $\Cat{C}$ and taking the underlying category commutes with the change of basis from dilated to seminorm categories.
\end{theoremE}
\begin{proofE}
There is a strong monoidal functor from $\SNorm$ to $\Set$ and a lax monoidal $\DNorm \to \Set.$ Both are given by sending the monoidal product in each category to their carriers in $\Set$. The structure maps%
$$
\phi_{X, Y}\colon \abs{a}\times \abs{b} \to |A \tens_{\SNorm} B|
$$
on $\SNorm$ is an isomorphism. On $\DNorm$ it is given by sending 
$$
(a,b) \mapsto (e, a, b)
$$
For $\SNorm$, the unit map $\{\ast\} \to \{x\}$ is the unique morphism in $\Set$. For $\DNorm$, the map is given by 
$$\{\ast\} \to \Vint \qquad \ast \mapsto e.$$
That the strong monoidal structure diagrams commute for the functor $\SNorm \to \Set$ is clear.

The forgetful functor to $\Set$ factors through $\DtoSNorm$ by (\ref{my_diagram}). Since the composite of lax monoidal functors is lax monoidal it follows that it is lax monoidal.

The fact that taking the underlying category commutes with the change of basis from dilated to seminorm categories follows from the factorisation in (\ref{my_diagram}).
\end{proofE}
\begin{remark}
These change of basis formulae formalise notation that we shall use throughout the text. That is; we can manipulate points in internal-homs of dilated categories as though they were morphisms in a category. There is no ambiguity over when changing base to the underlying seminorm category. A subtlety for conical diagrams to be well defined, the morphisms must all have seminorm less than or equal to $e$ to guarantee the existence of a map $I \to \eCC(X, Y)$ ie.\ from the monoidal unit of the base of enrichment to $\eCC(X, Y)$.
\end{remark}
\section{A categorical Banach fixed point theorem}
\label{sec:catBanach}
\pratendSetLocal{category=catBanach}
We now discuss fixed point theorems for dilated and seminorm categories.
\subsection*{Morphisms of seminorm $\bot$}
Morphisms with seminorm $\bot$ will play an important role in our fixed point theorems, so we study them further.

A \emph{terminal object} in a seminorm category $\eCC$ is an object $\1_{\eCC} \in \eCC$ such that $\eCC(X, \1_{\eCC}) = \{y\}.$ where $\abs{y} = \bot.$ It follows that such an object is metrically right trivial.  An object $X \in \UCat{C}$ is said to be \emph{metrically small} if $\UCat{C}(\1_{\eCC},X)$ is metrically small as a complete $\V$-space.

We say that $\UCat{C}$ is a \textbf{Lipschitz category} if $\abs{\id_{\1_{\eCC}}} = \bot.$
\begin{propositionE}
\label{prop:constant_morph}
Let $\UCat{C}$ be a Lipschitz category. Then, for all $X \in \eCC$ and generalised points $p \in \eCC(\1_{\eCC}, X )$, we have $\abs{f} = \bot.$
\end{propositionE} 
\begin{proofE}
By the sub-multiplicative property of the seminorm, we have:
$$
    \abs{p} = \abs{p \circ \id_{\1_{\eCC}}} \leq \abs{p} \qMul \abs{\id_{\1_{\eCC}}}.
$$
Since $\UCat{C}$ is a Lipschitz category, by definition we have $\abs{\id_{\1_{\eCC}}} = \bot$.
Using the absorbing property of the quantale, we know that $x \qMul \bot = \bot$ for any $x$. Thus:
$$
    \abs{p} \leq \abs{p} \qMul \bot = \bot.
$$
\end{proofE}

We say that a morphism $f \in \UCat{C}(X,Y)$ is \textbf{constant} if factors through the terminal object, and so defines an generalised element of $Y$: $\1_{\eCC} \xrightarrow{p} Y$

Recall that a \emph{well-pointed category} is a category with a terminal object $\1$ and such that that if two morphisms $f,g \colon A \to B$ in the category differ, there must be a generalised point $w\colon \1 \to X$ such that $f \circ w \neq g \circ w.$ Lipschitz categories generalise well pointed categories.
\begin{theoremE}
Let $X \in \eCC$ be a metrically small object in a Lipschitz seminorm category such that $\Cat{C}$ is well-pointed. Then for all $Y \in \eCC$, a morphism $f \colon X\to Y$ is constant if and only if it has seminorm $\bot$.
\end{theoremE}
\begin{proofE}
If $f$ factors as $p \circ g$ through a generalised element $p,$ then $g: X \to \1_{\eCC}$ is the unique map to the constant object. $\bot \qMul \abs{g} = \abs{f} \qMul \abs{g} \geq \abs{f}.$ It follows from the absorbing assumption on quantales that $f$ has seminorm $\bot.$

We must show that if the seminorm of $f$ is $\bot$, it factors through the terminal object.
Consider any pair of generalised points $x, y\colon 1 \to X$.
Since $d(x, y) < \top$ by the metrically small assumption, it follows that by the right compatibility axiom, we have
$$
d(f \circ x, f \circ y) \leq \abs{f} \qMul d(x, y) = \bot \qMul d(x, y) = \bot.
$$
By the Hausdorff property of distance spaces, $d(f \circ x, f \circ y) = \bot$ implies $f \circ x = f \circ y$.
Since this holds for all $x, y$, $f$ is constant on generalised points. Let $y\colon 1 \to Y$ be this constant value (i.e., $f \circ u = y$ for all $u$).

Now consider the factorisation $h = y \circ t_X$, where $t_X\colon X \to 1$ is the unique morphism to the terminal object.
To conclude, we use the well-pointed condition. If $h$ and $f$ differ, they must be distinguished by a generalised point $w\colon 1 \to X$.
However,
$$
h \circ w = y \circ t_X \circ w = y \circ \id_1 = y = f \circ w.
$$
Therefore $h$ and $f$ agree on all points, so $h = f$.
Thus $f$ factors through the terminal object, and so is constant.
\end{proofE}

\subsection*{Fixed points in seminorm categories}

We next prove a generalisation of the Banach Fixed Point Theorem native to dilated and seminorm categories.

\begin{theoremE}
\label{thm:banach_fixed_point}
Let $J \in \UCat{C}$ be metrically small object in a seminorm (or dilated) category such that $\UCat{C}(\1_{\eCC}, J)$ is nonempty. Consider the left composition operator $f_\ast\colon \UCat{C}(\1_{\eCC}, J) \to \UCat{C}(\1_{\eCC}, J)$, where $f \in \UCat{C}(J, J)$ is such that~$\Lnorm{f} <\unit$. Then~$f_\ast$ has a unique fixed point~$\fix(f) \in \UCat{C}(\1_{\eCC}, J)$, i.e.~$f(\fix(f)) = \fix(f)$.
Moreover, for any~$x\in \UCat{C}(\1_{\eCC}, J)$,~$\fix(f)$ is characterised by 
\begin{equation*}
\fix(f) = \lim(f^{n}(x))_{n\in\N}.
\end{equation*}
\end{theoremE}
\begin{proof}
By Definition \ref{def:enrichment},  $\UCat{C}(\1_{\eCC}, J)$, is a complete distance space.  We will show that $f_\ast$ is a contraction mapping on the complete metric space $(\UCat{C}(\1_{\eCC}, J), d)$. Let $x, y\in \UCat{C}(\1_{\eCC}, J)$. Then by the fact $J$ is metrically small, one has $d(x, y) < \top.$ So by Definition \ref{def:enrichment}, we have
$$
d(f_\ast x, f_\ast y) \leq \Lnorm{f} \qMul d(x, y)
$$
It follows that $f_\ast\colon \UCat{C}(\1_{\eCC}, J) \to \UCat{C}(\1_{\eCC}, J)$ has seminorm $\Lnorm{f}$ in $\SNorm$. We can therefore apply  \Cref{thm:fpt} to deduce existence and uniqueness of the desired fixed point. 
\end{proof}

\section{Probability Distributions and convolution}
\label{sec:probabilityandconvolution}
\pratendSetLocal{category=probabilityandconvolution}
We pivot from discussing the enrichment proper towards a treatment of the central limit theorem. As a running example, we introduce  \emph{probability functors on finite dimensional vector spaces}. We also introduce the convolution product from the seminorm perspective. 

\subsection*{Probability functors on vector spaces}

 We discuss a metric structure on the probability measures on finite dimensional (f.d.) vector spaces. In this setting, we fix the quantale $\gRm$ where $[0,1]$ is equipped with the Euclidean metric. 

Recall that every f.d.\ vector space is equipped with the standard topology given by open balls, and hence with the Borel $\sigma$-algebra. We can therefore consider the set of probability measures on it. We equip these with the additional structure of extended metric spaces via the Fourier $l$-distance; for more on this choice of metric see~\cite{gabetta95, HW84, carrillo07}.

\begin{definition}
Let $V$ be a f.d.\ vector space equipped with the $l_{\infty}$-norm. The Fourier $l$-distance $d_{l}$ for $l \geq 1$, is a metric on probability distributions on $V$ given by:
$$ d_{l}(\mu, \tau) =  \left( \sup_{t \in V^\ast, t \neq 0} \frac{\left\|  \phi_{\mu}(t)  - \phi_{\tau}(t)  \right\|}{\|t\|^l} \right)^{1/l}$$
Here $\|-\|$ is the $l_1$ norm on $V^\ast$, which is dual to the $l_{\infty}$-norm on $V$, and $\phi_{\mu}(t) $ denotes $\int_{V} e^{-i\langle t,x \rangle} d\mu$, which is usually called the \emph{characteristic function} of $\mu.$
\end{definition}
The constant $l$ in the definition is very important. For the law of large numbers it will be taken to be in the interval $(1, 2)$ and in $(2,3)$ for the central limit theorem. We shall use this norm to define an extended metric structure on suitably well-defined probability measures on vector spaces.
\begin{propositionE}
\label{prop:metric_properties}
 Consider the collection $\mathcal P_l(V)$ of probability measures defined on a finite dimensional  space $V$ with 
$$
\int_V \|x\|^{l } d\mu < \infty.
$$
Then $(\mathcal P_l(V), d_l)$ is an extended metric space. Moreover,  we have that $d(\mu, \tau) < \infty$ if the measures $\mu, \tau$ satisfy
$$
\int_V \|x\|^{n} d\mu  = \int_V \|x\|^{n} d\tau
$$
for all integers $n <l.$
\end{propositionE}
\begin{proofE}
To simplify the computations, we verify the inequality for the $l$-th power of the Fourier distance. This is an extension of the proof of \cite[Proposition 1]{HW84} from $\mathbb R$ to the general case of finite-dimensional vector spaces.  We remark before a strict verification that essentially all the tools of this proof;  Taylor expansion, L\'{e}vy Continuity Theorem and Fatou's lemma - remain valid in $\mathbb{R}^n$, ensuring the result holds in this broader context..

Let $\phi_\mu$ and $\phi_\tau$ denote the characteristic functions of $\mu$ and $\tau$.
The non-negativity and symmetry of $d_l$ follow directly from the properties of the absolute value and the supremum norm. 
The triangle inequality is immediate: for any $\nu$,
$$
\frac{\|\phi_\mu(t) - \phi_\tau(t)\|}{\|t\|^l} \leq \frac{\|\phi_\mu(t) - \phi_\nu(t)\|}{\|t\|^l} + \frac{\|\phi_\nu(t) - \phi_\tau(t)\|}{\|t\|^l}.
$$
Taking the supremum over $t \neq 0$ yields $d_l(\mu, \tau) \leq d_l(\mu, \nu) + d_l(\nu, \tau)$. 
For the identity of indiscernibles axiom, if $d_l(\mu, \tau) = 0$, then $\phi_\mu(t) = \phi_\tau(t)$ for all $t \neq 0$. By continuity of characteristic functions at the origin, $\phi_\mu(0) = \phi_\tau(0) = 1$, implying $\phi_\mu = \phi_\tau$ everywhere. Since characteristic functions uniquely determine the original measure, one can immediately conclude that  $\mu = \tau$.

The difficult step in the proof is to show that $d_l(\mu, \tau) < \infty$ under the moment condition.
Let $n = \lfloor l \rfloor$. Since $\mu$ and $\tau$ have finite moments of order $l$, their characteristic functions are $n$-times differentiable. 

The condition that moments match for all integers $p < l$  implies that the partial derivatives of the characteristic functions at the origin match. Consequently, the Taylor polynomials $Q_n(t)$ of degree $n$ for $\phi_\mu$ and $\phi_\tau$ centered at $0$ are identical.
Using the Taylor remainder theorem for characteristic functions, we have:
$$
|\phi_\mu(t) - Q_n(t)| \leq C \|t\|^l \int_V \|x\|^l d\mu
$$
for some constant $C$ dependent on $l$. A similar bound holds for $\tau$.
Substituting this into the definition of the metric:
\begin{align*}
\frac{|\phi_\mu(t) - \phi_\tau(t)|}{\|t\|^l} &= \frac{|(\phi_\mu(t) - Q_n(t)) - (\phi_\tau(t) - Q_n(t))|}{\|t\|^l} \\
&\leq \frac{|\phi_\mu(t) - Q_n(t)|}{\|t\|^l} + \frac{|\phi_\tau(t) - Q_n(t)|}{\|t\|^l} \\
&\leq C \left( \int_V \|x\|^l d\mu + \int_V \|x\|^l d\tau \right).
\end{align*}
Since the $l$-th moments are finite by assumption, the ratio is bounded uniformly in $t$. Thus $d_l(\mu, \tau) < \infty$.

If the moment condition is not met, the Taylor expansions differ at some order $k < l$, causing the ratio to diverge as $t \to 0$, resulting in an infinite distance. Thus, $(P_l(V), d_l)$ is an extended metric space as required.
\end{proofE}
There is clearly inclusion of sets $\mathcal P_l(V)  \hookrightarrow \mathcal P_k(V) $ if $ l < k.$
\begin{propositionE}
The space $\mathcal P_n(V)$ is a complete metric space if $n$ is an integer. Moreover the completion of $\mathcal P_l(V)$ is a subspace of $\mathcal P_{\lfloor l \rfloor}(V)$, where $\lfloor l \rfloor$ is $l$ rounded down to an integer.
\end{propositionE}
From the previous result, one may construct a pair of functors.
\begin{proofE}
We follow the proof of \cite[Proposition 2]{HW84}. 

First, assume $n$ is an integer. Let $\{\mu_k\}_{k \in \mathbb{N}}$ be a Cauchy sequence in $(P_n(V), d_n)$.
Let $\phi_k$ denote the characteristic function of $\mu_k$.
For any fixed $\epsilon > 0$, there exists $K$ such that for $k, m \geq K$,
$$
\frac{|\phi_k(t) - \phi_m(t)|}{\|t\|^n} < \epsilon
$$
for all $t \neq 0$. This implies that $\{\phi_k(t)\}$ is a Cauchy sequence of complex numbers for each $t$, so it converges pointwise to a function $\phi(t)$

Since $d_n(\mu_k, \mu_m) < \infty$, the measures share the same moments up to order $n$ (see the proof of Proposition \ref{prop:metric_properties}). Let $Q(t)$ be the Taylor polynomial of degree $n$ corresponding to these moments. Note that $Q(0)=1$ and the higher order terms account for the derivatives at zero.
Define the remainder functions for $t \neq 0$:
$$
h_k(t) = \frac{\phi_k(t) - Q(t)}{\|t\|^n}.
$$
From the Cauchy condition on $d_n$, the sequence $\{h_k\}$ is uniformly Cauchy on $V \setminus \{0\}$. Thus, it converges uniformly to a continuous function $h(t)$.
We can write the limit characteristic function as:
$$
\phi(t) = Q(t) + \|t\|^n h(t).
$$
Since $Q(t)$ is a polynomial and $\|t\|^n h(t) \to 0$ as $t \to 0$ it follows that, $\phi(t)$ is continuous at the origin with $\phi(0)=1$.
By the Lévy Continuity Theorem on finite-dimensional vector spaces, $\phi$ is the characteristic function of a probability measure $\mu$.
Since $\mu_k \to \mu$ weakly, by Fatou's lemma we have:
$$
\int_V \|x\|^n d\mu \leq \liminf_{k \to \infty} \int_V \|x\|^n d\mu_k < \infty.
$$
Thus $\mu \in P_n(V)$. The convergence $d_n(\mu_k, \mu) \to 0$ follows from the uniform convergence of $h_k \to h$. Hence $P_n(V)$ is complete.

The second statement is immediate since a Cauchy sequence in $P_{l}$ is also Cauchy in $P_{\lfloor l \rfloor}$.
\end{proofE}
\begin{propositionE}
\label{prop:eqipped_with_metric}
Let $\Prob(V)$ be the completion of the metric space $P_l(V)$ inside $P_{\lfloor l \rfloor}(V).$
This defines a functor $\Prob\colon  \FinVect \to \kCMet$ and $\ProbZ \colon  \FinVect \to \kCMet$.
\end{propositionE}

\begin{proofE}
Given a function $f\colon  V \to W$ between f.d.\ vector spaces and $l \geq 1$, define $\Prob(f)\colon  \Prob(V) \to \Prob(W)$ to be the function $\Prob(f)(\mu) = [U \mapsto \mu(f^{-1}U)]$ for all $\mu \in  \Prob(V).$  It suffices to verify that the image of a measure with finite $l$-th moments under $\Prob(f)$ stays within also has finite $l$-th moments. We can then complete the metric space and $\Prob(f)$ will extend to the completions.

Let $f\colon  V \to W$ be a morphism in $\eFinVect$ and let $L = \Lnorm{f}$ denote its Lipschitz seminorm. 
By the definition of the seminorm, for any $x \in V$, we have $\abs{f(x)}_W \leq L\abs{x}_V$. 
Raising this to the power of $l \geq 1$, we obtain $\abs{f(x)}_W^l \leq L^l\abs{x}_V^l$.
Integrating with respect to $\mu \in \Prob(V)$ yields:
\begin{multline*}
\int_W \abs{y}_W^l \, d(f_*\mu)(y) = \int_V \abs{f(x)}_W^l  d\mu(x) \le L^l \int_V \abs{x}_V^l d\mu(x)  
\end{multline*}
As $\int_V \abs{x}_V^l \, d\mu(x)$ is finite by the assumption that $\mu \in \Prob(V)$, it follows that $f_*\mu$ has a finite $l$-th moment. 
Thus, the functor $\Prob$ is well-defined on the objects of the category.

This metric space $\Prob$ is equipped with a rescaling operator via the map $\act\colon  (c, \mu) \mapsto \Prob(x \mapsto c \cdot x)(\mu)$ where $\cdot$ is the usual scalar action on $x$ coming from the vector space structure. We must verify that this operation is compatible with the metric when we choose the Fourier $l$-distance ($d_l$). The characteristic function of a rescaled measure $S_c(\mu)$ is $\phi_{S_c(\mu)}(t) = \phi_\mu(ct)$. Therefore, the distance transforms as:
\begin{align*}
d_l(c\act \mu, c\act \tau) &= \left(\sup_{t \neq 0} \frac{|\phi_\mu(c^{-l/l}t) - \phi_\tau(c^{-l/l}t)|}{|t|^l}\right)^{1/l} \\
&= \left(\sup_{u \neq 0} \frac{|\phi_\mu(u) - \phi_\tau(u)|}{|c^{-1/l} t|^l}\right)^{1/l} \quad (\text{letting } u = c^{-1/l}t) \\
&= \left( c^l \sup_{u \neq 0} \frac{|\phi_\mu(u) - \phi_\tau(u)|}{|u|} \right)^l= c\cdot d_l(\mu, \tau)
\end{align*}
The metric space $ \Prob(V)$ thus satisfies the condition $d(c \act x, c\act y) = c \act d(x, y)$ required for the morphisms in $\ekCMet$.
\end{proofE}
\begin{remark}
 The dilation action on $\Prob(V)$ is given by lifting the scalar multiplication using functoriality:
\begin{equation*}
  c \act \mu = \Prob(c \cdot -)(\mu)
\end{equation*}
\end{remark}
Checking the properties of this functor, we can verify the following.
 \begin{propositionE}
\label{prop:monoidal_closed}
The functor $\Prob \colon\eFinVect \to \ekCMet$ is a dilated functor.
\end{propositionE}
\begin{proofE}
Let $f\colon V \to W$ be a linear map between finite-dimensional vector spaces with Lipschitz seminorm $\abs{f}$.  For any two probability measures $\mu, \nu \in \Prob(V)$,we have
$$
d_l(\Prob(f)(\mu), \Prob(f)(\nu)) = \sup_{t \in W \setminus \{0\}} \frac{|\phi_{\Prob(f)(\mu)}(t) - \phi_{\Prob(f)(\nu)}(t)|}{\|t\|^l}
$$
We have  $\phi_{\Prob(f)(\mu)}(t) = \phi_\mu(f(t))$, and we substitute this into the distance formula:
$$
d_l(\Prob(f)(\mu), \Prob(f)(\nu)) = \left( \sup_{t \neq 0} \frac{|\phi_\mu(f(t)) - \phi_\nu(f(t))|}{\|t\|^l} \right)^{1/l}
$$
Let $u = f(t)$. By the definition of the seminorm, we have $\|u\| = \|f(t)\| \leq \abs{f} \|t\|$. For $u \neq 0$, this implies $\|t\| \geq \frac{\|u\|}{\abs{f}}$. Since $l \geq 1$, raising to the power of $-l$ reverses the inequality:
$$
\frac{1}{\|t\|^l} \leq \frac{\abs{f}^l}{\|u\|^l}
$$
Substituting this bound into the supremum, we obtain:
$$
d_l(\Prob(f)(\mu), \Prob(f)(\nu)) \leq  \left( \sup_{u \neq 0} \left( \abs{f}^l \frac{|\phi_\mu(u) - \phi_\nu(u)|}{\|u\|^l} \right) \right)^{1/l}= \abs{f} \cdot d_l(\mu, \nu)
$$
This inequality demonstrates that the map $\Prob(f)$ is Lipschitz continuous with seminorm bounded by $\abs{f}$:
$$
|\Prob(f)| \leq \abs{f}
$$
It follows that $|\Prob(f)| \leq \abs{f}$, verifying that $\Prob$ is a dilated functor.
\end{proofE}
\subsection*{Convolution in Seminorm-categories}
\label{sec:convolution}
There is very natural operation on probability measures $\mu_1, \mu_2$ over a vector space $V$ called the \emph{convolution product} $\mu_1 \ast \mu_2$. This gives the distribution of the sum of two independent random variables drawn from $\mu_1$ and $\mu_2$. Explicitly, this is given by the formula
\begin{equation}
\label{eq1}
(\mu_1\ast \mu_2)(U) := \int_{V\times V} \chi_U(x+y)d\mu_1(x)d\mu_2(y)
\end{equation}
where $\chi_U$ is the indicator function for the the open set $U \subset V$, see \cite[Section 20]{Billingsley12} for more details.
We turn our attention to isolating the essential structural properties of this convolution product in the context of seminorm categories. 

To keep the exposition accessible, we take a concrete approach to enriched monoidal categories, introducing only the minimum machinery required for our results.  A \emph{monoidal dilated category} (resp.\ \emph{monoidal seminorm category} ) $(\UCat{C}, \ptens)$ is a dilated category (resp.\ seminorm category) with a tensor product $\ptens$ where all of the associator and unitor maps all have seminorm $e$ and are therefore nonexpansive. This is easily proven for all of the examples that follows, as these monoidal structures are Cartesian, and structure maps are isometries in the underlying category and hence have seminorm $e$.

A monoidal dilated (or seminorm) category $\UCat{C}$ is \emph{pointwise additive} if each object $X \in\ob{\UCat{C}}$ comes equipped with an \emph{addition morphism} $+ \from  X \ptens X \to X$ in the underlying concrete category. This morphism is not required to satisfy any properties with respect to the enrichment.
 
\begin{example}
In $\eFinVect$ the monoidal structure is the Cartesian product. The associator and unitor maps are all isometries and hence have seminorm $e$.  The pointwise addition is the usual vector addition. Note that the addition morphism here is expansive: $+ \from  X \ptens X \to X$ has seminorm 2.
\end{example}
\begin{example}
In $\eCMet$ the monoidal structure is induced by the usual Cartesian monoidal structure on $\CMet$, that is, the product is equipped with the maximum metric. As all the associator and unitor maps are all isometries, they have seminorm $e$ as required. The category $\ekCMet$ is an Eilenberg-Moore category of a adjoint monad on a symmetric monoidal category with coequalisers and so by \cite[Theorem 2.1]{Keigher78} it is itself symmetric monoidal.
\end{example}
The \emph{linear maps} in a pointwise additive seminorm category are precisely the maps that are equivariant with respect to the addition morphisms
  ie.\ morphism $f \in \UCat{C}(A, B)$ is \emph{linear} if the following diagram commutes:
  \begin{equation*}
    \begin{tikzcd}
      A \ptens A \arrow[r, "+"] \arrow[d, "f \ptens f"'] & A \arrow[d, "f"] \\
      B \ptens B \arrow[r, "+"] & B
    \end{tikzcd}
  \end{equation*}
  The \emph{linear subcategory} of $\UCat{C}$, denoted $\lin{\UCat{C}}$, is the seminorm category with the same objects as $\UCat{C}$, but whose morphisms are the linear morphisms of $\UCat{C}$. We note that $\lin{\UCat{C}}$ exists, as it can be constructed using equalisers in $\DNorm$.

\begin{example}
The subcategory $\lin{\eEBan}$ has the same objects, but its morphisms are the additive maps between Banach spaces. The maps are already assumed homogeneous, so $\lin{\eEBan}$ is the category of Banach spaces and bounded linear maps between them.
\end{example}
  Let $\UCat{C}$ and $\eDC$ be two monoidal seminorm-categories.
  A \emph{(lax) monoidal dilated functor} $\UCat{C} \to \eDC$ is a triple $(\enr{F}, \eta, \mu)$, where $\enr{F} \from \UCat{C} \to \eDC$ is an dilated functor and $(F, \eta, \mu)$ is a lax monoidal functor $\CC \to \DC$ between the underlying monoidal categories such that for all $X, Y \in \UCat{C}$, we have $| \mu_{X, Y}|_{\UCat{D}} \le e$ and $|\eta_V|_{\UCat{D}} \le e$.
  
 This definition of lax monoidal enriched functor is a special case of the usual one.
 
  \begin{propositionE}
  \label{Prob_is_monoidal_closed}
For $l \geq 1,$ the functor
$$
\left(\Prob \colon\eFinVect \to \ekCMet,  \operatorname{ind}, u\right)
$$
is a monoidal dilated functor, where the lax monoidal map is given by the product of independent probability measures:
$$ \operatorname{ind}\colon  \Prob V \times \Prob W \to \Prob(V \times W)$$
$$(\mu_1, \mu_2) \mapsto [U \times V \mapsto \mu_1(U)\mu_2(V)],$$
 and the unit map $u\colon  1 \to \Prob(1)$ is given by $0 \mapsto \delta_0$
where $\delta_0$ is the Dirac delta distribution on $0$.
  \end{propositionE}
\begin{proofE}
  To establish that $(\Prob, \operatorname{ind}, u)$ is a lax monoidal dilated functor, we must verify the associativity and unitality axioms for the lax structure and show that the structural morphisms are non-expansive.

  We first show that the multiplication $\operatorname{ind}$ is associative. This amounts to proving that the following diagram commutes for any $U, V, W \in \eFinVect$
  \begin{equation*}
    \begin{tikzcd}
      \Prob(U) \times \Prob(V) \times \Prob(W) 
        \arrow[r, "\operatorname{ind}_{U,V} \times \id"] 
        \arrow[d, "\id \times \operatorname{ind}_{V,W}"'] 
      & \Prob(U \times V) \times \Prob(W) 
        \arrow[d, "\operatorname{ind}_{U \times V, W}"] 
      \\
      \Prob(U) \times \Prob(V \times W) 
        \arrow[r, "\operatorname{ind}_{U, V \times W}"] 
      & \Prob(U \times V \times W)
    \end{tikzcd}
  \end{equation*}
For convenience, we have omitted the associators of the Cartesian product.
  
  Let $(\mu_U, \mu_V, \mu_W)$ be probability measures on $U, V, W$ respectively.
  Following the top path, we first form the product measure $\mu_{U} \tens \mu_{V}$ on $U \times V$, and then form $(\mu_{U} \tens \mu_{V}) \tens \mu_{W}$ on $(U \times V) \times W$
  Following the bottom path, we obtain $\mu_{U} \tens (\mu_{V} \tens \mu_{W})$.
  By the associativity of the product measure construction, ie.\ multiplication on $[0,1]$ these define the same measure on the product space $U \times V \times W$. Thus the diagram commutes.

 Let $1 = \{0\}$ be the zero vector space, which acts as the unit for the Cartesian monoidal structure on $\eFinVect$. The unit map $u\colon  1 \to \Prob(1)$ is given by $0 \mapsto \delta_0$ We must show compatibility with the left unitor $\lambda\colon  1 \times V \to V$. The required diagram is:
  \begin{equation*}
    \begin{tikzcd}
      1 \times \Prob(V) \arrow[r, "u \times \id"] \arrow[dr, "\lambda_{\Prob(V)}"'] 
      & \Prob(1) \times \Prob(V) \arrow[r, "\operatorname{ind}_{1, V}"] 
      & \Prob(1 \times V) \arrow[d, "\Prob(\lambda_V)"] 
      \\
      & \Prob(V) \arrow[r, "\id"] & \Prob(V)
    \end{tikzcd}
  \end{equation*}
  Let $\mu \in \Prob(V)$. The top path maps $(0, \mu)$ to $\delta_0 \tens \mu$. The pushforward of this product measure along the isomorphism $\lambda_V\colon  \{0\} \times V \cong V$ is exactly $\mu$, since for any measurable set $A \subseteq V$:
  $$
  (\Prob(\lambda_V)(\delta_0 \tens \mu))(A) = (\delta_0 \tens \mu)(\{0\} \times A) = \delta_0(\{0\}) \cdot \mu(A) = \mu(A).
  $$
  where we use $\delta_0(\{0\})= 1$.
  The verification for the right unitor is exactly the same.
  
  Finally, we need to show that the lax monoidal map $\operatorname{ind}$ has seminorm bounded by $1$. 
Let $(\mu_1, \nu_1)$ and $(\mu_2, \nu_2)$ be elements of $\Prob(U) \times \Prob(V)$.
Recall that the metric on the product space is the maximum metric
$$
D := d_{\Prob(U) \times \Prob(V)}\left((\mu_1, \nu_1), (\mu_2, \nu_2)\right) = \max\left( d_l(\mu_1, \mu_2), \, d_l(\nu_1, \nu_2) \right).
$$
We can bound the Fourier $l$-distance in the codomain:
\begin{equation}
\label{bound}
d_{l}(\mu_1 \tens \nu_1, \mu_2 \tens \nu_2) = \left( \sup_{(u, v) \neq 0} \frac{\left| \phi_{\mu_1}(u)\phi_{\nu_1}(v) - \phi_{\mu_2}(u)\phi_{\nu_2}(v) \right|}{\left|(u, v)\right|^l}. \right)^{1/l}
\end{equation}
Here, we are implicitly using Fubini's theorem to separate the integrals. Specifically, for the product measure $\mu_1 \tens \nu_1$ on $V \times W$, the characteristic function is:
\begin{multline*}
\phi_{\mu_1 \tens \nu_1}(u, v)= \int_{V \times W} e^{-i(\langle U, Z \rangle)} d(\mu_1 \tens \nu_1)(Z)
\\
 = \int_{V \times W} e^{-i(\langle u, x\rangle + \langle v, y\rangle)} d(\mu_1 \tens \nu_1)(x, y)
\end{multline*}
Here, we decomposed the inner product $\langle U, Z\rangle$ in $V\times W$ as follows. First, we split $U = u+v$ where $u\in V\times 0_W$ and $v = 0_{V}\times W.$ Then we can split $Z = x+y$ in the same way. Expanding this and noticing orthogonal inner products vanish, we obtain $\langle u, x\rangle + \langle v, y\rangle) = \langle U, Z\rangle$.
By Fubini's theorem, we can split this double integral:
\begin{align*}
\phi_{\mu_1 \tens \nu_1}(u, v) 
&= \int_V \int_W e^{-i\langle u, x\rangle} e^{-i\langle v, y\rangle} \, d\nu_1(y) \, d\mu_1(x) \\
&= \int_V e^{-i\langle u, x\rangle} d\mu_1(x) \cdot \int_W e^{-i\langle v, y\rangle} d\nu_1(y) \\
&= \phi_{\mu_1}(u) \cdot \phi_{\nu_1}(v).
\end{align*}
 We apply the triangle inequality the numerator of the right hand side of (\ref{bound}), and use the bound $|\phi| \leq 1$ (the usual one for characteristic functions, but proven in the proof of Theorem \ref{prop:LLN}):
\begin{multline*}
\left| \phi_{\mu_1}\phi_{\nu_1} - \phi_{\mu_2}\phi_{\nu_2} \right| 
\\ \leq |\phi_{\mu_1}(u)| \cdot |\phi_{\nu_1}(v) - \phi_{\nu_2}(v)| + |\phi_{\nu_2}(v)| \cdot |\phi_{\mu_1}(u) - \phi_{\mu_2}(u)| \\
\leq 1 \cdot d_l(\nu_1, \nu_2)^l |v|^l + 1 \cdot d_l(\mu_1, \mu_2)^ l|u|^l 
\\ \leq D^l \left( |u|^l + |v|^l \right).
\end{multline*}
To go from the second to the third line, we used the inequality
$$
 |\phi_{\nu_1}(v) - \phi_{\nu_2}(v)| \leq d_l(\nu_1, \nu_2)^l|v|^l 
$$
which we obtain by raising Fourier distance to the power of $l$ and observing that any value is less than the supremum.

Substituting this calculation into the Equation (\ref{bound}), we can compute the supremum of the the left hand side as \textbf{(here we use the fact that we have the $l_1$-norm on the dual space)}:
$$
\frac{d_{l}(\mu_1 \tens \nu_1, \mu_2 \tens \nu_2)}{D} \leq \left( \sup_{(u,v) \neq 0} \frac{|u|^l + |v|^l}{(|u|^l+ |v|^l)} \right)^{1/l}
$$
so it follows that $|\operatorname{ind}| \leq 1$.
\end{proofE}
Given a lax monoidal dilated functor $(\enr{F} \from  \UCat{C} \to \eDC, \eta, \mu)$ where the domain $\UCat{C}$ is pointwise additive category, the \emph{convolution product} $\conv_{X}$ is the pushforward of the addition on $\UCat{C}$ to a product on the essential image of $F$ in $\eDC$, given by
  \begin{equation*}
    \conv^{\enr{F}}_{X} \colon  \enr{F}X \ptens \enr{F}X \xrightarrow{\mu_{\enr{F}X, \enr{F}X}} \enr{F}(X\ptens X) \xrightarrow{\enr{F}(+_X)} \enr{F}(X).
  \end{equation*}
  \begin{propositionE}
  \label{prop:axiomisedconvolution}
 Let  $\Prob\colon  \eFinVect \to \ekCMet$. Given a finite dimensional vector space $V$, the above convolution product
$$
\Prob(V) \times \Prob(V) \to  \Prob(V)
$$
is given by the formula
$$
(\mu_1\ast \mu_2)(U) = \int_{(x,y) \in V\times V} \chi_U(x+y)d\mu_1(x)d\mu_2(y).
$$ 
  \end{propositionE}
  \begin{proofE}
 The tensor product in $\eEBan$ is $\times.$ By definition, the functor 
$$
\Prob(V) \times \Prob(V)  \xrightarrow{} \Prob(V\times V)
 $$
has the formula
$$
(\mu_1, \mu_2) \mapsto \left[ U \mapsto \int_{(x,y) \in V\times V} \chi_U(x,y) \mu_1(x)\mu_2(y) \right]
$$
for $U \subseteq V\times V $
Similarly, 
 $$
 \Prob (V\times V) \xrightarrow{\Prob(V) (+_V)} \Prob (V).
 $$
 has the formula
 $$
 \mu \mapsto \left[ U \mapsto  \int_{(x,y) \in V\times V} \chi_U(x+y)d\mu_1(x, y)d\mu \right]
 $$
 for $U \subseteq V$.
 Combining both, we obtain the desired formula for the convolution product.
\end{proofE}
   For each $X\in \UCat{C}$ and some choice of \emph{diagonal map} $\Delta_F \to 
   \Delta_F \times  \Delta_F$, we may define the \emph{$\enr{F}$-convolution operator} as the pre-composition of this with a diagonal map.

For each $X$ in a dilated category $\UCat{C}$ and some choice of \emph{diagonal map} $\Delta_{\enr{F}} \to \Delta_{\enr{F}} \ptens \Delta_{\enr{F}}$, we may define the \emph{$\enr{F}$-convolution operator} denoted $\theta^{\enr{F}}_X$ as the pre-composition of this with a diagonal map:
\begin{equation*}
\theta^{\enr{F}}_X\colon \enr{F}X \xrightarrow{\Delta_{\enr{F}} } \enr{F}X\ptens \enr{F}X\xrightarrow{\mu_{\enr{F}X, \enr{F}X}} \enr{F}(X\ptens X) \xrightarrow{\enr{F}(+_X)} \enr{F}X.
\end{equation*}
If $[\abs{\theta^{\enr{G}}_X},e] \act \theta^{\enr{G}}_X = \id_{ \enr{G} X}$, we say that the $\enr{G}$-convolution operator is \emph{perfectly rescalable}. To lighten the visual burden of notation, we shall normally denote the seminorm $\abs{\theta^{\enr{G}}_X} = c \in \V$ and call it the \emph{grading constant}. Note, however, that it is an invariant of $\enr{G}$ and not additional data.

\begin{example}
Consider the Cartesian diagonal on $\ekCMet$, the functor $\lvert-\rvert$ is perfectly rescalable with grading constant $2$. To see this, observe that the convolution operator:%
\begin{equation*}
\theta^{|-|}_V:   | V|  \xrightarrow{\Delta} |V |  \times | V | \xrightarrow{\Delta} |V\times V | \xrightarrow{|+|} |V |
\end{equation*}
simply takes an element $x \in |V|$ and returns  $2x \in |V|$.  So if we rescale by $\frac{1}{2} = [2, 1]$, we get the identity map. Similarly, if we consider the functor $\PosDef$, a similar calculation gives a matrix $M$ is sent to a matrix $\sqrt{2}M.$ So it has grading constant $\sqrt{2}$.

\end{example}
   \subsection*{Expectation}
We are going to describe a categorification of the classical notions of expectation of a probability distribution.

Let $|-| \colon \eFinVect \to\ekCMet$ be the forgetful functor, $|V| = V$.  This is a monoidal enriched dilated functor as the carrier of all the monoidal structure maps on $\FinVect$ is the diagrams in $\kCMet$.
 \begin{example}
The addition on  $\FinVect $ is sent to the carrier of the addition as a metric map in $\kCMet.$
\end{example}
Recall every probability measure $\mu$ on a f.d.\ vector space $V$  has expected value
$$
\exp[\mu] = \int_V x d\mu(x) \in V.
$$
It is straightforward to show the following.
\begin{proposition}
For $l \in (1,2),$ the map $\exp_V\colon  \Prob V \to |V|$ built by sending a probability measure $\mu$ to its expectation defines a natural transformation from $\Prob \to |-|$. We call this the \textbf{expectation natural transformation}.,
  \end{proposition}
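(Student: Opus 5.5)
The plan has three parts: show that each component $\exp_V$ is well defined, show that it is a morphism of $\ekCMet$ (nonexpansive, compatible with the dilation action and base points), and then check naturality in $V$.

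\textbf{Well-definedness and naturality.} For $l\in(1,2)$ every $\mu\in\mathcal P_l(V)$ has a finite first moment, because $\|x\|\le 1+\|x\|^l$. Hence $\exp[\mu]=\int_V x\,d\mu(x)$ exists in $V$. Naturality is then the change-of-variables formula. For a linear $f\colon V\to W$, linearity of the integral gives
\[
\exp_W[\Prob(f)(\mu)]=\int_W y\,d(f_*\mu)(y)=\int_V f(x)\,d\mu(x)=f\Bigl(\int_V x\,d\mu(x)\Bigr)=|f|(\exp_V[\mu]).
\]
So the square with $\Prob(f)$ and $|f|$ commutes on $\mathcal P_l(V)$. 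The same step gives equivariance: taking $f=c\cdot-$, with $\Prob(c\cdot -)(\mu)=c\act\mu$, yields $\exp[c\act\mu]=c\cdot\exp[\mu]$. For $c=0$ this says the conical base point $\delta_0$ is sent to $0\in V$.

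\textbf{Nonexpansiveness.} Here I would use characteristic functions. Since $d_l$ is built from characteristic functions, the key fact is that the first moment appears in the first-order Taylor term: $\nabla\phi_\mu(0)=-i\,\exp[\mu]$.
\begin{itemize}
\item If $\exp[\mu]\neq\exp[\tau]$, then $|\phi_\mu(t)-\phi_\tau(t)|$ is of order $\|t\|$ near $0$. Dividing by $\|t\|^l$ with $l>1$, the ratio diverges, so $d_l(\mu,\tau)=\infty$, as in the final remark of the proof of \cref{prop:metric_properties}. The required inequality $d(\exp\mu,\exp\tau)\le d_l(\mu,\tau)$ is then trivial.
\item If $\exp[\mu]=\exp[\tau]$, the left-hand side is $0$.
\end{itemize}
Thus $\exp_V$ is nonexpansive for the extended metric, and in fact it is constant on each finite-distance component.

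\textbf{Extension to the completion (main obstacle).} The remaining work is to extend all of the above from $\mathcal P_l(V)$ to its completion $\Prob(V)\subseteq\mathcal P_1(V)$. I would note that completion points are limits of geometric sequences at finite mutual distance. By the bullet above, such sequences have constant expectation. I would therefore define $\exp$ on a limit point to be this common value. Alternatively, I would observe that $\Prob(V)\subseteq\mathcal P_{\lfloor l\rfloor}(V)=\mathcal P_1(V)$, where the integral still makes sense, and check that the integral agrees with the common value. The uniform convergence of the remainder functions $h_k$ from the completeness proof preserves the first-order Taylor coefficient, so expectations pass to the limit. Once $\exp$ is defined and nonexpansive on the whole completion, naturality and equivariance extend by continuity, using \cref{lem:convergence}. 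This limit step is where I expect to spend the most care.
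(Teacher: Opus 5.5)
Your proposal is correct and takes essentially the same route as the paper, whose argument appears inside the proof that $(\Prob,|-|,\exp)$ is a Cartesian pre-CLT system. There, naturality is the identity $f\bigl(\int_V x\,d\mu\bigr)=\int_V f(x)\,d\mu$ for linear $f$, and nonexpansiveness holds because for $l>1$ distinct expectations force $d_l(\mu,\nu)=\infty$. Your checks of well-definedness, of equivariance and base points, and of the extension to the completion $\Prob(V)\subseteq\mathcal P_1(V)$ fill in details the paper leaves implicit.
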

 \subsection*{Variance}

Recall from basic probability theory that the \textbf{variance matrix} of a probability measure on $\mathbb R^n$ with expectation 0 is the matrix $\Var_V(\mu)$ with entries given by 
$$ (\Var_V(\mu))_{ij} = \int_V x_i x_j d\mu.$$
This is a \textbf{symmetric positive semidefinite matrix.} We shall use this to define the $\ProbZ$ functor. This is a restriction of the $\Prob$ functor which assigns to finite vector spaces $V$, the set of probability measures over $V$ which have expectation zero and 
$$
\int \|x \| ^{\lambda} d\mu < \infty \mbox{ for some } \lambda > 2.
$$
This functor has a grading over the functor $\PosDef\colon \eFinVect \to \ekCMet$ which assigns to every vector space $V$ the set of \textbf{symmetric positive semidefinite matrices} $\PosDef(V)$ on it and acts on linear maps as $\PosDef(f) = f(-)f^T$. The metric on  $\PosDef(V)$ is the Bures-Wasserstein distance~\cite{bhatia19} given by the formula
$$
d(A,B) = \left( \operatorname{tr}(A) +\operatorname{tr}(B) -2 \operatorname{tr}\left(A^{1/2}B A^{1/2}\right)^{1/2}\right)^{1/2}
$$
where $A^{1/2}$ is the unique positive semidefinite square root of $A$.
For example, in the one dimensional case, this is $d(x,y) = |\sqrt{x}-\sqrt{y} |$.
The set $\PosDef(V)$ admits an action of $\mathbb R^+$ as $r \act M = r^2 \cdot M$, where $\cdot$ is the usual scalar action on matrices.  It has a monoidal structure given by forming sending $(M,N)$ to the matrix with top-left-corner $M$ and bottom-right corner $N$.
\begin{proposition}
With the definitions as above, $\PosDef(V)$ is an element of $\ekCMet$.
\end{proposition}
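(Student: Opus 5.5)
To show that $\PosDef(V)$ is an object of $\ekCMet$, I will check three things: it is a pointed complete metric space under the Bures--Wasserstein distance, it carries an action of $\mathbb R^+$ that satisfies the algebra axioms for the monad $[0,1]\wedge(-)$, and that action scales distances linearly, $d(r\act A, r\act B) = r\, d(A,B)$. The base point is the zero matrix.

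\textbf{Metric and completeness.} I will use the standard identification of the Bures--Wasserstein distance with the $2$-Wasserstein distance between centred Gaussians $\mathcal N(0,A)$ and $\mathcal N(0,B)$, or equivalently with
\begin{equation*}
d(A,B) = \min_{U \text{ unitary}} \|A^{1/2} - B^{1/2}U\|_F,
\end{equation*}
as in \cite{bhatia19}. This gives symmetry and the triangle inequality at once, and $d(A,B)=0$ forces $A=B$. For completeness, take a Cauchy sequence $(A_n)$. Then $\operatorname{tr}(A_n) = d(A_n,0)^2$ is bounded, so the sequence lies in a compact subset of the closed cone of positive semidefinite matrices. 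Any subsequential limit is a limit of the whole sequence, because $d$ is continuous in the Euclidean topology and the sequence is Cauchy. The limit stays positive semidefinite because the cone is closed. By \cref{expl:ERquantale}, Cauchy completeness then gives geometric completeness.

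\textbf{Action.} With $r\act M = r^2 M$, the unit and associativity laws are immediate: $1\act M = M$ and $r\act(s\act M) = r^2s^2M = (rs)\act M$. The action is conical because $0\act M = 0$ is the base point. For the scaling law, note that $(r^2A)^{1/2} = rA^{1/2}$. Each of the three terms under the square root in the formula for $d$ is then multiplied by $r^2$: the traces of $r^2A$ and $r^2B$ clearly are, and
\begin{equation*}
\operatorname{tr}\bigl((rA^{1/2})\,(r^2B)\,(rA^{1/2})\bigr)^{1/2} = r^2 \operatorname{tr}\bigl(A^{1/2}BA^{1/2}\bigr)^{1/2}.
\end{equation*}
Taking the outer square root gives $d(r\act A, r\act B) = r\,d(A,B)$, which also shows that the action is nonexpansive for $r\le 1$. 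The metric continuity of the action map $[0,1]\times\PosDef(V)\to\PosDef(V)$ needed for the algebra structure follows from the same scaling together with $d(r^2A, s^2A) = |r-s|\sqrt{\operatorname{tr}A}$. This last identity holds because $r^2A$ and $s^2A$ commute, and commuting positive matrices satisfy $d(A,B)=\|A^{1/2}-B^{1/2}\|_F$.

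The main obstacle is the metric step, in particular the triangle inequality and completeness. I will not redo these by direct matrix analysis; I will import them from the Wasserstein/Gaussian characterisation in \cite{bhatia19}. The remaining verifications are routine algebra.
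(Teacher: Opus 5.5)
Your proposal is correct, and its key step is exactly the paper's argument: since $(r^2A)^{1/2}=rA^{1/2}$, every trace term in the Bures--Wasserstein formula scales by $r^2$, which gives $d(r\act A, r\act B)=r\,d(A,B)$. The paper's proof checks only this scaling identity, so your verification of the metric axioms, completeness, conicality and the action laws supplies details the paper takes for granted.
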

\begin{proofE}
It suffices to verify that
$$
d(c\act A, c\act B) = c \cdot d(A, B)
$$
One has the general formula $c \operatorname{tr}(M) = \operatorname{tr}(c\cdot M)$. Therefore
\begin{multline*}
d(c\act A,c\act B) = \left(c^2 \operatorname{tr}(A) + c^2\operatorname{tr}(B) -2 c^2 \operatorname{tr}\left(A^{1/2}B A^{1/2})^{1/2}\right)\right)^{1/2} \\= c\left( \operatorname{tr}(A) +\operatorname{tr}(B) -2 \operatorname{tr}\left(A^{1/2}B A^{1/2})^{1/2}\right)\right)^{1/2}
\end{multline*}
as required.
\end{proofE}
In fact, it defines an monoidal dilated functor.
\begin{propositionE}
The functor $\PosDef: \FinVect \to \kCMet$ extends to a lax monoidal dilated functor $\enr{\PosDef}(V): \eFinVect \to \ekCMet$.
\end{propositionE}
\begin{proofE}
To establish this, we first show that $\PosDef$ is a dilated functor and then it carries a lax monoidal dilated structure.

We must verify the inequality $\abs{\PosDef(f)} \leq \abs{f}$. Let $M, N \in \PosDef(V)$.
The Bures-Wasserstein distance satisfies
$$
d(f M f^T, f N f^T) \leq \abs{f} \cdot d(M, N)
$$
where $\abs{f}$ is the Lipschitz seminorm of $f$. This is because (see  \cite{bhatia19}) it is the same as the Wasserstein distance between the normal distributions with $M$ and $N$ as variance matrices. To conclude, observe (or see \cite{tommy1996q}) that the Wasserstein distance satisfies the inequality
$$
W_p(f_\ast \mu, f_\ast \nu) \leq \abs{f}_{op} \cdot W_p(f_\ast \mu, f_\ast \nu).
$$
for all $p$ and note the pushfoward of a Gaussian along a linear map transforms its variance matrix as $fMF^T.$
Thus, we have:
$$
|\PosDef(f)| \leq \abs{f}.
$$
Therefore $\PosDef$ is a dilated functor.

We have not yet defined the lax monoidal structure maps for $\PosDef$ so we do that next. 
The unit in $\eFinVect$ is the zero vector space $\{0\}$. The space $\PosDef(\{0\})$ is the trivial one-point metric space. There is thus a unique map $w\colon \1_{\ekCMet} \to \PosDef(\{0\})$, which we take to be the unit. 

  The monoidal map $\mu_{V,W}\colon \PosDef(V) \times \PosDef(W) \to \PosDef(V \times W)$ is the block diagonal embedding:
    $$
    (M, N) \mapsto \begin{pmatrix} M & 0 \\ 0 & N \end{pmatrix}
    $$
The distance between block diagonal matrices splits as follows as we can decompose the top and bottom corners separately:
$$
d\left(\begin{pmatrix} M_1 & 0 \\ 0 & N_1 \end{pmatrix}, \begin{pmatrix} M_2 & 0 \\ 0 & N_2 \end{pmatrix}\right)^2 = d(M_1, M_2)^2 + d(N_1, N_2)^2
$$
This means that as
$$
d(M_1, M_2)^2 + d(N_1, N_2)^2 \leq 2 \max(d(M_1, M_2)^2, d(N_1, N_2)^2)
$$
 we have that the map is bounded by $\sqrt{2}$.
\end{proofE}
The variance map connects probability measures and the underlying vector space \emph{functorially}.
 \begin{theoremE}
\label{lem:var_lipschitz}
For $l \in (1,2)$, the variance map $\Var \colon  \ProbZ \to \PosDef$ defines a natural transformation. 
\end{theoremE}
\begin{proofE}
For now we check that $\Var_V\colon  \ProbZ(V) \to \PosDef(V)$ is a morphism in $\ekCMet$ and $\abs{\Var_V} \leq e$ and defer the proof of naturality to Proposition \ref{prop:gradedpair}. We need to check equivariance and non-expansiveness.

Equivariance is trivially satisfied as for any probability measure $\mu \in \ProbZ(V)$,  the measure $c \star \mu$ has, by  the standard properties of variance:
$$
\Var_V(c \star \mu) = \Var(cX) = c^2 \Var(X) = c^2 \Var_V(\mu) = c \star \Var_V(\mu).
$$
Thus, the map is equivariant.

Second, we check non-expansiveness. We consider the case $l > 2$ used for the Central Limit Theorem.
Recall from Proposition \ref{prop:metric_properties} that for the Fourier $l$-distance $d_l(\mu, \nu)$ to be finite, the measures $\mu$ and $\nu$ have to share the same moments up to order $\lfloor l \rfloor$. Therefore they are sent to the same matrix. If $d_l(\mu, \nu) = \infty$, the non-expansiveness condition holds vacuously. Thus, $\Var_V$ is a morphism in $\ekCMet$.

To compute $\abs{\Var_V}$, we note that it sends everything at finite distance apart to distance 0. So it has seminorm $0$.
\end{proofE}

\section{Central Limits via Dilations}
\label{sec:CLT}
We now turn to setting up the main theorem of the paper. 
\pratendSetLocal{category=CLT}
\subsection*{Functor gradings and pre-CLT structures}
Expectation and variance have a same layered structure with respect to the probability functor. This can be described in the more abstract language of dilated functors. Fix a pair of lax monoidal dilated functors 
\begin{equation*}
\enr{F},\enr{G}\colon\UCat{C}\to \UCat{D}
\end{equation*} 
between monoidal dilated categories $\UCat{C}$ and $\UCat{D}$ unless explicitly mentioned otherwise. 
We assume that $\UCat{C}$ has point-wise addition and that the underlying category of $\UCat{D}$ has enriched binary pullbacks and a terminal object~$\1_{\Cat{D}}$.
From this we shall develop a general notion of a \emph{central limit.} 

 \begin{definition}
The \textbf{(enriched) fibre} of a morphism $f\colon X\to Y$ in~$\UCat{D}$ with $\abs{f}\le e$ over a generalised point $y: 1_{\eCC} \to Y$, denoted $X_y$, is the pullback in the following conical diagram.
\begin{equation*}
\begin{tikzcd}
    X_y \arrow[r] \arrow[d] & \1_{\UCat{D}} \arrow[d, "y"] \\
    X \arrow[r, "f"] & Y
\end{tikzcd}
\end{equation*}
This is well defined as a conical limit as generalised points have seminorm $\abs{p}= \bot.$
\end{definition}

We briefly note that the limits in seminorm categories are defined up to isometry as the maps in both $\SNorm$ and $\DNorm$ are non-expansive and have the property that $\Lnorm{f(x)}\le \Lnorm{x}$. Therefore the fibre is defined up to isometry and the seminorm of maps between fibres is defined up multiplication by  $e$, that is, on the nose.
We phrase this as a proposition.

\begin{propositionE}
\label{my_prop}
Let $X_y$ be the enriched fibre over $f\colon X\to Y$. 
Then
\begin{equation*}
  \UCat{C}(\1_{\eCC}, X_y)\cong\{x \in \Cat{C}(\1_{\eCC}, X)\mid p(x) = y \mbox{ in } \Cat{C}(\1_{\eCC}, Y)\},
\end{equation*}
where the latter is equipped with the subspace metric and subspace seminorm from~$X$.
\end{propositionE}
\begin{proofE}
The functor $\UCat{C}(\1_{\eCC}, -)\colon  \UCat{C} \to \SNorm$ (or $\DNorm$) preserves all enriched limits. Recall that the enriched fibre $X_y$ is defined as the enriched pullback of the diagram $\1_{\eCC} \xrightarrow{y} Y \xleftarrow{p} X$.Applying the functor $\UCat{C}(\1_{\eCC}, -)$ to this diagram yields a limit diagram in the base category $\SNorm$:
\begin{equation*}
\begin{tikzcd}
    \UCat{C}(\1_{\eCC}, X_y) \arrow[r] \arrow[d] & \UCat{C}(\1_{\eCC}, X) \arrow[d, "p_*" ] 
    \\
    \UCat{C}(\1_{\eCC}, \1_{\eCC}) \arrow[r, "y_*"] & \UCat{C}(\1_{\eCC}, Y)
\end{tikzcd}
\end{equation*}
We compute this limit in $\SNorm$. Since $\1_{\eCC}$ is the terminal object, $\UCat{C}(\1_{\eCC}, \1_{\eCC}) \cong \{x\}$ where $\abs{x} = \bot.$ The morphism $y_*\colon  \{x\} \to \UCat{C}(\1_{\eCC}, Y)$ picks out the point $y$.
 In the categories $\SNorm$, the limit of the cospan $I \xrightarrow{y} B \xleftarrow{p} A$ is calculated as the set-theoretic fibre $\{a \in A \mid p(a) = y\}$ equipped with the subspace metric and subspace seminorm induced from $A$.

Therefore, $\UCat{C}(\1_{\eCC}, X_y)$ is isomorphic in $\SNorm$ to the subspace $\{x \in \UCat{C}(\1_{\eCC}, X) \mid p(x) = y \mbox{ in }  \UCat{C}(\1_{\eCC}, Y)  \}$.Since isomorphisms in $\SNorm$ (and $\DNorm$) are isometries, this establishes the equality up to isometry as required.
\end{proofE}
\begin{definition}
\label{def: grading}
  A \textbf{grading} of $\enr{F}$ with respect to $\enr{G}$, is a natural transformation $p \from \enr{F} \to \enr{G}$ that commutes with lax monoidal maps and the sum, that is, the following two diagrams commute, and such that each $p_{X}$ is an epimorphism such that $\abs{p_{X}} \le e$.
  \begin{equation*}
    \begin{tikzcd}[column sep=large]
      \enr{F}_{lin}(X)\ptens \enr{F}_{lin}(Y) \arrow[d, "\mu(F)_{X,Y}"] \arrow[r, "p_{X}\ptens p_{Y}"]
      & \enr{G}_{lin}(X)\ptens \enr{G}_{lin}(Y) \arrow[d, "\mu(G)_{X,Y}"]
      \\
      \enr{F}_{lin}(X\ptens Y) \arrow[r, "p_{X\ptens Y}"]
      & \enr{G}_{lin}(X \ptens Y)
    \end{tikzcd}
    \end{equation*}
        and 
    \begin{equation*}
    \begin{tikzcd}
      \enr{F}_{lin}(X\ptens X) \arrow[d, "\enr{F}(+)"] \arrow[r, "p_{X\ptens X}"] & \enr{G}_{lin}(X \ptens X) \arrow[d, "\enr{G}(+)"]
      \\
      \enr{F}_{lin}(X) \arrow[r, "p_{X}"] & \enr{G}_{lin}(X).
    \end{tikzcd}
  \end{equation*}
  Here, $\mu(F)$ and $\mu(G)$ are the coherence maps for the lax monoidal functors $F$ and $G$.   
   A \textbf{grading diagonal} is a pair of natural transformations, $\Delta_{\enr{F}}\colon  F \to F \ptens F$ and $\Delta_{\enr{G}}\colon  G \to G \ptens G$ along with a modification $\mu\colon  \Delta_{\enr{F}} \to \Delta_{\enr{G}}.$ 
  \end{definition}
 \begin{example}
 The probability functor has two gradings.
 \begin{enumerate}
 \item
 For $l\in (1,2),$ the functor $\enr{\Prob}$ is graded with respect $\enr{|-|}.$ The grading is given by the map $\exp_V: \Prob(V) \to |V|$ which assigns its expectation to every probability measure in  $\Prob(V)$. 

 \item
 For $l\in (2,3),$ the functor $\enr{\Prob}$ is graded with respect $\enr{\PosDef}.$ The grading is given by the map $\Var_V: \Prob(V) \to \enr{\PosDef}(V)$ which assigns its variance matrix to every probability measure in  $\Prob(V)$. 
 \end{enumerate}
 In both cases, the operation that is being pushed forward is the addition on the vector space.
 \end{example}
  \begin{remark}
  The seminorm of $+$ is not required to have a seminorm less than $e$ and generally will not. For example, the seminorm of addition in $\eEBan$ is 2. This is the origin of the rescaling factor in the central limit theorem and law of large numbers. 
  \end{remark}

We arrive at our main structural definition.

\begin{definition}
A \emph{pre-CLT system} $(\enr{F}, \enr{G}, \Delta_{\enr{F}}, \Delta_{\enr{G}}, p, \mu)$ consists of data   
\begin{enumerate}
\item
a grading of $\enr{F}$  by $p$ with respect to a perfectly rescalable $\enr{G}.$
\item 
a grading diagonal $(\Delta_{\enr{F}}, \Delta_{\enr{G}}, \mu)$ such that for all $X\in \UCat{C}$ the following commutes
\begin{equation*}
\begin{tikzcd}
\enr{F}X \arrow[r, "\Delta_{\enr{F}}"] \arrow[d, "p_X"]& \enr{F}X\ptens \enr{F}X \arrow[d, swap, "p_X\ptens p_X"]
\\
\enr{G}X  \arrow[r, "\Delta_{\enr{G}}"]& \enr{G}X\ptens \enr{G}X.
\end{tikzcd}
\end{equation*}
\end{enumerate}
We say  $(\enr{F}, \enr{G}, p)$ is a \emph{Cartesian pre-CLT system} if $(\enr{F}, \enr{G}, \Delta_{\enr{F}}, \Delta_{\enr{G}}, \mu)$ is a pre-CLT system and $ \Delta_{\enr{F}}, \Delta_{\enr{G}}, \mu$ are induced by the Cartesian diagonal on $\UCat{D}.$
\end{definition}
Our main examples are pre-CLT-systems, the proofs are rather long, but essentially diagram chasing.
\begin{propositionE}
Both of these are Cartesian pre-CLT-systems:
\begin{enumerate}
\item
the tuple $(\Prob, |-|, \exp)$.
\item
the tuple $(\ProbZ,\PosDef, \Var)$.
\end{enumerate}
\end{propositionE}
\begin{proofE}
See the proof of Proposition \ref{preCLT_LLN} for (1) and Proposition \ref{prop:gradedpair} for (2).
\begin{propositionE}
\label{preCLT_LLN}
 The tuple  $\left(\Prob, |-|, \exp\right)$ is a Cartesian pre-CLT system.
\end{propositionE}
\begin{proofE}
The functor $|-|$ is monoidal enriched. The monoidal axioms follow from the monoidal structure of $\FinLip$. The map is metric-preserving and thus enriched.

Next we establish that that $(\Prob, |-|)$ is a graded pair, we must verify that the expectation map, $\exp$, is a natural transformation with respect to linear subcategory of $\lin{\eFinVect}$.
 
To begin, we need to establish the naturality of the expectation map. The map $\exp\colon  \Prob \to U$ is a natural transformation if for every morphism $f\colon  V \to W$ in the category, the following diagram commutes:
\begin{equation*}
\begin{tikzcd}
    \Prob(V) \arrow[r, "\exp_V"] \arrow[d, "\Prob(f)"'] & \lvert V \rvert \arrow[d, "f"] \\
    \Prob(W) \arrow[r, "\exp_W"] & \lvert W \rvert
\end{tikzcd}
\end{equation*}
The component of the transformation at an object $V$, denoted $\exp_V$, maps a probability measure $\mu \in \Prob(V)$ to its expected value, $E[\mu] \in V$. The action of the functor $\Prob$ on a morphism $f$ is the pushforward map, so $\Prob(f)(\mu) = f_*\mu$.

For the diagram to commute, we must have $$f(\exp_V(\mu)) = \exp_W(\Prob(f)(\mu))$$ for any measure $\mu$. We compute both sides of this equality. 
The left hand-side becomes $f(\exp_V(\mu)) =  f\left(\int_V x \,d\mu(x)\right)$. The right hand side is $\exp_W(\Prob(f)(\mu)) = \int_W y \,d(f_\conv\mu)(y) = \int_V f(x) \,d\mu(x)$.
The naturality condition is therefore the equality:
\[ f\left(\int_V x \,d\mu(x)\right) = \int_V f(x) \,d\mu(x) \]
This is precisely the equality condition of Jensen's inequality, which holds if $f$ is both concave and convex. In particular, this is true if the function $f$ is linear map.

Next, we verify the graded functor axioms.

The first diagram requires commutativity for the product of spaces.
\[
\begin{tikzcd}
    \Prob(V) \times \Prob(W) \arrow[r, "\exp_V \times \exp_W"] \arrow[d, "\mu_{V, W}"'] & \lvert V \times W \rvert \arrow[d, "\id"] \\
    \Prob(V \times W) \arrow[r, "\exp_{V \times W}"] & \lvert V \times W \rvert
\end{tikzcd}
\]
Starting with $(\mu_1, \mu_2) \in \Prob(V) \times \Prob(W)$, the top path yields the pair of expectations $(\exp_V[\mu_1], \exp_W [\mu_2])$. The bottom path first forms the product of two independent variables $\mu_1 \tens \mu_2$ and then takes its expectation, which is $\exp_{V\times W}[\mu_1 \tens \mu_2]$. The expectation of the product of two independent variables  is the product of the expectations. Therefore the diagram commutes.

The second diagram involves the addition map $+\colon  V \times V \to V$, which induces convolution.
$$
\begin{tikzcd}
    \Prob(V \times V) \arrow[r, "\exp_{V \times V}"] \arrow[d, "\Prob(+)"'] & \lvert V \times V \rvert \arrow[d, "+"] \\
    \Prob(V) \arrow[r, "\exp_V"] & \lvert V \rvert 
\end{tikzcd}
$$
The expectation of a sum of probability measures is always equal to the sum of expectations. Therefore this diagram also commutes.

Finally, we need to show that the seminorm of the expectation map is less than or equal to $1$. Recalling that we are working with the Lipschitz seminorm, so it suffices to prove that
$$
d_V(\exp(\mu), \exp(\nu)) \leq d_l(\mu, \nu)
$$
However, this is true because, if $l>1$, $d_V(\exp(\mu), \exp(\nu)) > 0$ implies that  $d_l(\mu, \nu) = \infty.$

Since all conditions are met under the restriction to linear maps, we conclude that $(\Prob, |- |)$ is a graded functor pair on this subcategory. Observe that the map
$$
\lvert V \rvert \xrightarrow{\triangle_{\lvert V \rvert }} \lvert V \rvert \times \lvert V \rvert\xrightarrow{\lvert + \rvert} \lvert V \rvert
$$
is precisely rescaling by 2. Therefore we conclude that the grading constant of $|-|$ is 2.
\end{proofE}
\begin{propositionE}
\label{prop:gradedpair}
The tuple $(\ProbZ,\PosDef, \Var)$ is a Cartesian pre-CLT system.
\end{propositionE}

\begin{proofE}
We must show that the pair of functors $(\ProbZ, \PosDef)$ forms a graded pair with respect to $\Var$ as per \Cref{def: grading}. Therefore we must demonstrate that the variance map, which we denote by $\Var\colon  \ProbZ \to\PosDef$, is a natural transformation that commutes with the lax monoidal structures.

Let $V \in \eFinVect$. The functor $\ProbZ$ maps $V$ to the space of probability measures on $V$ with zero expectation and finite, non-zero variance. The functor $\PosDef$ maps $V$ to the space of symmetric positive semidefinite matrices over it. The grading map $\Var_V\colon  \ProbZ(V) \to \PosDef(V)$ is defined for a measure $\mu \in \ProbZ(V)$ as the matrix of second moments:
$$ (\Var_V(\mu))_{ij} = \int_V x_i x_j \,d\mu $$
This matrix is the variance matrix because the expectation is zero and thus we do not need to renormalise.

\textbf{1. Naturality of Var:}
We must show that $\Var$ is a natural transformation. Let $f\colon  V \to W$ be a linear map. We need to show that the following diagram commutes:
\[
\begin{tikzcd}
\ProbZ(V) \arrow[r, "\Var_V"] \arrow[d, "\ProbZ(f)"'] &\PosDef(V) \arrow[d, "\PosDef(f)"] \\
\ProbZ(W) \arrow[r, "\Var_W"'] &\PosDef(W)
\end{tikzcd}
\]
Here, $\ProbZ(f)$ is the pushforward map $f_\ast$, so $\ProbZ(f)(\mu) = f_\ast\mu$. The map $\PosDef(f)$ transforms a matrix $M \in\PosDef(V)$ to $f M f^T \in\PosDef(W)$.

Let $\mu \in \ProbZ(V)$. We compute both paths of the diagram.
\begin{multline*}
(\PosDef(f) \circ \Var_V)(\mu) = f (\Var_V(\mu)) f^T = \\ \Var_W(f_\ast\mu) = (\Var_W \circ \ProbZ(f))(\mu)
\end{multline*}
To check the second equality above, we compute the $(i,j)$-th entry of the matrix of $\Var_W(f_\ast\mu)$. Let $y \in W$ with coordinates $y_k$, and let $y=f(x)$.
\begin{align*}
    (\Var_W(f_\ast\mu))_{ij} &= \int_W y_i y_j \,d(f_\ast\mu)(y) \\
    &= \int_V (f(x))_i (f(x))_j \,d\mu(x) \\
    &= \int_V \left(\sum_k f_{ik}x_k\right) \left(\sum_l f_{jl}x_l\right) \,d\mu(x) \\
    &= \sum_{k,l} f_{ik} f_{jl} \int_V x_k x_l \,d\mu(x) \\
    &= \sum_{k,l} f_{ik} (\Var_V(\mu))_{kl} f_{jl} \\
    &= (f (\Var_V(\mu)) f^T)_{ij}
\end{align*}
Since this holds for all entries $(i,j)$, we have $\Var_W(f_\ast\mu) = f (\Var_V(\mu)) f^T$. The diagram commutes, so $\Var$ is a natural transformation.

\textbf{2. Commutativity with Lax Monoidal Maps:}
We must verify the two diagrams from Definition 5.6 commute. The monoidal product on $\eFinVect$ is the Cartesian product $\times$.
    First, note that the lax monoidal map for $\ProbZ$ is the product of measures,
    $$
    \mu_{V, W}\colon  \ProbZ(V) \times \ProbZ(W)  \to \ProbZ(V \times W)
   $$
    The convolution product on $\PosDef(\mathbb{R})$ corresponds to the addition of variances. For the general case of $\PosDef(V)$, the product on the grading space is addition. If $\Sigma_1$ and $\Sigma_2$ are the variance matrices for independent random vectors, the variance of their sum is $\Sigma_1 + \Sigma_2$.
The first diagram in Definition 5.6 requires that for $\mu_1 \in \ProbZ(V)$ and $\mu_2 \in \ProbZ(W)$, the variance of the product measure $\mu_1 \tens \mu_2$ on $V \times W$ corresponds to the block-diagonal matrix of individual variances.
\[
\begin{tikzcd}
\ProbZ(V) \times \ProbZ(W) \arrow[r, "\Var_V \times \Var_W"] \arrow[d, "\mu_{\ProbZ}"'] &\PosDef(V) \times\PosDef(W) \arrow[d, "\mu_{\PosDef}"] \\
\ProbZ(V \times W) \arrow[r, "\Var_{V \times W}"'] &\PosDef(V \times W)
\end{tikzcd}
\]
The variance matrix of the product measure $\mu_1 \tens \mu_2$ is precisely the block diagonal matrix 
$$
\begin{pmatrix}
 \Var(\mu_1) & 0
 \\
 0 & \Var(\mu_2).
\end{pmatrix}$$
The map $\mu_{V, W}$ therefore corresponds to this block-diagonal construction. We conclude that the diagram commutes.

The second diagram involves the convolution product. The map $+\colon  V \times V \to V$ is vector addition. Let $\mu \in \ProbZ(V)$. The convolution $\mu \conv \mu$ corresponds to the distribution of $Z = X_1 + X_2$, where $X_1, X_2$ are independent random variables with distribution $\mu$. The variance of $Z$ is $\Var(X_1) + \Var(X_2) = 2 \cdot \Var(\mu)$. The map $\ProbZ(+)$ sends the product measure on $V \times V$ to the convolution on $V$. The commutativity of the second diagram is therefore precisely the fact that the variance of a sum of independent random variables is the sum of their variances. This holds by construction.

Since $\Var$ is a natural transformation and the required diagrams commute, the pair $(\ProbZ,\PosDef)$ is a graded pair with respect to $\Var$.  
\end{proofE}
\end{proofE}
The morphisms in dilated categories often have too few points to be useful in applications. Once we have rescaled our morphism, we no longer need the rescaling machinery, so we drop it and make the change of basis in Theorem \ref{thm:change_of_basis} to pass from a dilated $\UCat{C}$ to the underlying seminorm category $\DtoSNorm_\ast\UCat{C}$ which has more points.
\begin{theoremE}
  \label{structural_CLT}
  Given a pre-CLT system $(\enr{F}, \enr{G}, p, \Delta_{\enr{F}}, \Delta_{\enr{G}}, \mu)$,  the diagram
  \begin{equation*}
    \begin{tikzcd}[ampersand replacement=\&]
      \enr{F}X \arrow[rr, "{[c, e]\act \theta^F_X}"] \arrow[dr, "p_X"'] \& \& \enr{F}X \arrow[dl, "p_X"] \\
      \& \enr{G}X \&
    \end{tikzcd}
  \end{equation*}
  commutes.

Moreover, the fibre $\enr{F}(X)_p$ over $p_X$, taken in $\DtoSNorm_\ast \UCat{D}$, of each generalised point  $\1_{\UCat{D}} \xrightarrow{p} \enr{G}X$ of $\enr{G}X$ is such that $\UCat{D}(\1_{\UCat{D}}, \enr{F}(X)_p)$ is nonempty. The morphism $[c, e]\act \theta^F_X$ induces an endomorphism $$\theta_p \in\DtoSNorm_\ast\UCat{D}\left(\enr{F}(X)_p, \enr{F}(X)_p\right).$$
\end{theoremE}
where $c$ is the grading constant of $\enr{G}.$
\begin{proofE}
The commutativity of the diagram can be seen by writing the $\enr{F}$ and $\enr{G}$ convolution operator maps in terms of components. 
$$
\begin{tikzcd}
\enr{F}X \arrow[r, "\Delta_{\enr{F}}"] \arrow[d, "p_X"']
  & \enr{F}X \ptens \enr{F}X \arrow[r, "\mu_{\enr{F}X, \enr{F}X}"] \arrow[d, "p_X \ptens p_X"']
  & \enr{F}(X \ptens X) \arrow[r, "{ \enr{F}(-+-)}"] \arrow[d, "p_{X \ptens X}"']
  & \enr{F}X \arrow[d, "p_X"']
  \\
  \enr{G}X \arrow[r, "\Delta_{\enr{G}}"'] 
  & \enr{G}X \ptens \enr{G}X \arrow[r, "\mu(\enr{G})_{
  \enr{G}X, \enr{G}X}"'] 
  & \enr{G}(X \ptens X) \arrow[r, "{\enr{G}(-+-)}"'] 
  & \enr{G}X 
\end{tikzcd}
$$
The commutativity of the first square to the left holds by the definition of a grading diagonal. The second and third are by the two commutative diagrams  defining a grading of functors. Then, by the perfectly rescalable assumption, the composition on the bottom row is equal to $c\act \id_{\enr{G}X}.$ Therefore, by the assumption that $c\otimes [c,e] = e$ rescaling this diagram by $[c, e]$ produces an identity on the bottom and so we obtain the commutative square in the statement.

The next thing we note is that the seminorm of all the morphisms in the diagrams we take the fibre over have seminorm less than $e$. To check this, observe that generalised points have seminorm $\bot$, the diagonal maps have seminorm $e$, and the units of the lax monoidal transformations do too. So $\abs{\theta^F} \le \abs{\triangle_F}\abs{\mu}\abs{F(+)} = \abs{F(+)} = \abs{+} \leq c$. So $\abs{[c,e]\theta^F} \le e.$   So this is a well-defined conical diagram in $\eCC.$

To prove the claim about induced maps on fibres, observe that  our assumptions guarantee the commutativity of the following diagram. The labels of the arrows on the bottom row follow from a simple diagram chase.
  \begin{equation}
  \label{dia:cospan}
    \begin{tikzcd}
\enr{F}X \arrow[r, "\Delta_{\enr{F}}"] \arrow[d, "p_X"']
  & \enr{F}X \ptens \enr{F}X \arrow[r, "\mu_{\enr{F}X, \enr{F}X}"] \arrow[d, "p_X \ptens p_X"']
  & \enr{F}(X \ptens X) \arrow[r, "{[c, e] \act \enr{F}(-+-)}"] \arrow[d, "p_{X \ptens X}"']
  & \enr{F}X \arrow[d, "p_X"'] \\
\enr{G}X \arrow[r, "\Delta_{\enr{G}}"'] 
  & \enr{G}X \ptens \enr{G}X \arrow[r, "\mu(\enr{G})_{\enr{G}X, \enr{G}X}"'] 
  & \enr{G}(X \ptens X) \arrow[r, "{[c, e]\act \enr{G}(-+-)}"'] 
  & \enr{G}X  \\
\1_{\UCat{D}} \arrow[r] \arrow[u, "p"]
  & \1_{\UCat{D}} \arrow[r] \arrow[u, "\Delta_{G}\circ p"]
  & \1_{\UCat{D}} \arrow[r] \arrow[u, "\mu(\enr{G})_{\enr{G}X, \enr{G}X} \circ \Delta_{G}\circ p "]
  & \1_{\UCat{D}} \arrow[u, "p"]
\end{tikzcd}
\end{equation}
The full composite from left to right is a morphism of cospans, which therefore induces a corresponding map on its limit, the fibre at $p$. 

We remark that the claimed decomposition just below the statement of the theorem we are proving follows easily from the diagram above. Each pair of consecutive columns and the arrows between them produces a morphism in the limit through which $\theta_p$ factors.

The fibre is inhabited as $p_X$ is an epimorphism. In the category $\SNorm$ (and $\DNorm$), epimorphisms are surjective. Since $p_X$ is an epimorphism by assumption, the strict fibre over any point, which is a carrier in $\Set$ for the enriched fibre by Proposition \ref{my_prop}, is inhabited.
\end{proofE}
\begin{example}
For the Cartesian pre-CLT system $(\ProbZ,\PosDef, \Var)$ and a vector space $V$, the fibre over a particular choice matrix $M \in\PosDef(V)$ will be the metric space $\ProbZ(V)_M$ consisting of precisely the probability measures in $\ProbZ(V)$ with variance $M$. For $\left(\Prob, |-|, \exp\right)$, it is the metric space of probability measures in $\Prob(V)$ with expectation $x$. In both cases, we move from $\ekCMet$ to $\eCMet$ to do the further computations.
\end{example}
Having verified the structural conditions checked we can move on to checking the  analytic ones on the fibre.
\begin{definition}
Given a pre-CLT system $(F, G, p, \Delta_{\enr{F}}, \Delta_{\enr{G}}, \mu)$. The \textbf{$p$-restricted convolution operator} denoted $\left(\theta_p\right)_\ast$ is given by first by considering the restriction $[c, e] \act\theta^F_X$ to an element $\theta_p \in\DtoSNorm_\ast\UCat{C}\left(\enr{F}(X)_p, \enr{F}(X)_p\right)$. Then 
$$
\left(\theta_p \right)_\ast\colon  \DtoSNorm_\ast\UCat{D}\left(\1_{\UCat{D}}, \enr{F}(X)_p\right) \to \DtoSNorm_\ast\UCat{D} \left(\1_{\UCat{D}}, \enr{F}(X)_p\right)
$$
is the morphism in $\SNorm$ defined by postcomposition with $\theta_p$.
\end{definition}
\begin{example}
\begin{enumerate}
\item
In the LLN example,  the $x$-restricted convolution operator is the map 
$
\mu \mapsto \frac{1}{2}\mu\ast \mu
$
restricted to probability measures of expectation $x$.
\item
In the CLT example, the $M$-restricted convolution operator is the map 
$
\mu \mapsto \frac{1}{\sqrt{2}}\mu\ast \mu
$
restricted to probability measures of variance matrix $M$.
\end{enumerate}
\end{example}
The restricted $p$-convolution can be broken down further using the decomposition of $\enr{G}$-convolution operator as $\enr{G}(+)\circ \mu(G)\circ \Delta_G$. 
\begin{propositionE}
\label{prop:decomposition}
The $p$-restricted convolution operator $\theta_p$ on the fibre $\enr{F}(X)_p$ admits a factorisation%
\begin{equation*}
    \theta_p = \alpha_p \circ m_p \circ \delta_p
\end{equation*}
passing through the intermediate fibres induced by the grading structure.
If the seminorm of $\alpha_p$ is less than $e$ then $|\theta_p| < e$.
\end{propositionE}
\begin{proofE}
Consider the commutative diagram of cospans in the proof of Theorem \ref{structural_CLT}:
  \begin{equation}
    \begin{tikzcd}
    \label{nnnnn}
\enr{F}X \arrow[r, "\Delta_{\enr{F}}"] \arrow[d, "p_X"']
  & \enr{F}X \ptens \enr{F}X \arrow[r, "\mu_{\enr{F}X, \enr{F}X}"] \arrow[d, "p_X \ptens p_X"']
  & \enr{F}(X \ptens X) \arrow[r, "{[c, e] \act \enr{F}(-+-)}"] \arrow[d, "p_{X \ptens X}"']
  & \enr{F}X \arrow[d, "p_X"'] \\
\enr{G}X \arrow[r, "\Delta_{\enr{G}}"'] 
  & \enr{G}X \ptens \enr{G}X \arrow[r, "\mu(\enr{G})_{\enr{G}X, \enr{G}X}"'] 
  & \enr{G}(X \ptens X) \arrow[r, "{[c, e]\act \enr{G}(-+-)}"'] 
  & \enr{G}X  \\
\1_{\UCat{D}} \arrow[r] \arrow[u, "p"]
  & \1_{\UCat{D}} \arrow[r] \arrow[u, "\Delta_{G}\circ p"]
  & \1_{\UCat{D}} \arrow[r] \arrow[u, "\mu(\enr{G})_{\enr{G}X, \enr{G}X} \circ \Delta_{G}\circ p "]
  & \1_{\UCat{D}} \arrow[u, "p"]
\end{tikzcd}
\end{equation}
By the definition of the fibre, because it is a limit and therefore functorial, the columns of this diagram define three distinct fibres via pullback:
\begin{enumerate}
    \item The fibre $\enr{F}(X)_p$ over $p$;
    \item The fibre $(\enr{F}X \ptens \enr{F}X)_{\Delta_{G}\circ p} $ over $\Delta_{G}\circ p$ in $\enr{F}X \ptens \enr{F}X$;
    \item The fibre $\enr{F}(X \ptens X)_{\mu(\enr{G})\circ \Delta_{G}\circ p}$ over $\mu(\enr{G})\circ \Delta_{G}\circ p$ in $\enr{F}(X \ptens X)$.
\end{enumerate}
Explicitly, we obtain $\delta_p \colon \enr{F}(X)_p \to (\enr{F}X \ptens \enr{F}X)_{\Delta_{\enr{G}} \circ p}$ induced by $\Delta_{\enr{F}}$, $m_p \colon (\enr{F}X \ptens \enr{F}X)_{\Delta_{\enr{G}} \circ p} \to \enr{F}(X \ptens X)_{\mu(\enr{G}) \circ \Delta_{\enr{G}} \circ p}$ induced by $\mu_{\enr{F}}$, and $\alpha_p \colon \enr{F}(X \ptens X)_{\mu(\enr{G}) \circ \Delta_{\enr{G}} \circ p} \to \enr{F}(X)_p$ induced by the rescaled addition $[c, e] \act \enr{F}(+)$
The composition of these maps is precisely the induced endomorphism $\theta_p = \alpha_p \circ m_p \circ \delta_p$.
Since the morphisms between fibres are restrictions of the morphisms in $\enr{F}$, they satisfy $|\delta_p| \le |\Delta_{\enr{F}}| \le e$, $|m_p| \le |\mu_{\enr{F}}| = e$ and $|\alpha_p| \le |[c, e] \act \enr{F}(+)|$.
Using the submultiplicative property of the seminorm, we have:
\begin{equation*}
    |\theta_p| \le |\alpha_p| \qMul |m_p| \qMul |\delta_p|.
    \end{equation*}
If  $|\alpha_p|$  has seminorm strictly less than $e$, the contractive property of the quantale implies that the total product is strictly less than $e$ as the other two are bounded by $e$.
\end{proofE}
The second condition in \Cref{def:contractivity_on_Fibres} can also be made in terms of seminorm $|\alpha_p|$. However, we found this harder to compute in practical examples than the $|\theta_p|$ so we opted to use that instead.
\subsection*{The categorical central limit theorem via CLT-systems}
Succinctly, the categorical central limit theorem is fundamentally about the convergence of this $p$-restricted operator to a central limit in each fibre. To ensure analytic convergence, we must impose analytical conditions.

 \begin{definition}
\label{def:contractivity_on_Fibres}
A pre-CLT system $(F, G, p, \Delta_{\enr{F}}, \Delta_{\enr{G}}, \mu)$ is called a \emph{CLT-system} if for every generalised point $\1_{\UCat{D}} \xrightarrow{p} \enr{G}(X)$,
\begin{enumerate}
\item
the fibre $\enr{F}(X)_p$ is metrically small.
\item
the $p$-restricted convolution operator $\theta_p$ is strictly contractive i.e., $|\theta_p| < e$. 
\end{enumerate}
We say it is a  \emph{Cartesian CLT system} if it is Cartesian as a pre-CLT system.
\end{definition}

For our running examples, this condition is satisfied.
\begin{theoremE}
\label{llnclt}
Both of these are Cartesian CLT-systems:
\begin{enumerate}
\item
the tuple $(\Prob, |-|, \exp)$.
\item
the tuple $(\ProbZ,\PosDef, \Var)$.
\end{enumerate}
\end{theoremE}
\begin{proofE}
See the proof of  the below Theorem \ref{prop:LLN} for (1) and Theorem \ref{thm:clt_contractive} for (2).
\begin{theoremE}
\label{prop:LLN}
The tuple $(\Prob, |-|, \exp)$ is a Cartesian CLT-system.
\end{theoremE}
\begin{proofE}
First, we need to compute the fibre at each point. For $x \in |V|$ it follows by direct inspection that the fibre in the seminorm category $\eCMet$ is 
$$
P_x(V) = \{\mu \in\Prob(V):  \exp(\mu) = x \}.
$$
It follows from Proposition \ref{prop:metric_properties} that it is extended metric space and metrically small.

The constant $c$ is $2$ by the previous proposition. We show that the operator $\theta_x$ is a strict contraction on each fibre. The operator is $\theta\colon  P_x(V) \to P_x(V)$, defined by $\theta(\mu) = \frac{1}{2}(\mu \conv \mu)$.

Let $A  := \int_V e^{-i\langle t,y \rangle} d\mu$ and $C := \int_V e^{-i\langle t,y \rangle} d\nu$ be the characteristic functions of two measures $\mu, \nu \in P_x$. It is well known that the characteristic function of the convolution $\mu \conv \mu$ is $A^2$, see \cite[Section 26]{Billingsley12}.Using the algebraic identity $A^2 - C^2 = (A-C)(A+C)$ and the fact that characteristic functions are bounded by $1$ (implying $\|A+C\| \leq 2$), we derive the sharp bound by applying the definition of the Fourier distance directly:
\begin{align*}
d_{l}(\mu \conv \mu, \nu \conv \nu) &= \left( \sup_{t \neq 0} \frac{\left\| A^2 - C^2 \right\|}{\|t\|^l} \right)^{1/l} \\
&\leq \left( \sup_{t \neq 0} \frac{2 \left\| A - C \right\|}{\|t\|^l} \right)^{1/l} \\
&= 2^{1/l} d_{l}(\mu, \nu).
\end{align*}
By the definition of $\act$, the map $\Prob(\frac{1}{2} \cdot -) \colon \Prob(V) \to \Prob(V)$ is equal to $\frac{1}{2} \act - \colon   \Prob(V) \to \Prob(V)$.  Therefore:
$$
d_{l}(\theta(\mu), \theta(\nu)) = \frac{1}{2}d_{l}(\mu \conv \mu, \nu \conv \nu) \le \frac{1}{2} \cdot 2^{1/l} \cdot d_{l}(\mu, \nu) = 2^{l/1-1} d_{l}(\mu, \nu).
$$
Since $l \in (0,1]$, the exponent $2^{1/ l -1}$ is strictly negative. Thus, $|\theta_x| < 1$ as required.
\end{proofE}
\begin{theoremE}
\label{thm:clt_contractive}
The tuple $(\ProbZ,\PosDef, \Var)$ is a Cartesian CLT-system.
\end{theoremE}

\begin{proofE}
The fact that the grading constant $c$ is $\sqrt{2}$ follows from the fact that the sum of two independent variables with variance matrix $M$ results in a variance of $2M$. Scaling the sum by $\sqrt{2}$ multiplies the variance matrix by $2$.

The fibre is the same as in the proof of Proposition \ref{prop:LLN}.
For $M \in \PosDef(V)$ it follows by direct inspection that the fibre in the seminorm category $\eCMet$ is 
$$
P_x(V) = \{\mu \in\Prob(V):  \Var(\mu) = M \}.
$$
We have already shown earlier that this is a metric space and thus metrically small for $l\in [2,3)$ o to prove the theorem, we must show that the central limit operator is a contraction in this fibre

Let $P_M$ be the fibre of measures in $\ProbZ(V)$ with a given variance matrix $<$. The central limit operator is $\theta\colon \ProbZ(V)_M \to \ProbZ(V)_M$, defined as $\theta(\mu) = \frac{1}{\sqrt{2}}(\mu \conv \mu)$.
 Let us write
$$
A := \int_V e^{-i\langle t,x \rangle} d\mu \quad \text{and} \quad C := \int_V e^{-i\langle t,x \rangle} d\nu.
$$
The characteristic function of $\mu \conv \mu$ is $A^2$ by the fact that Fourier sends convolution to multiplication. Using the algebraic identity $A^2 - C^2 = (A-C)(A+C)$ and the fact that characteristic functions are bounded by 1, we obtain the sharp bound for the distance $d_{l}(\mu \conv \mu, \nu \conv \nu)$:
$$
\left( \sup_{t \neq 0} \frac{\| A^2 - C^2 \|}{\|t\|^l} \right)^{1/l} \le \left( \sup_{t \neq 0} \frac{\| A-C \| \cdot 2}{\|t\|^l} \right)^{1/l} = 2^{1/l} d_{l}(\mu, \nu).
$$
Now we study the operator $\theta$. We showed that the Fourier $l$-distance scales linearly with the measure, i.e., $d_{l}(c\mu, c\nu ) = c \cdot d_{l}(\mu, \nu)$.
Therefore,
$$
d_{l}(\theta(\mu), \theta(\nu)) = d_{l}\left(\frac{1}{\sqrt{2}}(\mu \conv \mu), \frac{1}{\sqrt{2}}(\nu \conv \nu)\right) = \frac{1}{\sqrt{2}} d_{l}(\mu \conv \mu, \nu \conv \nu).
$$
Combining these results, we find
$$
d_{l}(\theta(\mu), \theta(\nu)) \le \frac{1}{\sqrt{2}} \cdot 2^{1/l} d_{l}(\mu, \nu) = 2^{\frac{1}{l} - \frac{1}{2}} d_{l}(\mu, \nu).
$$
Since we chose $l > 2$, the term $1/l$ is strictly less than $1/2$. Consequently, the exponent $\frac{1}{l} - \frac{1}{2}$ is strictly negative, which means the coefficient $k=2^{\frac{1}{l} - \frac{1}{2}}$ is strictly less than 1. Thus, $\theta$ is a strict contraction, which satisfies the contractive condition of Theorem~\ref{thm:gCLT}.
\end{proofE}
\end{proofE}
\begin{proof}[Sketch of proof]
One first verifies that $\eCMet(1, \Prob(V)_x)$ (resp.\ $\eCMet(1, \ProbZ(V)_M)$) is a metric space, for each $x \in |V|$ (resp.\ $M \in \ProbZ(V)$) rather than an extended metric space. Then one must explicitly computes a Lipschitz bound on the operator
$
\mu \mapsto \frac{1}{\sqrt{2}}\mu\conv \mu
$
within  $\eCMet(1, \Prob(V)_x)$ for each $x$. 
\end{proof}

Once one has a CLT-system, by checking the conditions of Theorem \ref{thm:banach_fixed_point}, we can deduce the existence of a \emph{central limit} in each fibre.
\begin{theoremE}[Categorical Central Limit Theorem]
\label{thm:gCLT}
Suppose $(\enr{F}, \enr{G}, p, \Delta_{\enr{F}}, \Delta_{\enr{G}}, \mu)$ is a CLT system. Then for every generalised point $p\colon  1 \to \enr{G}(X)$ in $\DtoSNorm_\ast \UCat{D}$, the convolution operator $\theta_p$ on the fibre $\enr{F}(X)_p$ has a unique fixed point $\mathcal{N}_p \in \UCat{D}(1, \enr{F}(X)_p)$ which we call the \textbf{central limit}. For any initial point $x_0 \in \UCat{D}(1, \enr{F}(X)_p)$, this fixed point is given by the limit:
    $$
    \mathcal{N}_p = \lim_{n\to \infty} (\theta_p)^{\circ n}(x_0).
    $$%
\end{theoremE}
\begin{proofE}
We need to establish that $\mathcal N_p$ exists for each generalised point $p\colon  \1_{\UCat{D}} \to \enr{G}(X)$ in $\DtoSNorm_\ast \UCat{D}$. 

By Definition \ref{def:contractivity_on_Fibres}, the fibre $\enr{F}(X)_p$ taken in $\DtoSNorm_\ast(\UCat{D})_{\preceq e}$ is metrically small and the operator $\theta_p$ satisfies $|\theta_p| \prec e$. Finally, by Theorem \ref{structural_CLT}, $\UCat{D}(\1_{\UCat{D}}, J)$ is nonempty

Moreover, by the construction in Theorem \ref{structural_CLT}, $\theta_p$ is an endomorphism in the seminorm category $\DtoSNorm_\ast(\UCat{D})_{\preceq e}$. The underlying distance space $\UCat{D}(1, \enr{F}(X)_p)$ inherits the structure of a complete $\V$-space from $\enr{F}(X)_p$.

We can therefore apply Theorem \ref{thm:banach_fixed_point} to the operator 
$$
(\theta_p)_\ast\colon  \UCat{D}(\1_{\UCat{D}}, \enr{F}(X)_p) \to \UCat{D}(\1_{\UCat{D}}, \enr{F}(X)_p)
$$ 
defined by $(\theta_p)_\ast(\phi) = \theta_p \circ \phi$ for $\phi\colon  \1_{\UCat{D}} \to \enr{F}(X)_p$ to conclude that it has a unique fixed point $\mathcal N_p \in \UCat{D}(\1_{\UCat{D}}, \enr{F}(X)_p)$ characterised by 
$$
\mathcal N_p = \lim_{n \to \infty} (\theta_p)^{\circ n}(\phi_0)
$$
for any initial point $\phi_0\colon  \1_{\UCat{D}} \to \enr{F}(X)_p$. By Theorem \ref{thm:banach_fixed_point}, this limit exists and can be computed as  $\lim_{n\to \infty}\left(\theta_p \right)^{\circ n}(x_0)$. This completes the first part of the theorem.
\end{proofE}
By applying this theorem to our main examples, we recover versions of the Law of Large Numbers and the Central Limit theorem as immediate corollaries.
\begin{corollaryE}[Probabilistic limiting theorems]
\label{cor:plts}
\
\begin{enumerate}
\item
Let $\mu$ be a probability measure on a finite vector space $V$ with
$
\int \|x\|^{\lambda} d\mu < \infty \mbox{ for some } \lambda > 1.
$
Then
$
\frac{1}{2^n}\mu^{\conv 2^n}
$
converges as $n\to \infty$ and the limit depends only on the expectation of $\mu.$ 
\item
Let $\mu$ be a probability measure on a finite vector space $V$ with expected value $0$ and 
$
\int \|x \| ^{\lambda} d\mu < \infty \mbox{ for some } \lambda > 2.
$
Then
$
\frac{1}{\sqrt{2}^n}\mu^{\conv 2^n}
$
converges as $n\to \infty$ and the limit depends uniquely on the variance matrix of $\mu.$ 
\end{enumerate}
\end{corollaryE}
\begin{proofE}
The full statements are as follows and the proofs are given in section E.
\begin{corollaryE}[The law of large numbers]
Let $\mu$ be a probability measure on a finite vector space $V$ with
$$
\int x^{\lambda} d\mu < \infty \mbox{ for some } \lambda > 1
$$
then
$
\frac{1}{2^n}\mu^{\conv 2^n}
$
converges as $n\to \infty.$ 
\end{corollaryE}
\begin{corollaryE}[ Probabilistic CLT]
Let $\mu$ be a probability measure on a finite vector space $V$ with expected value $0$ and 
$$
\int \|x \| ^{\lambda} d\mu < \infty \mbox{ for some } \lambda > 2
$$
Then
$
\frac{1}{\sqrt{2}^n}\mu^{\conv 2^n}
$
converges as $n\to \infty$ and the limit depends uniquely on the variance matrix of $\mu.$ 
\end{corollaryE}
To prove these results, we first have the following easy lemma
\begin{lemmaE}
\label{lem:LLN}
Let $\mu$ be a probability measure on a space $V$. Then for all $n \in \mathbb{N}$:
\label{lem:LLN:eq}
$$
\frac{1}{2^n} (-)^{\conv 2^n} = \left( \frac{1}{2} (-)^{\conv 2} \right)^{\circ n};
$$
\end{lemmaE}

\begin{proofE}
We prove this statement by induction on $n$.
\emph{Base case} ($n = 1$):  
$$
\left( \frac{1}{2} \mu^{\conv 2} \right)^{\circ 1} = \frac{1}{2} \mu^{\conv 2}
$$
so item~\ref{lem:LLN:eq} holds. 

\emph{Inductive step}: Suppose the result holds for some $n \in \mathbb{N}$. Then,
$$
\left( \frac{1}{2} \mu^{\conv 2} \right)^{\circ (n+1)} 
= \frac{1}{2} \left( \left( \frac{1}{2} \mu^{\conv 2} \right)^{\circ n} \right)^{\conv 2}
= \frac{1}{2} \left( \frac{1}{2^n} \mu^{\conv 2^n} \right)^{\conv 2}
= \frac{1}{2^{n+1}} \mu^{\conv 2^{n+1}},
$$
where the final identity proves both item~\ref{lem:LLN:eq} for $n$. Thus, the first result holds for all $n$ by induction.
\end{proofE}
The first statement therefore follows from Theorem \ref{thm:gCLT} applied to the CLT-system  $(\Prob, |-|, \exp$ for $l > 1$, which was proven to be a CLT system in Theorem \ref{llnclt}. The convolution operator has the correct form by Proposition \ref{prop:axiomisedconvolution}.

The second theorem by the same argument applied to the CLT-system  $(\ProbZ,\PosDef, \Var)$ for $l > 1$.

\end{proofE}
\begin{example}
For the Cartesian CLT-system $(\ProbZ,\PosDef, \Var)$, the central limit is the Dirac delta measure; for the Cartesian CLT-system $(\ProbZ,\PosDef, \Var)$, the central limit is the Gaussian distribution. This can can be checked by showing that it is a fixed point of the $p$-restricted operator, we can then conclude by the uniqueness of central limits.
\end{example}
\begin{proofE}
The theorem therefore follows from the statement of Theorem \ref{thm:gCLT} applied to the CLT-system  $(\Prob, |-|, \exp, 2^l)$ for $l > 1$ 
\end{proofE}

The central limit has convenient functoriality properties.
\begin{theoremE}[Functoriality of the central limit]
\label{thm:functorialCLT}
In a Lipschitz category $\UCat{D}$, the map
    $$
    \eta_X\colon  \Cat{D}_{lin} \left(\1_{\UCat{D}},  \enr{G}(X) \right) \to \Cat{D}_{lin} \left(\1_{\UCat{D}},  \enr{F}(X)\right) \quad p \mapsto   i_p \circ \mathcal{N}_p
    $$
    where $i_p\colon  \enr{F}(X)_p  \to \enr{F}(X)$ is the canonical inclusion, defines a natural transformation in $\Set$ 
   $$
   \eta\colon  \Cat{D} _{lin}\left(\1_{\UCat{D}},  \enr{G}(-) \right) \to \Cat{D}_{lin} \left(\1_{\UCat{D}},  \enr{F}(-)\right)
    $$ which we call the \textbf{central limit natural transformation in $\Set$}.
\end{theoremE}
\begin{proofE}
The Lipschitz category condition, by Proposition \ref{prop:constant_morph},  guarantees that every point of $\1_{\Set} \to  \Cat{C}\left(1, \enr{F}(X)_p\right)$ in $\Set$ corresponds to a point $\1_{\SNorm} \to \DtoSNorm_\ast\UCat{D}\left(\1_{\UCat{D}}, \enr{F}(X)_p\right)$ in $\SNorm$. This ensures that the map
   $$
    \eta_X\colon  \Cat{D}_{lin} \left(\1_{\UCat{D}},  \enr{G}(X) \right) \to \Cat{D}_{lin} \left(\1_{\UCat{D}},  \enr{F}(X)\right)
    $$
    $$
     p \mapsto   i_p \circ \mathcal{N}_p
    $$
    is well-defined.

\noindent\textbf{Step 1: The naturality of $\theta$ with respect to $\enr{F}_{lin}$ and $\enr{G}_{lin}$}
 First, we show that $\theta^F$ and $\theta^G$ both define endomorphisms of $\enr{F}_{lin}$ and $\enr{G}_{lin}$, respectively. This is a routine verification and without loss of generality, it suffices to consider the case of $\enr{F}$. This means showing that given a morphism $f\colon X \to Y$, one has the equality $ \enr{F}_{lin}(f) \circ \theta^F_X = \theta^F_Y \circ  \enr{F}_{lin}(f)$. We check this by decomposing $\theta^F_X$ and $\theta^G_X$ into components.

The diagonal map $\Delta$ is natural transformation $ \enr{F}_{lin}\to  \enr{F}_{lin}\ptens  \enr{F}_{lin}$. Thus, we have $\Delta_Y \circ  \enr{F}_{lin}(f) = ( \enr{F}_{lin}(f) \ptens  \enr{F}_{lin}(f)) \circ \Delta_X$.

Now we can rewrite the composition:
\begin{align*} \theta^F_Y \circ  \enr{F}_{lin}(f) &= [c, e]\act \enr{F}_{lin}(+_Y) \circ \mu_{Y,Y} \circ \Delta_Y \circ  \enr{F}_{lin}(f) \\ &= [c, e]\act \enr{F}_{lin}(+_Y) \circ \mu_{Y,Y} \circ ( \enr{F}_{lin}(f) \ptens  \enr{F}_{lin}(f)) \circ \Delta_X \end{align*}
Since $ \enr{F}_{lin}$ is a lax monoidal functor, its multiplication $\mu$ is a natural transformation. The naturality square for $\mu$ with respect to the morphism $f\colon  X \to Y$ is $\mu_{Y,Y} \circ ( \enr{F}_{lin}(f) \ptens  \enr{F}_{lin}(f)) =  \enr{F}_{lin}(f \ptens f) \circ \mu_{X,X}$. 

Furthermore, the linear subcategory consists only of those morphisms in $\UCat{C}$  compatible with the pointwise addition, meaning $f \circ +_X = +_Y \circ (f \ptens f)$. Applying the functor $ \enr{F}_{lin}$ yields $ \enr{F}_{lin}(f) \circ  \enr{F}_{lin}(+_X) =  \enr{F}_{lin}(+_Y) \circ  \enr{F}_{lin}(f \ptens f)$. Substituting this into our equation:
\begin{align*} \theta^F_Y \circ  \enr{F}_{lin}(f) &= [c, e]\act \enr{F}_{lin}(f) \circ  \enr{F}_{lin}(+_X) \circ \mu_{X,X} \circ \Delta_X \\ &=  \enr{F}_{lin}(f) \circ \left([c, e]\act \enr{F}_{lin}(+_X) \circ \mu_{X,X} \circ \Delta_X\right) \\ &=  \enr{F}_{lin}(f) \circ \theta^F_X \end{align*}
This confirms that $ \enr{F}_{lin}(f) \circ \theta^F_X = \theta^F_Y \circ  \enr{F}_{lin}(f)$, so $\theta$ is a natural transformation from $ \enr{F}_{lin}$ to itself.

A symmetric argument establishes the naturality of  $\theta^{\enr{G}}$ with respect to  $\enr{G}_{lin}$.

\noindent\textbf{Step 2: Naturality descends to fibres}

The construction of fibres is functorial and therefore we have a factorization in the following diagram 
$$
\begin{tikzcd}
 \ & \ &  F(Y)_{\enr{G}_{lin}(f) \circ p}  \arrow[d,  "i_{\enr{G}_{lin}(f) \circ p}"]
\\
\enr{F}_{lin}(X)_p \arrow[r, "i_p"]  \arrow[urr, dotted, bend left, "q^f_p"] \arrow[d]& \enr{F}_{lin}X \arrow[d, "p_X"] \arrow[r, "\enr{F}_{lin}(f)"] & \enr{F}_{lin}Y \arrow[d, "p_Y"]
\\
\1_{\UCat{D}} \arrow[r, "p"] & \enr{G}_{lin}X  \arrow[r, "\enr{G}_{lin}(f)"]  & \enr{G}_{lin}Y
\end{tikzcd}
$$
 To be precise, the morphism $q^f_p$  can be explicitly constructed the limit of the following morphism of cospans
\begin{equation}
\label{eq:cospan}
\begin{tikzcd}
\1_{\UCat{D}} \arrow[r, "p"]  \arrow[d]& \enr{G}_{lin}X \arrow[d, "\enr{G}_{lin}(f)"]& \enr{F}_{lin}X \arrow[d, "\enr{F}_{lin}(f)" ] \arrow[l, "p_X", swap] 
\\
\1_{\UCat{D}} \arrow[r, "\enr{G}_{lin}(f)\circ p"] & \enr{G}_{lin}Y   & \enr{F}_{lin}Y  \arrow[l, "p_Y"]
\end{tikzcd}
\end{equation}
Let $\mathcal N_x$ be the unique fixed point of $\theta^F_X$ in the fibre $\UCat{D}(\1_{\UCat{D}}, \enr{F}_{lin}(X)_p)$, then apply the pushforward 
 $$(q^f_p)_\ast\colon  \UCat{D}(\1_{\UCat{D}}, \enr{F}_{lin}(X)_p) \to  \UCat{D}(\1_{\UCat{D}}, F_{\enr{G}_{lin}(f) \circ p}). $$
 One obtains a generalised point of $F_{\enr{G}_{lin}(f) \circ p}.$ If we can show that it is a fixed point of $\theta_{\enr{G}_{lin}(f)\circ p}$, then the result will follow from Theorem \ref{thm:gCLT}  on the uniqueness of such points. To that end, we next establish equivariance on fibres.

\noindent\textbf{Step 3: Equivariance on fibres}

In this step, we shall prove that the following diagram commutes.
$$
\begin{tikzcd}
\enr{F}_{lin}(X)_p \dar{q^f_p} \arrow[r, "\theta_{p}"] & \enr{F}_{lin}(X)_p \dar{q^f_p}
\\
\enr{F}_{lin}(Y)_{\enr{G}_{lin}(f)\circ p} \arrow[r, "\theta_{\enr{G}_{lin}(f)\circ p}"] & \enr{F}_{lin}(Y)_{\enr{G}_{lin}(f)\circ p}
\end{tikzcd}
$$
This is easy to see as the diagram above is the limit of the following diagram, and we showed in Step 1 that all the arrows in this diagram commute.
\[
\begin{tikzcd}[row sep=2.5em, column sep=3em]
    \1_{\UCat{D}} 
        \arrow[rr, "id"]                 
        \arrow[dr, "id"]                 
        \arrow[dd, "p"']                 
    & & 
    \1_{\UCat{D}} 
        \arrow[dr, "id"] 
        \arrow[dd, "\enr{G}_{lin}(f)\circ p"] 
    \\
    & \1_{\UCat{D}} 
        \arrow[rr, "id" near start]      
        \arrow[dd, "p" near start]       
    & & 
    \1_{\UCat{D}} 
        \arrow[dd, "\enr{G}_{lin}(f)\circ p"] 
    \\
    \enr{G}_{lin}(X) 
        \arrow[rr, "\enr{G}_{lin}(f)", near start, crossing over] 
        \arrow[dr, "\theta^G" near start] 
    & & 
    \enr{G}_{lin}(Y) 
        \arrow[dr, "\theta^G"] 
    \\
    & \enr{G}_{lin}(X) 
        \arrow[rr, "\enr{G}_{lin}(f)" near start] 
    & & 
    \enr{G}_{lin}(Y) 
    \\
    \enr{F}_{lin}(X) 
        \arrow[rr, "\enr{F}_{lin}(f)", near end, crossing over] 
        \arrow[dr, "\theta^F" near end] 
        \arrow[uu, "p_X" near start, crossing over]
    & & 
    \enr{F}_{lin}(Y) 
        \arrow[dr, "\theta^F"] 
        \arrow[uu, "p_Y"' near start, crossing over] 
    \\
    & \enr{F}_{lin}(X) 
        \arrow[rr, "\enr{F}_{lin}(f)"] 
        \arrow[uu, "p_X"]            
    & & 
    \enr{F}_{lin}(Y) 
        \arrow[uu, "p_Y"'] 
\end{tikzcd}
\]

It follows that they induce maps at the fibre level as it is a limit of these diagrams.

\noindent\textbf{Step 4: The induced map on fixed points.}

We may now reduce to an argument about fixed points in the concrete category $\Set$.  In our previous step, we showed that the map $\V$ is natural. 

It follows that  $q_\ast: \UCat{D}(\1_{\UCat{D}}, \enr{F}_{lin}(X)_p) \to  \UCat{D}(\1_{\UCat{D}}, F_{\enr{G}_{lin}(f) \circ p}) $ is equivariant under the action of $\theta$.  It therefore sends fixed points to fixed points since these are unique in Theorem \ref{thm:gCLT} . The conclusion follows.

\textbf{Step 5: Concluding naturality}

Recall that define the components of the transformation $\eta$ in $\Set$ by $\eta_X(p) := \mathcal{N}_p$, where $\mathcal{N}_p$ is the unique fixed point of the operator $\theta_p$ in the fibre $\enr{F}(X)_p$. We have  shown that 

$$ 
p \to \mathcal N_p
$$
commutes with any linear map $f\colon X\to Y$, in the sense that .
$$ 
f(p) \to \mathcal N_{f(p)}
$$
This completes the proof that in $\Set$, the map
 $$
   \eta\colon  \Cat{D} _{lin}\left(\1_{\UCat{D}},  \enr{G}(-) \right) \to \Cat{D}_{lin} \left(\1_{\UCat{D}},  \enr{F}(-)\right)
  $$  
  is a natural transformation.
\end{proofE}
\begin{proof}[Sketch of proof]
The result essentially follows from the uniqueness of fixed points. One shows by diagram chasing that given a map $f\colon X\to Y,$ there is an induced $F(X)_p \to F(Y)_{f(p)}$, and this map is equivariant with respect to the $p$-restricted convolution $\theta_p$. In particular, this equivariance implies that it sends fixed points to fixed points, and therefore must send the central limit in each fibre to the central limit in the other fibre. The conclusion follows.
\end{proof}
\begin{example}
The category $\eCMet$ is Lipschitz, so in the case of the law of large numbers, this theorem immediately gives the rather obvious result that given a linear map  $f\colon  \mathbb R^n \to \mathbb R^n$ the pushforward of the Dirac delta distribution $\delta_x$ along $\Prob(f)\colon  \Prob(\mathbb R^n) \to \Prob(\mathbb R^n)$ will be the Dirac delta distribution $\delta_{f(x)}$. For the case of the central limit theorem, it says the image of the Gaussian distribution $\mathcal N(0, M)$ along $\Prob(f)\colon  \Prob(\mathbb R^n) \to \Prob(\mathbb R^n)$ along a linear map will be the Gaussian distribution $\mathcal N(0, fMf^T)$.
\end{example}
\section{The CLT for Observables}
\label{sec:pushforward}
\pratendSetLocal{category=pushforward}
We conclude our examples by formulating and proving a novel Central Limit Theorems for Observables (CLTO). This illustrates how more complicated CLTs can be built from simpler ones.
\subsection*{The category of observables}
The intuition behind the following definition is that a measurable space $M$ acts as a black box, while functions $H\colon M \to \mathbb{R}$ represent readable outputs.

A \emph{internal dilated (i.d.) category} is a pair $(\Cat{C},\UCat{D})$ such that $\UCat{D}$ is a monoidal dilated category and $\Cat{D}$ is a monoidal subcategory of $\Cat{C}$. A \emph{i.d.\ functor} is a pair $(F, \enr{G}): (\Cat{C}, \UCat{D})  \to (\Cat{A}, \UCat{B})$ such that  the underlying functor $G$ is a restriction of $F$ to $\UCat{D}.$
\begin{definition}
Let $(\CC, \UCat{D})$ be an i.d.\  category.  The \emph{category of observables $\eObs{M, \UCat{D}}$ from $M\in \Cat{C}$} is the comma category $M/ \Cat{D}$. Explicitly,  $\eObs{M, \UCat{D}}$, has for objects pairs $(V, H)$ where $V \in\UCat{D}$ and $H \in  \Cat{C}(H,V)$ is a morphism we call the \emph{observable}. A morphism from $(V, H_1)$ to $(W, H_2)$ is a morphism $\beta \in \UCat{D}(V, W)$.
    \begin{equation}
    \label{cat_def}
    \begin{tikzcd}
    M \arrow[r, "H_1"] \arrow[rd, "H_2"'] & V  \arrow[d, "\beta"'] \\
      \ & W
    \end{tikzcd}
    \end{equation}
    This is a dilated category.  The distance between two morphisms $(\alpha_1, \beta_1), (\alpha_2, \beta_2)$ is 
     $$
    d_{\eObs{\Cat{C}, \UCat{D}}}\left((\alpha_1, \beta_1), (\alpha_2, \beta_2)\right)  = d_{\UCat{D}}(\beta_1, \beta_2)
        $$
    The seminorm is the Lipschitz constant $| \beta |$ of the map.
 \end{definition}
\begin{propositionE}
The category $\eObs{M, \UCat{D}}$ is a symmetric monoidal dilated category with pointwise addition. Moreover, if $(F, \enr{G}): (\Cat{C}, \UCat{D}) \to (\Cat{A}, \UCat{B})$ is an i.d.\ dilated functor; there is an induced monoidal dilated  functor 
 $$\eObs{F, \enr{G}} \colon \eObs{M, \UCat{D}} \to  \eObs{FM, \UCat{E}}.$$
\end{propositionE}
\begin{proofE}
The tensor product is defined component-wise using the monoidal products in $\Cat{C}$ and the tensor product in $\UCat{D}$. Given two objects $X =(V, H_1)$ and $Y =(W, H_2)$:
$$
 X \tens Y  :=(V \times W, H_1 \times  (H_2 \circ) \triangle_{M}) 
 $$
where $\triangle_{M}: M\to M\times M$ is the diagonal, which is always in $\Cat{C}$ because $\Cat{C}$ is assumed Cartesian. 

The unit object is the terminal morphism $M \to 1_{\Cat{D}}$ The various monoidal structure maps come from post composing $\Cat{D}$.

The pointwise addition is the operation defined by the following morphism:
$$
+_{\UCat{D}}:M \xrightarrow{\triangle_M} M \times M \xrightarrow{H \times H } V\times V \xrightarrow{+} V
$$
There is a functor of sets
$$
\Fpush:\eObs{M, \UCat{D}} ((H_1, V),(H_2, W) ) \xrightarrow{\sim} \UCat{D}(V, W) $$
$$
(H, V) \mapsto H.
$$
which is compatible with composition and we equip $\eObs{M, \UCat{D}}$ (we shall refer to this functor as enriched in the next definition) we can thus the induced $\DNorm$ structure.

The functorial claim follows from observing that $\Fpush$ induces a commutative diagram 
$$
\begin{tikzcd}
    \Obs{M, \UCat{D}} \arrow[d, "{\Obs{F, "\enr{G}}}"] \arrow[r, "{\Fpush(M, \UCat{D})}"] 
    & \UCat{D} \arrow[d, "\enr{G}"] 
    \\
    \Obs{FM, \UCat{B}} \arrow[r, "{\Fpush(FM, "\UCat{B})}"] 
    & \UCat{B}
\end{tikzcd}
$$
as the functor $\Fpush$ depends only on the codomain and not on $M$ or $H$ directly.
\end{proofE}

There is a strong monoidal dilated functor $\Obs{M, \UCat{D}}$ to $\UCat{D}$.

\begin{definition}
The \emph{observation functor} is a functor
$$\Fpush(M, \UCat{D}) \colon \eObs{M, \UCat{D}}\to \UCat{D}, \qquad(H, V) \mapsto H.$$
\end{definition}
This functor is just one of the projections in the comma category.
\begin{proposition}
\label{obsfunctor}
The observation functor is strong monoidal, preserves the seminorm and commutes with $\Cat{C}$-dilated functors.
\end{proposition}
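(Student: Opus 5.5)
The observation functor $\Fpush(M,\UCat{D})$ is the codomain projection of the comma category $M/\Cat{D}$. The hom-objects of $\eObs{M,\UCat{D}}$ were defined by transport along this projection. So each of the three claims should reduce to unwinding definitions. I would verify them in the order: enrichment and seminorm preservation, then strong monoidality, then compatibility with i.d.\ functors.

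\textbf{Seminorm.} By definition, the distance between two morphisms $\beta_1,\beta_2\colon (V,H_1)\to(W,H_2)$ in $\eObs{M,\UCat{D}}$ is $d_{\UCat{D}}(\beta_1,\beta_2)$. The seminorm is $\abs{\beta}$, and the action of $\Vint$ is inherited from $\UCat{D}(V,W)$. Hence the map on hom-objects $\eObs{M,\UCat{D}}((V,H_1),(W,H_2))\to\UCat{D}(V,W)$, $\beta\mapsto\beta$, is an isometric inclusion in $\DNorm$ that preserves seminorms and the action. Its image is the subspace of those $\beta$ with $\beta\circ H_1=H_2$. I would check that this subspace is geometrically complete, exactly as in the equaliser argument in the proof of \Cref{thm:cosmoi}: precomposition with $H_1$ is nonexpansive, so by \Cref{lem:convergence} it preserves limits, and the closed condition $\beta\circ H_1=H_2$ is therefore preserved. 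Compatibility with composition and identities is immediate because composition in the comma category is composition in $\UCat{D}$. So $\Fpush$ is a dilated functor with $\abs{\Fpush(\beta)}=\abs{\beta}$; in particular it preserves the seminorm, not merely $\le$.

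\textbf{Strong monoidality.} The tensor is $(V,H_1)\tens(W,H_2)=(V\times W,(H_1\times H_2)\circ\triangle_M)$ and the unit is $M\to\1_{\Cat{D}}$. Therefore $\Fpush$ sends tensor to tensor and unit to unit on the nose. I would take the structure maps $\mu$ and $\eta$ to be identities in $\UCat{D}$, which have seminorm $e$ by \Cref{prop:identites}; alternatively one uses that isomorphisms have seminorm $e$. The coherence diagrams reduce to those of $\UCat{D}$, because the associators and unitors of $\eObs{M,\UCat{D}}$ were defined by postcomposition with those of $\UCat{D}$, and $\Fpush$ sends them to exactly these maps. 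The one point needing care is to confirm that these structure maps really are morphisms of the comma category. This amounts to the fact that the associator of $\Cat{C}$ intertwines $(H_1\times H_2)\times H_3$ composed with the iterated diagonals. That follows from the universal property of the Cartesian product in $\Cat{C}$.

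\textbf{Commutation with i.d.\ functors.} Given $(F,\enr{G})$, the square $\Fpush(FM,\UCat{B})\circ\eObs{F,\enr{G}}=\enr{G}\circ\Fpush(M,\UCat{D})$ was already noted in the preceding proposition. Both composites send $(V,H)$ to $GV$ and $\beta$ to $G\beta$, since $\eObs{F,\enr{G}}(V,H)=(GV,FH)$ and $G$ is the restriction of $F$. I would check equality on objects, on morphisms, and on the monoidal structure maps. The main obstacle, I expect, is the monoidal part of this last step: one must ensure that the lax structure of $\eObs{F,\enr{G}}$ is the one induced from $\enr{G}$, so that the square commutes as lax monoidal functors. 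I would try to establish this by defining the lax structure of $\eObs{F,\enr{G}}$ to be $\mu(\enr{G})$ itself and verifying that it is a comma morphism using the lax monoidality of $F$ on $\Cat{C}$.
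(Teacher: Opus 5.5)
Your unwinding of the three claims agrees with the paper's proof, which consists only of this. By the definition of the seminorm on $\eObs{M,\UCat{D}}$ one has $\abs{\Fpush(\beta)}=\abs{\beta}$. Tensor and unit go to tensor and unit on the nose. The square with $\enr{G}$ commutes because $\Fpush$ only records codomains.

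You are right that, read literally as a comma category, the hom-map into $\UCat{D}(V,W)$ is only an inclusion. The paper glosses over this by writing it as an isomorphism. In effect it transports the whole dilated structure along $\Fpush$ and treats it as settled by the preceding proposition. Your attempt to justify the inclusion instead does not work, for two reasons:
\begin{itemize}
\item The subset $\{\beta : \beta\circ H_1=H_2\}$ is not closed under the $\Vint$-action. In $(\Meas_b,\eFinVect)$ one has $(c\act\beta)\circ H_1=c\,H_2$, and the basepoint $\bot\act\beta$ (the zero map) violates the triangle unless $H_2=0$. So the action is not inherited, and you do not get an object of $\DNorm$.
\item Your completeness argument applies \Cref{lem:convergence} to $\beta\mapsto\beta\circ H_1$. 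But $H_1$ is a morphism of the unenriched category $\Cat{C}$, so $\Cat{C}(M,W)$ carries no distance and ``nonexpansive'' has no meaning there. The equaliser argument in \Cref{thm:cosmoi} needs both parallel maps to be morphisms of $\SNorm$.
\end{itemize}
For the proposition as stated, simply take the dilated structure of $\eObs{M,\UCat{D}}$ as given; seminorm preservation is then immediate.

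The route you propose for the monoidal part of the last step also fails. For $\mu(\enr{G})_{V,W}$ to be a morphism of the comma category you need
\[
\mu(\enr{G})_{V,W}\circ(FH_1\times FH_2)\circ\Delta_{FM}=F\bigl((H_1\times H_2)\circ\Delta_M\bigr).
\]
Lax monoidality of $F$ does not give this. Via naturality it would follow from $\mu(F)_{M,M}\circ\Delta_{FM}=F(\Delta_M)$, which is false for the Giry functor. Concretely, the left-hand side sends $\nu$ to $(H_1)_*\nu\otimes(H_2)_*\nu$, while the right-hand side sends it to the joint law $\langle H_1,H_2\rangle_*\nu$. These differ as soon as $H_1,H_2$ are dependent under $\nu$, e.g.\ $H_1=H_2=H$ with $H_*\nu$ not a Dirac measure. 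The paper never attempts this upgrade; its proof only shows that the square of underlying dilated functors commutes. You should drop the monoidal compatibility rather than try to establish it with $\mu(\enr{G})$.
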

\begin{proofE}
We explicitly verify the properties sequentially. 
The functor induces $\DNorm$-identities on the objects 
$$
\Fpush:\eObs{M, \UCat{D}} ((H_1, V),(H_2, W) ) \xrightarrow{\sim} \UCat{D}(V, W) 
$$
compatible with composition by the construction of dilated category structure. To establish that $\Fpush$ is dilated, let 
$$(\alpha, \beta) \in \Obs{\Cat{C}, \UCat{D}}((V, H_1), (N, W, H_2))$$
 be a morphism. 
By definition, the seminorm $\abs{(\alpha, \beta)}_{\Obs{\Cat{C}, \UCat{D}}}$ is the  seminorm of the underlying map $\beta \in\UCat{D}.$
Thus, the functor is dilated and preserves the seminorm.

It is clearly strong monoidal as it induces that structure on $\eObs{M, \UCat{D}}$.

Finally, it straightforwardly follows from the definition that we have a diagram
$$
\begin{tikzcd}
    \Obs{M, \UCat{D}} \arrow[d, "{\Obs{F, "\enr{G}}}"] \arrow[r, "{\Fpush(M, \UCat{D})}"] 
    & \UCat{D} \arrow[d, "\enr{G}"] 
    \\
    \Obs{FM, \UCat{B}} \arrow[r, "{\Fpush(FM, "\UCat{B})}"] 
    & \UCat{B}
\end{tikzcd}
$$
as the functor $\Fpush$ depends only on the codomain and not on $M$ or $H$.
\end{proofE}
We now prove the main result:
\begin{theoremE}
\label{thm:pushforwardCLT}
Let $M \in \Cat{A}$ be an object, and let  $(F, \enr{G}): (\Cat{A}, \UCat{C}) \to (\Cat{B}, \UCat{D})$ be i.d.\ functor, such that $(\enr{G}, \enr{U}, p)$ is a Cartesian CLT-system for some $\enr{U}$. Suppose that the inclusion $\Cat{D} \to \Cat{B}$ is continuous and cocontinuous. Then the tuple
$$
(\eObs{F M, \enr{G}}, p_{\Fpush(-)}\circ \eObs{M, \enr{G}}, \eObs{M, p})
$$
is also a Cartesian CLT-system.
\end{theoremE}
\begin{proofE}
We follow the two usual steps: first we verify it is a pre-CLT system and then we check it is a CLT-system.

 Most of this follows from the diagram chasing; in general we only need to check the commutativity of diagrams post-application of $\Fpush$. The only place where we need be more careful than this is when computing the fibre itself during the CLT-verification. The limit will be computed in the comma category $ \eObs{M, \UCat{D}}$ which has underlying category $M / \Cat{D}$. Note that, by assumption, the inclusion $\Cat{D}\to \Cat{B}$ is limit preserving. It follows by that $\Fpush$ preserves this limit.

We move onto the formal verification of the commutativity of the pre-CLT diagrams.

First, we note that $\eObs{FM, p}$ is correctly typed. The component of this transformation at an object $X = (V, H)$ in $\eObs{M, \UCat{D}}$ is defined as the morphism 
$$
FM \xrightarrow{FH} \enr{G}V \xrightarrow{p_V} \enr{U}V
$$ 
in the category $\eObs{FM, \UCat{E}}.$

Naturality means that we need to check that the following diagram commutes:
   \begin{equation}
    \label{cat_def}
    \begin{tikzcd}
    FM \arrow[r, "FH_1"] \arrow[rd, "FH_2"'] & FV  \arrow[d, "F\beta"']  \rar{p_V}& U V \dar{U\beta}\\
      \ & FW \rar{p_W} & UW
    \end{tikzcd}
    \end{equation}
     and this follows from the fact that the top line is precisely $p_{\Fpush(-)}\circ \eObs{M, \enr{U}}(H_1, V) $ and commutativity of the whole diagram follows from the naturality of $p$ itself with respect to maps in $\beta \in \UCat{D}(V, W).$

Next, we check that the grading commutes with the lax monoidal maps and the addition operation.

The tensor product in $\eObs{M, \UCat{D}}$ is defined component-wise on the codomain, and the observation functor $\Fpush$ is strong monoidal by Proposition \ref{obsfunctor}. Consequently, the diagram describing the compatibility of the grading with the lax monoidal structure in $\eObs{M, \UCat{D}}$ maps under $\Fpush$ precisely to the corresponding diagram in $\UCat{D}$:
\begin{equation*}
\begin{tikzcd}
\Fpush(\enr{G}X \ptens \enr{G}Y) \arrow[d, "\Fpush(\mu^F)"] \arrow[r, "p \ptens p"]& \Fpush(\enr{U}X \ptens \enr{U}Y) \arrow[d, "\Fpush(\mu^G)"]\\
\Fpush(\enr{G}(X \ptens Y)) \arrow[r, "p"]& \Fpush(\enr{U}(X \ptens Y))
\end{tikzcd}
\end{equation*}
Since $(\enr{G}, \enr{U}, p)$ is a pre-CLT system in the base category, this projected diagram commutes. It follows that the original diagram in $\eObs{M, \UCat{D}}$ commutes.

Similarly, the addition morphism $+_X$ in $\eObs{M, \UCat{D}}$ is defined such that $\Fpush(+_X) = +_{\Fpush(X)}$. The requirement that the grading commutes with addition projects via $\Fpush$ to the condition:$$p_{\Fpush(X)} \circ \enr{G}(+_{\Fpush(X)}) = \enr{U}(+_{\Fpush(X)}) \circ p_{\Fpush(X \ptens X)}$$This holds by the hypothesis that the base system is a pre-CLT system. 

The final condition that the bottom convolution is perfectly rescalable follows easily since the same holds for its image under $\Fpush.$   Thus, the lifted grading satisfies all the conditions of a pre-CLT system.
\\
Next, we verify the CLT conditions. We have the following
$$
\Fpush \left(\eObs{M, \enr{G}}(X)_y \right)= \enr{G}(V)_{\Fpush(y)}.
$$
We have already shown that $\Fpush$ is an isometry. Therefore, this is an isometric isomorphism. Since the base fibre $\enr{G}(V)_{\Fpush(y)}$ is metrically small by the hypothesis that the original system is CLT, the lifted fibre $\eObs{M, \enr{G}}(X)_y$ must be metrically small, as the set of points is determined in $\UCat{D}$.

Consider the convolution operator $\theta_y$ on the fibre $\eObs{M, \enr{G}}(X)_y$. As established in the pre-CLT verification, it follows that the image of $\theta_y$ under $\Fpush$ is precisely $\theta_{\Fpush(y)}$. The seminorm in $\eObs{M, \UCat{D}}$ is defined as the seminorm of the underlying morphism and since $(\enr{G}, \enr{U})$ is a CLT-system, the operator $\abs{\theta_{\Fpush(y)}}_{\UCat{D}} < e$. It follows immediately that the desired fibre is less than $e$. Thus, the system satisfies all desired CLT conditions.

Therefore the tuple 
$$
(\eObs{M, \enr{G}}, p_{\Fpush(-)}\circ \eObs{M, \enr{G}}, \eObs{FM, p})
$$
is a Cartesian CLT-system as required.
\end{proofE}
As a concrete example of this, we work with a category $\Meas_b$ defined as follows: the union of finite-dimensional vector spaces $\FinVect$ and the other measurable spaces $\Meas'$ where morphisms into $\FinVect$ are restricted to be \textbf{bounded} measurable functions. The category $\Meas^0_{b}$ will be the category of pointed such spaces where members of $\FinVect$ are taken to be canonically pointed by the origin. This defines a pair of i.d.\ categories $(\Meas_b, \eFinVect)$ and $(\Meas^0_b, \eFinVect)$. There are then i.d.\ functors $(\P, \Prob)$  and $(\P^0, \ProbZ)$ given by the Giry monad on generic measurable spaces but restricted to $\Prob$ and $\ProbZ$ on $\FinVect.$ The following is immediate.
\begin{corollaryE}
\label{pushing}
Let $M\in \Meas_b$. The tuples $$(\Obs{ \P M, \Prob},   \exp_{\Fpush(-)}\circ \Obs{\P M, \ProbZ}, \Obs{\P M, \exp})$$ 
 $$(\Obs{\P^0 M, \ProbZ},  \Var_{\Fpush(-)}\circ \Obs{\P^0 M, \ProbZ}, \Obs{\P^0 M, \Var})$$ 
 are both Cartesian CLT-system.
\end{corollaryE}
\begin{proofE}
The only thing that needs to be checked is that for $(\Meas_b, \eFinVect)$ and  $(\Meas^0_b, \eFinVect)$, the pushforward of measures from $M \in \Meas_b$ to $\mathbb R^n$ preserves moment assumptions. This is automatic since the functions into are bounded and thus the pushforward measure will automatically have bounded support, which implies that all moments are finite.

The corollary then immediately follows from Theorem \ref{thm:pushforwardCLT} applied to Theorem \ref{llnclt} along with the fact the inclusion of finite vector spaces into measurable spaces preserves finite limits (as it obviously preserves finite products and equalisers of linear maps).
\end{proofE}
\begin{example}
In the setting of statistical mechanics on a symplectic manifold~\cite{Barbaresco23, Marle16},  a \textbf{classical Hamiltonian system} is a triple $(M, \lambda_\omega, H)$, where $M$ is a compact symplectic manifold with its Liouville measure $\lambda_\omega$ (with expectation $m \in M$), and $H\colon  M \to \mathbb{R}$ is the Hamiltonian. Consider a collection of $2^N$ non-interacting, identical systems. The total energy of the ensemble is given by summing the individual energies, $H_{tot}(x_1, \dots, x_{2^N}) = \sum_{i=1}^{2^N} H(x_i)$. Since manifolds are measurable spaces, we have $(H, \mathbb{R}) \in \Obs{M, \eFinVect}.$ Assuming the pushforward of the initial energy distribution has mean zero and finite variance, \Cref{pushing} implies that the normalization of the total energy converges to the fixed point of the CLT system as the number of manifolds in the ensemble approaches infinity.
\end{example}
\begin{proofE}
Since $M$ is a compact manifold, the Hamiltonian function $H: M \to \mathbb{R}$ is continuous on a compact domain and is therefore has bounded image on $\mathbb R$. Thus, $M$ is an object of $\Meas_b^0$ (where the base point is given $m$), and, forgetting the symplective structure the pair $(\mathbb{R}, H)$ therefore constitutes an object in the category of observables $\Obs{\Meas^0_b, \ProbZ}$. The pushforward of the initial energy distribution $\lambda_\omega$ along $H$ is a probability measure on $\mathbb{R}$ with compact support, ensuring it has finite moments of all orders. This pushforward has mean zero because the expectation of $\lambda_\omega$ was $m$, so  the conditions of \Cref{pushing} are satisfied. The theorem then guarantees that the normalized $n$-fold convolution of this energy distribution converges to the normal distribution determined by whatever the variance of  of $H_\ast(\lambda_\omega)$ was.
\end{proofE}
\section{Conclusion}
In this paper, we introduced dilated/seminorm categories as a unified framework for quantitative reasoning about convergence. We established a categorical Banach Fixed Point Theorem and applied it to prove a structural Central Limit Theorem. We demonstrated that this abstract theorem recovers the classical CLT and the Law of Large Numbers, and allows for the systemic derivation of new results such as the CLT for Observables.

Future work offers several promising directions. First, the framework naturally suggests a unification with Markov categories, leading toward a theory of dilated Markov categories that combines structural and quantitative reasoning. The link the enrichment and type theory is also worth investigating. Finally, of course, we would be interested in seeing some practical developments linked to this theory; ie.\ implementations for formal verification in probabilistic programming.
\bibliography{MyBib}

\clearpage         %
\appendix
\textbf{\huge Appendix for review only}
\input{appendix.tex}

\section{Proof details for \cref{sec:prelim}}
\printProofs[prelim]
\section{Proof details for \cref{sec:spaces}}
\printProofs[spaces]
\section{Proof details for \cref{sec:seminormspaces}}
\printProofs[seminormspaces]
\section{Proof details for \cref{sec:seminormcategories}}
\printProofs[seminormcategories]
\section{Proof details for \cref{sec:catBanach}}
\printProofs[catBanach]
\section{Proof details for \cref{sec:probabilityandconvolution}}
\printProofs[probabilityandconvolution]
\section{Proof details for \cref{sec:CLT}}
\printProofs[CLT]
\section{Proof details for \cref{sec:pushforward}}
\printProofs[pushforward]

\end{document}

%% file: appendix.tex
\section{Notation}
\label{sec:notation}
\
\begin{tabular}{l|l}
  $\ECat{C}$, $\eAC$, $\enr{\Ban}$ & Enriched categories (underlined) \\
  $\Cat{C}$, $\CC$ & Meta-variables for underlying/ordinary categories \\
  $\Ban$ & Definite categories (boldface) \\
  $\tens$ & Tensor product of enrichment base ($\VC$, $\SNorm$ or $\DNorm$) \\
  $\tensU$ & Unit of $\tens$ ($\{x\}$ in $\SNorm$, $\Vint$ in $\DNorm$) \\
  $\ptens$ & Tensor on seminorm-enriched categories \\
  $\lin{\eCC}$ & Linear subcategory of $\eCC$ \\
  $\V$ & Quantale of values for distance/seminorm \\
  $\VSp$ & Category of $\V$-spaces and nonexpansive maps \\
  $\CDS$ & Category of complete $\V$-spaces \\
  $\VAct$ & Category of $\V$-action spaces \\
  $\SNorm$ & Category of seminorm spaces \\
  $\DNorm$ & Category of dilated spaces \\
  $\DtoSNorm$ & Change of basis functor $\DNorm \to \SNorm$ \\
  $\eEBan$ & Category of homogenous Banach spaces (dilated) \\
  $\eBan$ & Category of Banach spaces (dilated) \\
  $\eCMet$ & Category of complete extended metric space (seminorm) \\
  $\ekCMet$ & Category of rescaled metric spaces (dilated) \\
  $\eFinVect$ & Category of finite dimensional vector spaces (dilated)  \\
  $\Meas_b$ & Category of bounded measurable spaces (definite) \\
  $\eObs{M, \UCat{D}}$ & Category of observables over $M$ \\
  $\UCat{C}(\1, -)$ & Functor of points valued in $\SNorm/\DNorm$ \\
  $\Cat{C}(\1, -)$ &Functor of points valued in $\Set$\\
  $\Prob$ & Functor of probability measures: finite $l$-th moment \\
  $\ProbZ$ & Probability measures functor: finite $l$-th moment, exp 0 \\
  $\PosDef$ & Functor of positive semidefinite matrices \\
  $\abs{-}$ & Forgetful functor $\FinVect$ to $\ekCMet$  \\
  $\theta^{\enr{F}}_X$ & Convolution operator for functor $\enr{F}$ \\
  $\mathcal{N}_p$ & Central limit in fibre over $p$ \\
  $\ast$ & Convolution product \\
  $\act$ & Action of $\Vint$ (dilation)\\
  $\abs{-}$ & Seminorm applied to morphisms  \\
  $\1$, $\1_{\Cat{C}}$&  Terminal object
\end{tabular}

\section{Details for~\cref{sec:prelim}}
\paragraph*{Coherence Laws of Monoidal Categories}
The coherence laws of a monoidal category are expressed by the commutative diagram below, where the components of natural transformations have been omitted for readability:
\begin{equation}
\begin{tikzcd}[row sep=small, column sep=tiny]
												&A\tens B		
												& 
\\
(A\tens 1)\tens B\arrow[ur, "\rho\tens\id"]\arrow[rr, "\alpha"]	
												&			
												&A\tens (1\tens B)\arrow[ul, swap, "\id\tens\lambda"]
\end{tikzcd}
\end{equation}
\begin{equation}
\begin{tikzcd}[row sep=small, column sep=tiny]
((A\tens B)\tens C)\tens D\arrow[d, swap, "\alpha\tens\id"]\arrow[r, "\alpha"]			
					&(A\tens B)\tens(C\tens D)\arrow[r, "\alpha"]		
					&A\tens(B\tens(C\tens D)) 		
\\
(A\tens(B\tens C))\tens D\arrow[rr, "\alpha"]			
					&							
					&A\tens ((B\tens C)\tens D)\arrow[u, swap, "\id\tens\alpha"]
\end{tikzcd}
\end{equation}
These diagrams are referred to as the \emph{identity} and \emph{pentagon} law, respectively.
\paragraph*{Coherence Laws of Enriched Categories}
The coherence laws of a~$\VC$-category~$\AC$ are expressed by the commutative diagrams below where, as before, we have omitted the components from natural transformations for readability.
\begin{equation}
\begin{tikzcd}[row sep=small, column sep=tiny]
I \tens\eAC(A,B)\arrow[r, "j_B\tens\id"]\arrow[dr, swap, "\lambda"]
		&\eAC(B,B)\tens\eAC(A,B)\arrow[d, "\comp"]
		& \eAC(A,B)\tens I \arrow[l, swap, "\id\tens j_A"]\arrow[dl, "\rho"]
\\
    		& \eAC(A,B) 
		&
\end{tikzcd}
\end{equation}
\begin{equation}
\begin{tikzcd}[column sep=small, row sep=large]
    (\eAC(C,D)\tens\eAC(B,C))\tens\eAC(A,B) 
        \arrow[r, "\alpha"] 
        \arrow[d, swap, "\comp\tens\id"]
    & \eAC(C,D) \tens (\eAC(B,C)\tens\eAC(A,B)) 
        \arrow[d, "\id\tens\comp"] 
    \\
    \eAC(B,D)\tens\eAC(A, B) 
        \arrow[d, "\comp"]
    & \eAC(C,D)\tens\eAC(A,C) 
        \arrow[dl, swap, "\comp"] 
    \\
    \eAC(A,D) & 
\end{tikzcd}
\end{equation}
We refer to these diagrams as the \emph{unit} and \emph{associative laws} of composition in~$\AC$, respectively.